\keywords{Congruence Closure, Group Theory, Ground Rewrite System, Word Problem}
\setlist[enumerate]{itemsep=0mm}
\begin{document}
\title[Congruence Closure Modulo Groups]{Congruence Closure Modulo Groups}\thanks{This
research is supported by the Austrian Science Fund (FWF) project I 5943.}

\author[D.~Kim]{Dohan Kim\lmcsorcid{0000-0003-1973-615X}}
\address{Department of Computer Science, University of Innsbruck, Innsbruck, Austria}
\email{dohan.kim@uibk.ac.at}

\begin{abstract}
\noindent This paper presents a new framework for constructing congruence closure of a finite set of ground equations over uninterpreted symbols and interpreted symbols for the group axioms. In this framework, ground equations are flattened into certain forms by introducing new constants, and a completion procedure is performed on ground flat equations. The proposed completion procedure uses equational inference rules and constructs a ground convergent rewrite system for congruence closure with such interpreted symbols. If the completion procedure terminates, then it yields a decision procedure for the word problem for a finite set of ground equations with respect to the group axioms. This paper also provides a sufficient terminating condition of the completion procedure for constructing a ground convergent rewrite system from ground flat equations containing interpreted symbols for the group axioms. In addition, this paper presents a new method for constructing congruence closure of a finite set of ground equations containing interpreted symbols for the semigroup, monoid, and the multiple disjoint sets of group axioms, respectively, using the proposed framework.
\end{abstract}

\maketitle

\section{Introduction}
Congruence closure procedures~\cite{Downey1980,Nelson1980,Kozen1977} have been widely studied for several decades, and play important roles in software and hardware verification systems~\cite{Nelson1980,Cyrluk1995,Sjoberg2015}, satisfiability modulo theories (SMT) solvers~\cite{Barrett2018, Moura2011}, etc. They can be used to determine whether another ground equation is a consequence of a given finite set of ground equations.\\
\indent A rewrite-based congruence closure procedure in the framework of ground completion was proposed by Kapur~\cite{Kapur1997}. It is based on flattening nonflat terms appearing in the input set of ground equations by introducing new constants. In~\cite{Kapur2021,Kapur2023, Tiwari2003}, the associative and commutative ($AC$) congruence algorithms were presented for congruence closure with interpreted symbols satisfying the $AC$ properties.

In~\cite{Kim2021}, Kim and Lynch presented a framework for congruence closure modulo permutation equations (i.e., flat permutative equations). In~\cite{Baader2022}, Baader and Kapur presented \emph{semantic congruence closure} for constructing congruence closure with interpreted symbols axiomatized by \emph{strongly shallow equations}. They also provided a congruence closure algorithm with \emph{extensional} symbols.
\\
\indent Meanwhile, \emph{group theory}~\cite{Hungerford1980} is fundamental in mathematics, and has a wide variety of applications in physics, biology, and computer science. The \emph{group axioms} of a binary operator $f$, a unary operator $i$, and the unit element $1$ are

\begin{longtable}{llll}
$f(f(x,y),z)\approx f(x, f(y,z))$\quad\quad\quad\quad & $f(x, 1)\approx x$ \quad\quad\quad\quad & $f(1, x)\approx x$ &\quad\quad\quad\quad\quad\\
$f(x, i(x))\approx 1$ &$f(i(x), x)\approx 1$\quad\quad\quad\quad & &
\end{longtable}

The well-known convergent rewrite system for the group axioms is as follows~\cite{Chenadec1984}:

\begin{longtable}{lll}
$i(1)\rightarrow 1$  & $f(x,1) \rightarrow x$ &$f(1,x)\rightarrow x$\\ 
$i(i(x))\rightarrow x$ & $f(x, i(x))\rightarrow 1$ & $f(i(x),x)\rightarrow 1$ \\
$f(i(x), f(x,y))\rightarrow y$ & $f(x, f(i(x),y))\rightarrow y$\quad\quad\quad & $i(f(x,y))\rightarrow f(i(y), i(x))$\\
$f(f(x,y),z)\rightarrow f(x, f(y,z))$ & &
\end{longtable}

Above, the associative symbol $f$ has a fixed arity. In this paper, \emph{associative flattening}~\cite{Dershowitz2001, Tiwari2003} w.r.t.\;each associative symbol $f$ is applied, so the arity of $f$ becomes variadic. For example, $f(f(u,v),f(w,z))$ is associatively flattened to $f(u,v,w,z)$ for an associative symbol $f$. Rewriting on associatively flat terms (cf.~\cite{Marche1996}) is also used, which does not need to introduce \emph{extension rules}~\cite{Dershowitz2001} during a completion procedure.\\
\indent This paper presents a new framework for computing congruence closure of a finite set of ground equations over uninterpreted and interpreted symbols for the group axioms $G$. If the proposed completion procedure terminates, then one can decide whether a ground equation follows from a given finite set of ground equations containing the interpreted symbols for $G$.\\
\indent The approach used in this paper is roughly illustrated as follows using a simple example. Let $E=\{f(h(a), h(a))\approx 1, i(h(a))\approx b\}$, where an associative symbol $f$, the inverse symbol $i$, and the unit $1$ are the interpreted symbols for $G$, and $h$ is an uninterpreted symbol. First, $f(h(a), h(a))\approx 1$ and $i(h(a))\approx b$ are flattened into $h(a)\approx c_1$, $f(c_1,c_1)\approx 1$, and $i(c_1)\approx b$ using a new constant $c_1$. Next, for each constant $1$, $a$, $b$, and $c_1$, terms with the inverse symbol $i$ are considered by adding $i(1)\approx 1$, $i(a)\approx c_2$, $i(c_2)\approx a$ (because $i(i(a))\approx a$), and $i(b)\approx c_1$, where $c_2$ is a new constant. The equations $f(a,c_2)\approx 1, f(c_2,a)\approx 1, f(c_1,b)\approx 1, f(b, c_1)\approx 1$ are also added by taking the equations $f(x,i(x))\approx 1$ and $f(i(x),x)\approx 1$ in the group axioms into account. The resulting set $S(E)$ of this procedure from $E=\{f(h(a), h(a))\approx 1, i(h(a))\approx b\}$ is $S(E)=\{h(a) \approx c_1, f(c_1, c_1)\approx 1, i(c_1)\approx b, i(1)\approx 1, i(a)\approx c_2, i(c_2)\approx a, i(b)\approx c_1, f(a,c_2)\approx 1, f(c_2,a)\approx 1, f(c_1,b)\approx 1, f(b,c_1)\approx 1\} \cup U(C)$, where $U(C)$ is the set of ground instantiations of $f(x, 1)\approx x$ and $f(1,x)\approx x$ in the group axioms using the set of constant symbols $C$ only. By adding certain ground flat equations entailed by the group axioms, the proposed completion procedure for constructing congruence closure modulo $G$ w.r.t.\;$E$ is only concerned with ground (flat) equations instead of taking the group axioms (containing variables) into account.\\
\indent Now, the arguments of a term headed by an associative symbol $f$ are represented by the corresponding string for ground flat equations. For example, $f(a,b,c,d)$ is represented by $f(abcd)$, where $f$ is an associative symbol. Then, the proposed completion procedure using string-based equational inference rules is proceeded. Equation $c_1 \approx b$ can be inferred from $f(c_1c_1)\approx 1$ and $f(c_1b)\approx 1$ using an equational inference rule discussed later in this paper. Roughly speaking, $f(c_1c_1b)\approx b$ follows from $f(c_1c_1)\approx 1$, and $f(c_1c_1b)\approx c_1$ follows from $f(c_1b)\approx 1$, so $c_1 \approx b$ can be inferred from $f(c_1c_1)\approx 1$ and $f(c_1b)\approx 1$. The completion terminates with $S_\infty(E)=\{h(a) \approx b, f(bb)\approx 1,  i(b)\approx b, i(1)\approx 1, i(a)\approx c_2, i(c_2)\approx a, f(ac_2)\approx 1, f(c_2a)\approx 1, c_1 \approx b\} \cup \bar{U}(C)$, where $c_1 \succ b$ and $U(C)$ is updated to $\bar{U}(C)$ by rewriting each occurrence of $c_1$ in $U(C)$ to $b$. Now, the rewrite system $S_\infty^\succ(E)$ can be obtained by orienting each equation in $S_\infty(E)$ into the rewrite rule using $\succ$ (defined in Section~\ref{sec:cc}). Then each $f(u)$ for string $u=c_ic_{i+1}\cdots c_{i+j}$ in the rewrite system is restored into $f(c_i, c_{i+1}, \ldots, c_{i+j})$. The ground rewrite system for groups $R(G)$ (defined in Section~\ref{sec:preliminaries}) combined with $S_\infty^\succ(E)$ modulo associativity can decide whether a ground equation follows from $E$ w.r.t.\;$G$. For example, $h(a)\approx i(i(b))$ follows from $E$ w.r.t.\;$G$, where $h(a)$ can be rewritten to $b$ and $i(i(b))$ can be rewritten to $b$ using the rewrite system $S_\infty^\succ(E)$.\\
\indent The key insight of the proposed approach is that only certain ground equations entailed by the group axioms are added for the completion procedure, while taking only ground flat equations into account during its entire completion procedure. This keeps the completion procedure from interacting with the (nonground) convergent rewrite system for groups directly. Now, a completion procedure for groups~\cite{Holt2005,Sims1994} using their monoid presentations is extended to the proposed approach in which the arguments of each term headed by an associativity symbol are represented by a string. (Roughly speaking, a \emph{monoid presentation} of a group is represented by string relations and generators, adding the relations of the form $xx^{-1} \approx 1$ and $x^{-1} x \approx 1$ for each $x \in X$ to the relations of a presentation of the group, where $X$ is the set of generators of the presentation of the group. See~\cite{Holt2005} for details.)\\
\indent Since the word problem for finitely presented groups is undecidable in general~\cite{Holt2005}, the word problem for a finite set of ground equations $E$ w.r.t.\;$G$ is undecidable in general too. The proposed completion procedure may not terminate and yield an infinite convergent rewrite system for congruence closure of $E$ w.r.t.\;$G$. If it terminates, then it provides a decision procedure for the (ground) word problem for $E$ w.r.t.\;$G$.\\
\indent In addition, this paper discusses a sufficient terminating condition of the proposed completion procedure by attempting to associate a finite set of ground flat equations derived from $E$ with a monoid presentation of a group. If it is a monoid presentation of a finite group, then it necessarily terminates and yields a finite ground convergent rewrite system for congruence closure of $E$ w.r.t.\;$G$, providing a decision procedure for the word problem for $E$ w.r.t.\;$G$.

Based on the proposed framework, this paper also presents a new approach to constructing a rewriting-based congruence closure of a finite set of ground equations w.r.t.\;the semigroup, monoid, and the multiple disjoint sets of group axioms, respectively. Interestingly, it shares the same proposed ground completion procedure and yields a (possibly infinite) ground convergent rewrite system for congruence closure of a finite set of ground equations w.r.t.\;the semigroup, monoid, and the multiple disjoint sets of group axioms, respectively. 

\section{Preliminaries}\label{sec:preliminaries}

The reader is assumed to have some familiarity with rewrite systems~\cite{Baader1998,Dershowitz2001}. This paper refers to~\cite{Kapur1997,Kapur2021,Baader2020,Baader2022,Tiwari2003} for the definitions and notations of congruence closure. 

Let $\Sigma$ be a finite set of function symbols of arity $\geq 1$ and $C_0$ be a finite set constant symbols (i.e.,\;function symbols of arity 0). We denote by $\Sigma_A$ the subset of $\Sigma$ containing all the associative function symbols. We also denote by $C\supseteq C_0$ a finite set of constant symbols that may include (finitely many) new constant symbols other than the (original) constant symbols in $C_0$.

\emph{Terms} are built from variables, constants (in $C$), and function symbols (in $\Sigma$).  A \emph{ground term} is a term not containing variables. We denote by $T(\Sigma, C_0)$ (resp.\;$T(\Sigma, C)$) the set of ground terms built from $\Sigma$ and $C_0$ (resp.\;$\Sigma$ and $C$). The symbols $x,y,z,\ldots$ along with their subscripts are used to  denote variables.

An \emph{equation} is an unordered pair of terms, written $s\approx t$.

Function symbols that require to satisfy additional \emph{semantic properties} (often expressed by a set of nonground axioms) are called \emph{interpreted symbols}, while function symbols that do not require such properties are called \emph{uninterpreted symbols}.

For a fixed arity binary function symbol $f\in \Sigma_A$, $f(f(x,y),z)\approx f(x, f(y,z))$ defines the theory of associativity. Alternatively, the set of all equations $A$ of the following form defines the theory of associativity for variadic terms (see~\cite{Tiwari2003}).\\

$f(x_1,\ldots,x_m, f(y_1,\ldots, y_i), z_1,\ldots, z_n) \approx f(x_1,\ldots, x_m, y_1,\ldots, y_i, z_1,\ldots, z_n)$,\\

\noindent where $f\in \Sigma_A$ and $\{m+n+1, m+n+i, i\}\subset \alpha(f)$ and $\alpha(f)=\{2,3,4,\ldots\}$. (Here, $\alpha$ denotes the arity function for $\Sigma$.) By \emph{associatively flattening} of a term, we mean the normalization of a term w.r.t.\;$A$ considered as a rewrite system in such a way that each equation $f(x_1,\ldots,x_m, f(y_1,\ldots, y_i), z_1,\ldots, z_n) \approx f(x_1,\ldots, x_m, y_1,\ldots, y_i, z_1,\ldots, z_n)$ in $A$ is oriented as a rewrite rule $f(x_1,\ldots,x_m, f(y_1,\ldots, y_i), z_1,\ldots, z_n) \rightarrow f(x_1,\ldots, x_m, y_1,\ldots, y_i, z_1,\ldots, z_n)$.  For example, $f(f(u,v),f(w,z))$ is associatively flattened to $f(u,v,w,z)$ for $f\in \Sigma_A$.

In the remainder of this paper, the \emph{associativity axioms} (\emph{semigroup axioms}) are the equations in $A$, the \emph{unit axioms} are the equations $f(x,1) \approx x$ and $f(1,x)\approx x$, the \emph{inverse axioms} are the equations $f(x,i(x))\approx 1$ and $f(i(x),x)\approx 1$, the \emph{monoid axioms} are the equations in $A\cup \{f(x,1)\approx x, f(1,x)\approx x\}$, and the \emph{group axioms} are the equations in $A\cup \{f(x,1)\approx x, f(1,x)\approx x, f(x,i(x))\approx 1, f(i(x),x)\approx 1\}$ for some $f\in \Sigma_A$. Unless otherwise stated, associative flattening w.r.t.\;each associative symbol $f\in \Sigma_A$ is always applied, and by $G$, we mean the set of group axioms.

Given a (finite) set of constant symbols $C$, we denote by $U(C):=\{f(c, 1) \approx c\,|\, c \in C\} \cup \{f(1, c) \approx c \,|\, c \in C\}$ the set of ground instantiations of $f (x, 1) \approx x$ and $f (1, x) \approx x$ in $G$ using only the elements in $C$.

The \emph{equational theory} induced by a set of equations $E$, denoted by $\approx_E$, is the smallest congruence that is closed under substitutions and contains $E$ (i.e., the smallest congruence containing all instances of the equations of $E$). Two terms $s$ and $t$ are \emph{equal modulo} $A$ (associativity), denoted by $s\xleftrightarrow{*}_{A}t$, iff their associatively flat forms are equal. Each associatively flat term represents its $A$-equivalence class.

A term $t$ is said to $A$-\emph{match} another term $s$ (and $s$ is called an $A$-\emph{instance} of $t$) iff there is a substitution $\sigma$ such that $s \xleftrightarrow{*}_A t\sigma$. Note that if both $s$ and $t$ are ground terms in $T(\Sigma, C)$, then $s$ $A$-matches $t$ iff $s \xleftrightarrow{*}_A t$.

The \emph{depth} of a term $t$ is defined inductively as $depth(t) = 0$ if $t$ is a variable or a constant and $depth(f(s_1,\ldots,s_n)) = 1 + \text{max}\{depth(s_i)\,|\,1\leq i\leq n\}$. 

The \emph{size} of a term $t$, denoted by $|t|$, is the total number of occurrences of function and variable symbols in it. 

A \emph{rewrite system} is a set of rewrite rules. A rewrite relation $\rightarrow_{R/A}$  is defined as follows: $u\rightarrow_{R/A} v$ if there are terms $s$ and $t$ such that $u\approx_A s$, $s\rightarrow_R t$, and $t\approx_A v$. For example, if $f\in \Sigma_A$ and $f(b,c) \rightarrow e \in R$, then we have $f(a,b,c,d)\rightarrow_{R/A} f(a,e,d)$ because $f(a,b,c,d)\approx_A f(a,f(f(b,c),d)) \rightarrow_R f(a, f(e,d))\approx_A f(a,e,d)$.

The transitive and reflexive closure of  $\rightarrow_{R/A}$ is denoted by $\xrightarrow{*}_{R/A}$.

We simply denote by $R/A$ the rewrite relation $\rightarrow_{R/A}$. 

We say that a term $t$ is in $R/A$-\emph{normal form} if there is no term $t^\prime$ such that $t \rightarrow_{R/A} t^\prime$. Otherwise, we say that $t$ is $R/A$-\emph{reducible}.

A rewrite relation $R/A$ is \emph{confluent} if for all terms $r$,$s$ and $t$ with $r \xrightarrow{*}_{R/A}s$ and $r \xrightarrow{*}_{R/A}t$, there are terms $u$ an $v$ such that $s \xrightarrow{*}_{R/A} u \xleftrightarrow{*}_{A} v   \xleftarrow{*}_{R/A} t$. (Recall that each $A$-equivalence class is represented by a single associatively flat term.)

A rewrite relation $R/A$ is \emph{locally confluent} if for all terms $r$, $s$ and $t$ with $r \rightarrow_{R/A}s$ and $r \rightarrow_{R/A}t$, there are terms $u$ and $v$ such that $s \xrightarrow{*}_{R/A} u \xleftrightarrow{*}_A v \xleftarrow{*}_{R/A} t$. (In this paper, $\rightarrow_{R/A}$ is used to rewrite associatively flat ground terms and $A$-matching for $\rightarrow_{R/A}$ can be done using associatively flat ground terms. Also, the check for local confluence is simpler, i.e., $R/A$ is locally confluent on associatively flat ground terms if for all associatively flat ground terms $r$, $s$ and $t$ with $r \rightarrow_{R/A}s$ and $r \rightarrow_{R/A}t$, there is an associatively flat ground term $u$ such that $s \xrightarrow{*}_{R/A} u    \xleftarrow{*}_{R/A} t$.)

A rewrite relation $R/A$ is \emph{terminating} if there is no infinite sequence $t_0 \rightarrow_{R/A}t_1 \rightarrow_{R/A} t_2 \rightarrow_{R/A}\cdots.$

A terminating rewrite relation $R/A$ is confluent iff it is locally confluent.

A rewrite system $R$ is \emph{convergent modulo A} if the rewrite relation $\rightarrow_{R/A}$ is confluent and terminating.

We denote by $\Sigma_G$ the interpreted symbols for the set of group axioms $G:=A\cup \{f(x,1)\approx x, f(1,x)\approx x, f(x, i(x))\approx 1, f(i(x),x)\approx 1\}$. Here, we have $f, i, 1\in \Sigma_G$, where $f\in \Sigma_A$, $i \in \Sigma$, and $1\in C_0$.\\
\indent The convergent rewrite system for the set of group axioms $G$, denoted $R(G)$, on associatively flat terms is given as follows:\\

\begin{tabular}{llll}
$i(1)\rightarrow 1$  & $f(x,1) \rightarrow x$ & $f(1,x)\rightarrow x$ & $i(i(x))\rightarrow x$\\ 
$f(x, i(x))\rightarrow 1$ & $f(i(x),x)\rightarrow 1$ & $i(f(x,y))\rightarrow f(i(y), i(x))$\\\\
\end{tabular}

\noindent where $f\in \Sigma_G$.
Note that the extension rule~\cite{Dershowitz2001} $f(x, i(x), y)\rightarrow y$ of $f(x, i(x))\rightarrow 1$ and the extension rule $f(i(x),x,y)\rightarrow y$ of $f(i(x),x)\rightarrow 1$ can be simplified and are not needed by $R(G)/A$. In fact, extension rules can be dealt implicitly without explicitly generating them for rewriting on associatively flat terms. This is further discussed in the next section.

\begin{lem}\label{lem:canonical1} The rewrite system $R(G)$ is convergent modulo $A$.
\end{lem}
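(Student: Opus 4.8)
The plan is to establish convergence of $R(G)$ modulo $A$ by separately proving termination and local confluence, then invoking the standard result (stated earlier) that a terminating rewrite relation is confluent iff it is locally confluent. Since $R(G)$ is a finite, ground-free (schematic) system with a well-understood structure inherited from the classical Knuth--Bendix completion of the group axioms~\cite{Chenadec1984}, the cleanest route is to lean on that known convergence result and adapt it to the associatively-flat setting. The subtlety is that we are working modulo $A$ rather than with a fixed-arity associative symbol, so I must verify that flattening does not disturb either property.

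First I would address termination of $\rightarrow_{R(G)/A}$. The plan is to exhibit a reduction order that is compatible with $A$ (i.e.\ an $A$-compatible simplification order, such as an associative-path order or an appropriate polynomial/weight interpretation) and to check that every rule in $R(G)$ strictly decreases a term in that order, while $A$-equivalent terms receive equal measure. The rules $i(1)\rightarrow 1$, $f(x,1)\rightarrow x$, $f(1,x)\rightarrow x$, $i(i(x))\rightarrow x$, $f(x,i(x))\rightarrow 1$, $f(i(x),x)\rightarrow 1$ all strictly reduce size, so a size-based component handles them once we confirm $A$-invariance of size on flat terms. The rule $i(f(x,y))\rightarrow f(i(y),i(x))$ is size-preserving but pushes $i$ inward and reverses argument order, so it needs a lexicographic/multiset refinement (for instance, counting occurrences of $i$ above $f$, or using a precedence with $i \succ f$ in a recursive path order modulo $A$). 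I would verify that this compound order is $A$-compatible on variadic flat terms, which is where associativity must be handled with care.

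Next I would prove local confluence by a critical-pair analysis, using the simpler criterion noted in the excerpt: on associatively flat ground terms it suffices to find a common reduct $u$ with $s \xrightarrow{*}_{R(G)/A} u \xleftarrow{*}_{R(G)/A} t$ whenever $r \rightarrow_{R(G)/A} s$ and $r \rightarrow_{R(G)/A} t$. The plan is to enumerate all overlaps between left-hand sides, including the \emph{extended} overlaps that arise modulo $A$ because $f$ is variadic — e.g.\ overlaps of $f(x,i(x))\rightarrow 1$ and $f(i(x),x)\rightarrow 1$ with the unit rules and with each other inside a longer flat $f$-string such as $f(i(x),x,y)$ or $f(y,x,i(x))$. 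This is exactly where the implicit handling of the extension rules $f(x,i(x),y)\rightarrow y$ and $f(i(x),x,y)\rightarrow y$, remarked on just before the lemma, becomes essential: rewriting modulo $A$ lets the inner redex $f(i(x),x)$ be contracted to $1$ within $f(i(x),x,y)$, after which $f(1,y)\rightarrow y$ applies, so no explicit extension rule is required and the corresponding $A$-critical peaks close. I would check each such peak joins, treating the unit/inverse interactions, the $i(i(x))$ interaction with $i(f(x,y))$, and the nested-inverse cases systematically.

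The main obstacle will be the local-confluence step, specifically the bookkeeping of $A$-overlaps created by the variadic operator: unlike the fixed-arity case, an inverse-cancellation redex can sit anywhere inside a flat $f$-argument string and can overlap a unit or another cancellation in several positions, so the set of extended critical pairs is larger and must be shown to join via the implicit extension-rule mechanism rather than by syntactic overlap alone. Termination, by contrast, I expect to be routine once the $A$-compatible order is fixed, since the only non-size-decreasing rule is the single $i$-distribution rule whose effect is captured by a standard precedence argument. I would therefore concentrate the effort on verifying that every $A$-critical peak closes, and conclude convergence modulo $A$ by combining termination with local confluence.
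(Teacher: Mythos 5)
The paper offers no proof of Lemma~\ref{lem:canonical1} at all: it treats the convergence of $R(G)$ modulo $A$ as a known fact, inherited from the classical Knuth--Bendix completion of the group axioms~\cite{Chenadec1984} together with the remark immediately preceding the lemma that the extension rules of $f(x,i(x))\rightarrow 1$ and $f(i(x),x)\rightarrow 1$ can be handled implicitly on associatively flat terms. Your proposal, by contrast, actually sets out to prove the lemma, by the standard decomposition into termination of $\rightarrow_{R(G)/A}$ plus joinability of all $A$-critical pairs, combined via Newman's lemma. That route is sound, and it is essentially the strategy the paper itself deploys later for the harder system $S_\infty^\succ(E)\cup gr(R(G))$: your observation that only $i(f(x,y))\rightarrow f(i(y),i(x))$ fails to be size-decreasing, and hence needs an extra lexicographic or precedence-based component, is exactly what the paper's complexity measure $(D,S,W)$ of Definition~\ref{defn:measure} and the proof of Lemma~\ref{lem:terminating} formalize (the $d$-measure exists precisely to handle that rule), and your treatment of extended overlaps such as $f(i(x),x,y)$ via the implicit-extension mechanism matches the paper's remark. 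Two caveats keep your text at the level of a plan rather than a finished proof: first, the $A$-compatible reduction order is never exhibited and verified (an RPO-style precedence $i\succ f$ does orient the inverse-distribution rule, but its compatibility with variadic flattening must be checked, not just asserted); second, the critical-pair analysis is announced but not carried out, and since $A$-unification is infinitary you should state explicitly that the extension-rule mechanism is what makes the set of $A$-critical pairs finitely enumerable --- otherwise ``enumerate all overlaps'' is not obviously a finite task. With those verifications filled in, your argument would supply a genuine proof of a statement the paper delegates to the literature, which is a useful addition rather than a divergence.
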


\indent Given a finite set $E=\{s_i \approx t_i\,|\,1\leq i \leq m\}$ of ground equations where $s_i, t_i \in T(\Sigma, C_0)$, the \emph{congruence closure} $CC(E)$~\cite{Kapur2021,Baader2020} is the smallest subset of $T(\Sigma, C_0)\times T(\Sigma, C_0)$ containing $E$ and is closed under the following rules: (i) for every $s\in  T(\Sigma, C_0)$, $s\approx s \in CC(E)$ (\emph{reflexivity}), (ii) if $s\approx t \in CC(E)$, then $t\approx s \in CC(E)$ (\emph{symmetry}), (iii) if $s\approx t\in CC(E)$ and $t\approx u \in CC(E)$, then $s\approx u \in CC(E)$ (\emph{transitivity}), and (iv) if $f\in \Sigma$ is an $n$-ary function symbol ($n>0$) and $s_1 \approx t_1,\ldots, s_n \approx t_n \in CC(E)$, then $f(s_1, \ldots, s_n) \approx f(t_1, \ldots, t_n)\in CC(E)$ (\emph{congruence}). Note that $CC(E)$ is also the equational theory induced by $E$.

\indent Given a finite set $E$ of ground equations between terms in $T(\Sigma, C_0)$, we also consider a set of nonground equations $F$ such that the equations in $F$ are also satisfied as well as the equations in $E$. Given $s,t \in T(\Sigma, C_0)$, $s\approx t$ \emph{follows from} $E$ w.r.t.\;$F$ (written $s\approx_E^F t$) if $s^\mathcal{A} = t^\mathcal{A}$ holds in all models $\mathcal{A}$ of $E\cup F$. Therefore, $\approx_E^F$ is the restriction of $\approx_{E\cup F}$ (i.e.,\;the equational theory induced by $E\cup F$) to the ground terms in $T(\Sigma, C_0)$ (or $T(\Sigma, C)$ by a slight abuse of notation). The \emph{word problem} for $E$ w.r.t.\;$F$ is the relation $\approx_E^F$.

The \emph{congruence closure of} $E$ \emph{w.r.t.}\;$F$, denoted by $CC^F(E)$, is the smallest subset of $T(\Sigma, C_0)\times T(\Sigma, C_0)$ that contains $E$ and is closed under reflexivity, symmetry, transitivity, and congruence, and the following rule:\\

\noindent $-$ if $l\approx r \in F$ and $\sigma$ is a substitution that maps the variables in $l$, $r$ to elements of $T(\Sigma, C_0)$, then $l\sigma \approx r\sigma \in CC^F(E)$.\\

\indent Birkhoff's theorem says that $\approx_F$ coincides with $\stackrel{*}{\leftrightarrow}_F$, where $\stackrel{*}{\leftrightarrow}_F$ denotes the reflexive, transitive, and symmetric closure of $\rightarrow_F$. Observe that $CC^G(E)$ is the equational theory induced by $E\cup G$, which coincides with $\approx_E^G$ by Birkhoff's theorem. We also say $CC^G(E)$ as \emph{congruence closure modulo} $G$ for $E$.

\section{Congruence Closure Modulo Groups}\label{sec:cc}
In this section, we denote by $E$ a finite set of ground equations $s\approx t$ between (ground) terms $s,t\in T(\Sigma, C_0)$, and by $G$ the set of group axioms $A\cup \{f(x,1)\approx x, f(1,x)\approx x, f(x, i(x))\approx 1, f(i(x),x)\approx 1\}$. We also denote by $W$ an infinite set of constants taken from $\{c_1,c_2, \ldots\}$, where $W$ is disjoint from $C_0$, $C_1$ a finite subset chosen from $W$, and $C:=C_0\cup C_1$.

\subsection{Ground Completion using Ground Flat Equations}
\begin{defi}\label{defn:gequations}\normalfont
By a \emph{ground (fully) flat equation}, we mean an (associatively flattened) ground equation over $T(\Sigma, C)$ in one of the following three forms:
\begin{enumerate}[(i)]
\item A \emph{C}-\emph{constant equation} is an equation of the form $c_i\approx c_j$, where $c_i$ and $c_j$ are constants in $C$.
\item A \emph{D}-\emph{flat equation} is an equation of the form $g(c_1,\ldots,c_n) \approx c$, where $c_1,\ldots,c_n, c$ are constants in $C$, $g$ is an $n$-ary function symbol with $n\geq 1$.\footnote{If $g\in \Sigma_A$, then $g(c_1,\ldots,c_n)$ is an associatively flat ground term of depth 1.}
\item An \emph{A}-\emph{flat equation} is an equation of the form $f(c_1,\ldots,c_m) \approx f(d_1,\ldots, d_n)$, where $f \in \Sigma_A$ and $c_1,\ldots, c_m, d_1,\ldots, d_n$ are constants in $C$.
\end{enumerate}
By \emph{ground (fully) flat terms}, we mean terms occurring in ground (fully) flat equations over $T(\Sigma, C)$.
\end{defi}

The proposed approach to constructing congruence closure modulo $G$ for $E$ using a ground completion procedure consists of three phases. Roughly speaking, the first phase transforms a set of ground equations $E$ into the set of ground flat equations $E'$ (cf.~\cite{Kapur1997,Tiwari2003}). The second phase adds certain ground flat equations entailed by $G$ using $E'$. Call $S(E)$ the resulting set. Finally, the last phase applies a ground completion procedure on $S(E)$.

The key advantages of using ground flat equations are as follows: (i) the ground completion procedure is simple by taking only ground flat equations into account during its entire procedure, (ii) no complex ordering is needed, and (iii) the overlaps between ground flat equations and $R(G)$ are in restricted form, which simplifies the construction of a ground convergent rewrite system for congruence closure modulo $G$ for $E$. In fact, by adding certain ground flat equations entailed by $G$, the proposed completion procedure does not interact with $R(G)$ at all and performs only on $S(E)$. In what follows, each phase is discussed in detail. We assume that multiple associative symbols are allowed (i.e.,\;$|\Sigma_A| \geq 1$), but only one of them is used for $G$.\\
 
\noindent {\bf Phase I}: Given a finite set of ground equations $E$ between terms $s,t \in T(\Sigma, C_0)$,
\begin{enumerate}
\item Perform associative flattening to each term in $E$. Call $E_1$ the output of this step, where all terms in $E_1$ are fully associatively flattened.
\item Normalize each term in $E_1$ using the rewriting by $R(G)/A$. Call $E_2$ the output of this step, where all terms in $E_2$ are in $R(G)/A$-normal form and the trivial equations (i.e.,\;the equations of the form $s\approx s$) are all removed.
\item Flatten nonflat terms in $E_2$ using new constants from $W:=\{c_1,c_2, \ldots\}$ in an iterative way so that the output of this step consists of only constant, $D$-flat, and $A$-flat equations. Here, each $A$-flat equation is not allowed to be further transformed, for example, into $D$-flat equations. In this step, if an equation with nonconstant flat terms is neither an $A$-flat equation nor a $D$-flat equation (e.g.\;$f(c_1) \approx g(c_2)$ with $f \in \Sigma_A $ and $g\notin  \Sigma_A$), then it is further transformed into $D$-flat (or $D$-flat and constant) equations using new constants from $W$. 
\end{enumerate}

The output of Phase I is denoted by $E'$, where all equations in $E'$ are constant, $D$-flat, or $A$-flat equations. Also, $C:=C_0 \cup C_1$ is a finite set of constants obtained after Phase I, where $C_1$ is a set of new constants introduced by flattening in step 3 of Phase I.

\begin{exa}\label{ex:ex1}\normalfont Let $E$ =$\{f(a,a)\approx f(h(a), f(i(h(a)), 1)), f(a, h(a))\approx b, f(i(a), b)\approx b\}$, where $f\in \Sigma_G$ is an associative symbol, $i\in\Sigma_G$ is the inverse symbol, and $1\in \Sigma_G$ is the unit for $G$.

For step 1 of Phase I, perform associative flattening to $f(h(a), f(i(h(a)), 1))$, which yields $f(h(a), i(h(a)), 1)$. Then $E_1$ is $E$ with $f(a,a)\approx f(h(a), f(i(h(a)), 1))$ being replaced by  $f(a,a)\approx f(h(a), i(h(a)), 1)$.

For step 2, term $f(h(a), i(h(a)), 1)$ is normalized to 1 by $R(G)/A$. Therefore, $E_2=\{f(a,a)\approx 1, f(a, h(a))\approx b, f(i(a), b)\approx b\}$.

Finally, for step 3,  new constants $c_1$ and $c_2$ for $h(a)$ and $i(a)$ are introduced, respectively. Now, $E'=\{h(a)\approx c_1, i(a)\approx c_2, f(a,a)\approx 1, f(a, c_1)\approx b, f(c_2, b)\approx b\}$ and $C=\{1, a, b, c_1, c_2\}$.
\end{exa}

Next, the purpose of Phase II is to add certain ground flat equations entailed by $G$ using $E'$. Since $R(G)$ contains variables, this attempts to keep the ground flat equations in $S(E)$ (output of Phase II) from interacting with $R(G)$ during a completion procedure. In the following, $f\in \Sigma_G$ is an associative symbol, $i\in \Sigma_G$ is the inverse symbol, and $1\in \Sigma_G$ is the unit in $G$.\\

\noindent {\bf Phase II}: Given $C$ and $E'$ obtained from $E$ by Phase I, copy $C$ to $C'$ and $E'$ to $E^{''}$:
\begin{enumerate}
\item For each constant $c_k\in C'$ and $c_k \neq 1$, repeat the following step:

\noindent If neither $i(c_k)\approx c_i$ nor $i(c_j)\approx c_k$ appears in $E^{''}$ for some $c_i,c_j\in C'$, then $ E':=E'\cup \{i(c_k)\approx c_m\}$ and $C:=C\cup \{c_m\}$ for a new constant $c_m$ taken from $W$.
\item Set $I(E'):=\{i(1)\approx 1\} \cup \{i(c_n)\approx c_m\,|\,i(c_m)\approx c_n\in E'\} \cup \{f(c_m, c_n)\approx 1\,|\,i(c_m)\approx c_n\in E'\}\cup \{f(c_n, c_m)\approx 1\,|\,i(c_m)\approx c_n\in E'\}$.
\item Set $S(E):= E' \cup I(E') \cup U(C)$ and return $S(E)$, where $U(C):=\{f(c,1)\approx c\,|\,c\in C\} \cup \{f(1,c)\approx c\,|\,c\in C\}$.
\end{enumerate}

Now, $S(E)$ is the output of Phase II. (Also, $C$ can be updated during Phase II because new constants can be added in step 1 of Phase II.) No new constant is added to $C$ after Phase II.

\begin{exa}\label{ex:ex2}\normalfont (Continued from Example~\ref{ex:ex1}) After Phase I, we have $C=\{1, a, b, c_1, c_2\}$ and $E'=\{h(a)\approx c_1, i(a)\approx c_2, f(a,a)\approx 1, f(a,c_1)\approx b, f(c_2,b)\approx b\}$, where $f\in \Sigma_G$.\\
\indent For step 1 of Phase II, add $i(b)\approx c_3$ and $i(c_1)\approx c_4$ to $E'$, where $c_3$ and $c_4$ are the new constants taken from $W$. We now have $C=\{1,a,b,c_1,c_2,c_3,c_4\}$ and $E'=\{h(a)\approx c_1, i(a)\approx c_2, f(a,a)\approx 1, f(a,c_1)\approx b, f(c_2,b)\approx b, i(b)\approx c_3, i(c_1)\approx c_4\}$. For step 2, we have $I(E')=\{i(1)\approx 1, i(c_2)\approx a,i(c_3)\approx b,i(c_4)\approx c_1,f(a,c_2)\approx 1, f(c_2,a)\approx 1,f(b,c_3)\approx 1, f(c_3, b)\approx 1, f(c_1, c_4)\approx 1, f(c_4, c_1)\approx 1\}$. For step 3, we have $U(C)=\{f(1,1)\approx 1, f(a,1)\approx a, f(1,a)\approx a, f(b,1)\approx b, f(1,b)\approx b, f(c_1, 1)\approx c_1, f(1,c_1)\approx c_1, f(c_2, 1)\approx c_2, f(1,c_2)\approx c_2, f(c_3, 1)\approx c_3, f(1,c_3)\approx c_3,f(c_4, 1)\approx c_4, f(1,c_4)\approx c_4\}$. Finally, we have $S(E)=E'\cup I(E') \cup U(C)$ and return $S(E)$.
\end{exa}

The proof of the following lemma is adapted from the proof of Lemma 3 in~\cite{Baader2022}.

\begin{lem}\label{lem:conservative} $S(E)$ w.r.t.\;$G$ is a conservative extension of $E$ w.r.t.\;$G$, i.e., $s_0\approx_E^G t_0$ iff $s_0\approx_{S(E)}^G t_0$ for all terms $s_0, t_0 \in T(\Sigma, C_0)$.
\end{lem}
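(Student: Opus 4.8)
The plan is to prove both implications through a semantic correspondence between models of $E\cup G$ and models of $S(E)\cup G$, exploiting the fact that every constant introduced during Phases I and II comes equipped with a defining equation that relates it to previously available symbols. Recall that $\approx_E^G$ and $\approx_{S(E)}^G$ are, by definition, the equalities holding in all models of $E\cup G$ and of $S(E)\cup G$ respectively, restricted to $T(\Sigma, C_0)$. Since $E'\subseteq S(E)$ and the fresh constants of $C_1$ do not occur in $C_0$, it suffices to relate the two model classes over the shared old signature and then restrict back to $T(\Sigma, C_0)$.

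For the implication $s_0\approx_E^G t_0 \Rightarrow s_0\approx_{S(E)}^G t_0$, I would first show that $\mathcal{B}\models S(E)\cup G$ forces $\mathcal{B}\models E\cup G$. The point is that each original equation of $E$ is recovered from $E'\cup G$: associative flattening (Step 1) preserves equality modulo $A\subseteq G$, normalization by $R(G)/A$ (Step 2) preserves equality modulo $G$ by soundness of $R(G)$ for $G$ (Lemma~\ref{lem:canonical1}), and the flattening of Step 3 only replaces each nonconstant subterm by a fresh constant together with its defining $D$-flat equation, which lies in $E'\subseteq S(E)$. Chaining these steps, every $s\approx t\in E$ is a consequence of $E'\cup G$, so $\mathcal{B}\models E$; since $s_0, t_0$ use only old symbols, $s_0^{\mathcal{B}}=t_0^{\mathcal{B}}$ follows from the hypothesis, and as $\mathcal{B}$ is arbitrary we obtain $s_0\approx_{S(E)}^G t_0$.

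The converse (conservativity) is the substantive direction. Given any $\mathcal{A}\models E\cup G$, I would expand it to a structure $\mathcal{A}'$ on the same domain, with the same interpretation of all symbols in $\Sigma$ and all old constants in $C_0$, interpreting each new constant according to its defining equation. Because the constants of $C_1$ are introduced iteratively in Step 3 of Phase I and in Step 1 of Phase II, each definition $u\approx c$ (respectively $i(c_k)\approx c_m$) refers only to constants already interpreted, so $c^{\mathcal{A}'}:=u^{\mathcal{A}'}$ is well-defined by induction on the order of introduction, with no circularity. One then checks that $\mathcal{A}'\models S(E)\cup G$ equation by equation: $\mathcal{A}'\models E'$ holds because the flattened equations of $E'$ hold in $\mathcal{A}$ after substituting the defined values while the definitions hold by construction, and $\mathcal{A}'\models I(E')\cup U(C)$ holds because each added equation is a semantic consequence of $G$ together with the definitions, e.g. $f(c_m,c_n)\approx 1$ via $f(x,i(x))\approx 1$ once $c_n^{\mathcal{A}'}=i^{\mathcal{A}}(c_m^{\mathcal{A}'})$, the equation $i(c_n)\approx c_m$ via $i(i(x))\approx x$, the equation $i(1)\approx 1$ via $R(G)$, and the $U(C)$ equations via the unit axioms. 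Then $s_0\approx_{S(E)}^G t_0$ yields $s_0^{\mathcal{A}'}=t_0^{\mathcal{A}'}$, and since $s_0, t_0\in T(\Sigma, C_0)$ their values in $\mathcal{A}'$ agree with those in $\mathcal{A}$, giving $s_0^{\mathcal{A}}=t_0^{\mathcal{A}}$; as $\mathcal{A}$ is arbitrary, $s_0\approx_E^G t_0$.

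The main obstacle I anticipate is verifying that the expansion $\mathcal{A}'$ is both well-defined and a model of every equation added in Phase II. Well-definedness hinges on the acyclicity of constant introduction, each fresh constant being defined strictly in terms of earlier ones, which must be stated carefully as an induction on the introduction order; and the satisfaction of $I(E')$ requires reading off, for each form of added equation, exactly which group axiom (inverse, double inverse, or unit) justifies it under the chosen interpretation. These checks are routine once the invariant ``every new constant has a defining equation over previously interpreted symbols that holds in $\mathcal{A}'$'' is in place, but they are precisely where the correctness of the Phase II construction is genuinely used.
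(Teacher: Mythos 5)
Your proposal is correct and follows essentially the same route as the paper's proof: one direction by expanding an arbitrary model of $E\cup G$ to the new constants via their defining equations so that $S(E)\cup G$ holds, the other by observing that every model of $S(E)\cup G$ already satisfies $E$ (the paper phrases both directions contrapositively, via a countermodel expansion and a reduct respectively, which is logically the same argument). Your explicit check that the equations in $I(E')\cup U(C)$ follow from $G$ together with the defining equations, and your induction on the order of constant introduction, are if anything slightly more detailed than the paper's write-up, which spells out only the $C$-constant and $A$-flat cases.
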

\begin{proof}
To show the \emph{if}-\emph{direction},  assume that $s_0\not\approx_E^G t_0$. Then, there is some algebra $\mathcal{A}$ over the signature $\Sigma \cup C_0$ satisfying all equations in $E\cup G$ such that $s_0^{\mathcal{A}}\neq t_0^{\mathcal{A}}$. Let $Sub(E)$ be the set of subterms of the terms occurring in $E$.
First, observe that for each new constant $c_k \in C_1$, there is some term $s_k \in Sub(E) \setminus C_0$ such that $c_k \approx_{S(E)} s_k$. 
Now, we may expand $\mathcal{A}$ to the new constants in $C_1$ by interpreting each $c_k \in C_1$ as $s_k^\mathcal{A}$ for some term $s_k \in Sub(E) \setminus C_0$. We call this expanded algebra $\mathcal{B}$. Since $s_0^{\mathcal{B}} = s_0^{\mathcal{A}} \neq t_0^{\mathcal{A}} = t_0^{\mathcal{B}}$, it is sufficient to show that $\mathcal{B}$ satisfies each equation in $S(E) \cup G$. We know that $\mathcal{A}$ satisfies $G$, so its expansion $\mathcal{B}$ obviously satisfies $G$ too. It remains to show that $\mathcal{B}$ satisfies $S(E)$. First, consider a $C$-constant equation $c_i \approx c_j \in S(E)$. Then, we have $s_i \approx s_j \in E$. Here, $s_i$ (resp.\;$s_j$) is simply $c_i$ (resp.\;$c_j$) if $c_i \in C_0$ (resp.\;$c_j \in C_0)$. Now, we have $c_i^{\mathcal{B}} = s_i^{\mathcal{A}} = s_j^{\mathcal{A}} = c_j^{\mathcal{B}}$, and thus $\mathcal{B}$ satisfies $c_i \approx c_j$.

Next, consider an $A$-flat equation $f(c_1,\ldots,c_m) \approx f(c_{m+1},\ldots,c_n) \in S(E)$, where $f\in \Sigma_A$ and $c_1,\ldots,c_n \in C$. Then we have $f(s_1,\ldots,s_m) \approx f(s_{m+1},\ldots,s_n) \in E$ for some $s_1,\ldots s_n\in Sub(E)$. Again, $s_k$ is simply $c_k$ if $s_k \in C_0$. Now, we have $f^{\mathcal{B}}(c_1^{\mathcal{B}},\ldots,c_m^{\mathcal{B}})=f^{\mathcal{A}}(s_1^{\mathcal{A}},\ldots,s_m^{\mathcal{A}})=f^{\mathcal{A}}(s_{m+1}^{\mathcal{A}},\ldots,s_n^{\mathcal{A}})=f^{\mathcal{B}}(c_{m+1}^{\mathcal{B}},\ldots,c_n^{\mathcal{B}})$, and thus $\mathcal{B}$ satisfies $f(c_1,\ldots,c_m) \approx f(c_{m+1},\ldots,c_n)$. We omit the proof for the case of a $D$-flat equation in $S(E)$ because it is similar to the previous cases.

To show the \emph{only}-\emph{if}-\emph{direction}, assume that $s_0\not\approx_{S(E)}^G t_0$. Then, there is an algebra $\mathcal{B}$ over $\Sigma \cup C$ satisfying all equations in $S(E)\cup G$ such that $s_0^{\mathcal{B}}\neq t_0^{\mathcal{B}}$. Let $\mathcal{A}$ be the reduct of $\mathcal{B}$ to $\Sigma\cup C_0$ by forgetting the interpretation of the symbols in the set of new constants $C_1$. Since $s_0^{\mathcal{A}} = s_0^{\mathcal{B}}\neq t_0^{\mathcal{B}} = t_0^{\mathcal{A}}$, we only need to show that $\mathcal{A}$ satisfies the equations in $E \cup G$. Since $\mathcal{B}$ satisfies $G$, it is easy to see that $\mathcal{A}$ satisfies $G$ too. It remains to show that $\mathcal{A}$ satisfies each equation in $E$. More specifically, it suffices to show that $\mathcal{A}$ satisfies each equation after step 1 and step 2 of Phase I from $E$ because $\mathcal{A}$ satisfies $G$. 

Let $E_2$ be the set of equations after step 1 and step 2 of Phase I from $E$. Observe also that for each term $s_k \in Sub(E_2)\setminus C_0$, there is a new constant $c_k \in C_1$. First, consider an equation of the form $a_i \approx b_j \in E_2$, where $a_i$ and $b_j$ are constants. Then, both $a_i$ and $b_j$ are the elements of $C_0$, so the reduct $\mathcal{A}$ obviously satisfies $a_i \approx b_j$, i.e.,\;$a_i^\mathcal{A}=a_i^\mathcal{B}=b_j^\mathcal{B}=b_j^\mathcal{A}$. Next, consider an equation of the form $f(s_1, \ldots, s_m)\approx f(s_{m+1},\ldots,s_n) \in E_2$, where $f\in \Sigma_A$ and $s_1,\ldots, s_n \in Sub(E_2)$. Here, if $s_k\in Sub(E_2)\setminus C_0$, then there is a new constant $c_k \in C_1$ such that $c_k \approx_{S(E)} s_k$. (This can be shown easily by a simple structural induction on term $s_k$ using Definition~\ref{defn:gequations}). Then we have $f(c_1,\ldots,c_m) \approx f(c_{m+1},\ldots, c_n) \in S(E)$ for some $c_1,\ldots, c_n \in C$, where $c_k$ is simply $s_k$ if $s_k \in C_0$. Since $f^{\mathcal{B}}(c_1^{\mathcal{B}},\ldots,c_m^{\mathcal{B}})=f^{\mathcal{B}}(c_{m+1}^{\mathcal{B}},\ldots,c_n^{\mathcal{B}})$, we have $f^{\mathcal{A}}(s_1^{\mathcal{A}},\ldots,s_m^{\mathcal{A}})=f^{\mathcal{B}}(c_1^{\mathcal{B}},\ldots,c_m^{\mathcal{B}})=f^{\mathcal{B}}(c_{m+1}^{\mathcal{B}},\ldots,c_n^{\mathcal{B}})=f^{\mathcal{A}}(s_{m+1}^{\mathcal{A}},\ldots,s_n^{\mathcal{A}})$, and thus $\mathcal{A}$ satisfies $f(s_1,\ldots,s_m) \approx f(s_{m+1},\ldots,s_n)$. We omit the proofs of the remaining cases because they are similar to the previous cases.
\end{proof}

Note that each argument of a term headed by an associative symbol $f$ is a constant in $C$ after Phase II. By a slight abuse of notation, the arguments of a term headed by an associative symbol $f$ are converted into the corresponding string. We define the length of a string $u$ over $C$. If $u\in C^*$, then the \emph{length} of $u$, denoted $|u|$, is defined as: $|\lambda| := 0$, $|c| := 1$ for each $c\in C$, and $|sc| := |s| + 1$ for $s\in C^*$ and $c \in C$, where $\lambda$ denotes the empty string.

\begin{figure}[t]
\hrule
$\newline\newline$
\centering
\begin{tabular}{l}
\noindent \LeftLabel{DEDUCE:\;\;}
\AxiomC{$S\cup \{f(u_1u_2) \approx s, \;f(u_2u_3)\approx t\}$}
\UnaryInfC{$S\cup \{f(u_1t)\approx f(su_3), f(u_1u_2) \approx s, f(u_2u_3)\approx t\}$}
\DisplayProof\\\\
if (i) $f\in \Sigma_{A}$, (ii) $f(u_1u_2) \succ s$, (iii) $f(u_2u_3)\succ t$, and (iv) neither $u_1$ nor $u_2$ nor $u_3$\\ is the empty string, i.e., $|u_1|\neq 0$, $|u_2|\neq 0$, and $|u_3|\neq 0$. \\\\

\noindent \LeftLabel{SIMPLIFY:\;\;}
\AxiomC{$S\cup\{f(u_1u_2u_3)\approx s, \;f(u_2)\approx t\}$}
\UnaryInfC{$S\cup\{\bar{f}(u_1tu_3)\approx s,  \;f(u_2)\approx t\}$}
\DisplayProof\\\\
if (i) $f\in \Sigma_{A}$, (ii) $f(u_2)\succ t$, and (iii) $s\succ t$ if $|u_1u_3|=0$.\\\\

\noindent \LeftLabel{COLLAPSE:\;\;}
\AxiomC{$S\cup\{s[u]\approx t, \;u\approx d\}$}
\UnaryInfC{$S\cup\{s[d] \approx t, \;u\approx d\}$}
\DisplayProof\\\\
if (i)\,$s[u]\succ t$, (ii)\,$u \succ d$, (iii)\,$d \in C$, and\,(iv) $u$ is not headed by a function symbol in $\Sigma_A$.\\\\

\noindent \LeftLabel{COMPOSE:\;\;}
\AxiomC{$S\cup\{t\approx c, \;c\approx d\}$}
\UnaryInfC{$S\cup\{t\approx d, \;c\approx d\}$}
\DisplayProof\\\\
if (i) $t\succ c$, (ii) $c\succ d$, and (iii) $c, d \in C$.\\\\

\noindent \LeftLabel{DELETE:\;\;}
\AxiomC{$S\cup\{s\approx s\}$}
\UnaryInfC{$S$}
\DisplayProof\\\\

Above, all the equations are assumed to be ground and flat, and $S$ is a set of ground \\flat equations. 
In the SIMPLIFY rule, if $f\in \Sigma_A$ and $u$ is a nonempty string over $C$, \\then $\bar{f}(u):= f(u)$ if $|u|\geq 2$, and $\bar{f}(u):=u$, otherwise (i.e.,\;$|u|=1$).
\end{tabular}
$\newline$
\hrule
\caption{The inference System $\mathcal{I}$}
    \label{fig:fig1}
\end{figure}

\begin{defi}\normalfont Let $f\in \Sigma_{A}$ and $u$ be a string over $C$ such that $u:=u_1u_2\cdots u_i$ and $|u|\geq 2$. Then, $f(u)$ denotes the term $f(u_1, \ldots, u_i)$. Each of $f(\lambda u)$, $f(u \lambda)$, and $f(\lambda u \lambda)$ also denotes $f(u)$, where $\lambda$ is the empty string.
\end{defi}

For example, if $f\in \Sigma_{A}$ and $f(a, b, c, d)$, we also write $f(abcd)$ for $f(a, b, c, d)$. Since $f$ is variadic for associatively flat terms with the arity being at least 2, the distinction between two notations is clear from context. In the remainder of this paper, two notations are used interchangeably, and we assume that a total precedence on $C_0$ is always given.

\begin{defi}\label{defn:ordering}\normalfont
Let $W$ be an infinite set of constants $\{c_1,c_2, \ldots\}$ such that $W$ is disjoint from $C_0$, $C_1$ a finite subset chosen from $W$, and $C:=C_0\cup C_1$. We define the order $\succ$ on ground (fully) flat terms in $T(\Sigma, C)$ as follows, assuming that a total precedence on $C_0$ is given:
\begin{enumerate}[(i)]
\item 1 is the minimal element (w.r.t.\;$\succ$),
\item $c_i \succ c_j$  if $i < j$ for all $c_i, c_j \in C_1$,
\item $c \succ c'$ if $c \in C_1$ and $c'\in C_0$,
\item $t \succ c$ if $t$ is any term headed by a function symbol $f$ in $\Sigma$ and $c$ is any constant in $C$,
\item $f(s) \succ f(t)$ if $f\in \Sigma_{A}$ and $s \succ_{L} t$, where $s$ and $t$ are strings over $C$ with $|s|,|t|\geq 2$ and $\succ_{L}$ is the \emph{length-lexicographic ordering} on $C^*$.\footnote{The length-lexicographic ordering $\succ_{L}$ on $C^*$ is defined as follows: $s=s_1\cdots s_i \succ_{L} t_1\cdots t_j =t$ if $i > j$, or they have the same length and $s_1\cdots s_i$ comes before $t_1\cdots t_j$ lexicographically using a precedence on $C$.}
\end{enumerate}

If a $D$-flat (resp.\;an $A$-flat) equation is oriented by $\succ$, then we call it the $D$-flat (resp.\;the $A$-flat) rule. Note that by (ii), $C_1$ is a totally ordered set.
\end{defi}

\indent We see that $\succ$ is well-founded on (associatively flattened) terms in $T(\Sigma, C)$. Note that $\succ$ is a strict partial order on (associatively flattened) terms in $T(\Sigma, C)$ and is a total order on $C$. Yet, it suffices for the inference system $\mathcal{I}$ in Figure~\ref{fig:fig1}. In Figure~\ref{fig:fig1}, DEDUCE is the only expansion rule, and the remaining rules in $\mathcal{I}$ are the contraction rules. (The DEDUCE and SIMPLIFY rules are adapted from finding critical pairs in string rewriting systems (or \emph{semi}-\emph{Thue systems})~\cite{KapurN85,Holt2005}.)

Now, the purpose of Phase III is to perform the ground completion procedure on $S(E)$. The final set of ground completion using Phase III provides a ground convergent rewrite system, where each equation is oriented by $\succ$.\\
 
\noindent {\bf Phase III}: Given $S(E)$ obtained from Phase II, apply the ground completion procedure using the inference rules in Figure~\ref{fig:fig1}. 

\begin{defi}\label{defn:fair}\normalfont (i) We write $S \vdash S'$ to indicate that $S'$ is obtained from $S$ by application of an inference rule in $\mathcal{I}$ (see Figure~\ref{fig:fig1}), where $S$ is a set of equations.\\
(ii) A \emph{derivation} (w.r.t.\;$\mathcal{I}$) is a sequence of states $S_0 \vdash S_1 \vdash\cdots$.\\
(iii)  A derivation $S_0=S(E)\vdash S_1\vdash \cdots$ is \emph{fair} if any rule in $\mathcal{I}$ that is continuously enabled is applied eventually (cf.~\cite{Tiwari2003}).\\
(iv) We denote by $S_{\infty}(E)=\bigcup_i\bigcap_{j\geq i} S_j$ a set of persisting equations obtained by a fair derivation (w.r.t.\;$\mathcal{I}$) starting with $S_0=S(E)$.
\end{defi}

\begin{defi}\normalfont Let $S$ be a set of equations such that each equation is orientable by $\succ$. By $S^\succ$, we denote the rewrite system corresponding to $S$, where each equation $s\approx t \in S$ with $s\succ t$ is oriented into the rewrite rule $s\rightarrow t$.
\end{defi}

It is easy to see that each equation in $S_0=S(E)$ is orientable by $\succ$. Also, each equation in $S_i$ in a derivation starting from $S_0$ is orientable by $\succ$ or a trivial equation because the conclusion of each inference rule in $\mathcal{I}$ is either orientable by $\succ$ or a trivial equation. This means that $S_\infty(E)$ is orientable by $\succ$ (cf.\;a \emph{non}-\emph{failing run}~\cite{Baader1998}).

When rewriting with associatively flat terms (cf.~\cite{Marche1996}), one does not need to introduce \emph{extension rules}~\cite{Dershowitz2001} explicitly. This means that it is not needed to introduce new extension variables for extension rules, which simplifies the inference rules significantly. The notion of overlapping has to be generalized accordingly; two rewrite rules with the same top associative symbol $f$ (at the top position) overlap if they either overlap in the standard way or overlap involving their extension rules. For example, if $f\in \Sigma_A$, then there is a critical overlap between $f(a,b)\rightarrow d$ and $f(b,c)\rightarrow e$ because $f(a,b,c)$ can be written either to $f(d,c)$ or $f(a,e)$, i.e., the critical pair obtained from $f(a,b)\rightarrow d$ and $f(b,c)\rightarrow e$ is $f(a,e)\approx f(d,c)$. Here, extension rules are dealt implicitly without explicitly generating them.

\begin{defi}\label{defn:cpair}\normalfont Let $R$ be the rewrite system obtained from a set of ground flat equations oriented by $\succ$. The \emph{critical pair} (\emph{modulo associativity}) between two rewrite rules in $R$ has one of the following forms:\\
(i) The critical pair obtained from $f(u_1u_2)\rightarrow s \in R$ and $f(u_2u_3)\rightarrow t \in R$ is $f(u_1t)\approx f(su_3)$, where $f\in \Sigma_A$, $|u_1|\neq 0, |u_2|\neq 0$, and $|u_3| \neq 0$.\\
(ii) The critical pair obtained from $f(u_1u_2u_3)\rightarrow s\in R$ and $f(u_2)\rightarrow t\in R$ is $\bar{f}(u_1tu_3)\approx s$, where $f\in \Sigma_{A}$, and $s\succ t$ if $|u_1u_3|= 0$.\\
(iii) The critical pair obtained from $s[u]\rightarrow t\in R$ and $u\rightarrow d\in R$ is $s[d] \approx t$, where $u$ is not headed by a function symbol in $\Sigma_A$ and $d \in C$.\\

By $CP_A(R)$, we denote the set of all critical pairs (modulo associativity) between the rewrite rules in $R$.
\end{defi}

\begin{lem}\label{lem:fair} If a derivation $S_0=S(E)\vdash S_1\vdash \cdots$ is fair, then $CP_A(S_\infty^\succ(E)) \subseteq \bigcup_jS_j$.
\end{lem}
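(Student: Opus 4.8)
The plan is to fix an arbitrary critical pair $p \in CP_A(S_\infty^\succ(E))$ and show that it is produced at some stage of the derivation, i.e.\ that $p \in S_j$ for some $j$. By Definition~\ref{defn:cpair}, $p$ is obtained from a pair of rewrite rules of $S_\infty^\succ(E)$. Since $S_\infty(E)=\bigcup_i\bigcap_{j\geq i}S_j$ collects exactly the persisting equations, both parent rules lie in $S_\infty^\succ(E)$ and hence persist; taking the larger of their two thresholds yields an index $N$ such that both parents occur (oriented by $\succ$) in every $S_j$ with $j\geq N$. From here I would argue by cases on the three forms of critical pair in Definition~\ref{defn:cpair}, each of which is tailored to exactly one inference rule of $\mathcal{I}$.

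For a type~(i) pair the parents are $f(u_1u_2)\to s$ and $f(u_2u_3)\to t$ with $f\in\Sigma_A$ and all of $u_1,u_2,u_3$ nonempty, which are precisely the premises and side conditions of DEDUCE, and $p=f(u_1t)\approx f(su_3)$ is precisely its conclusion. The feature I would exploit is that DEDUCE is the only \emph{expansion} rule: it adds $p$ while retaining both parents. I would argue by contradiction: suppose $p\notin S_j$ for all $j$. Then for every $j\geq N$ both parents lie in $S_j$ while $p$ does not, so DEDUCE on this overlap is applicable at every such state and its application would add the new equation $p$; that is, it is continuously enabled from $N$ on. Fairness then forces it to be applied at some $j'\geq N$, inserting $p$ into $S_{j'+1}$ and contradicting the assumption. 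Hence $p\in\bigcup_j S_j$.

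For type~(ii) and type~(iii) pairs I would instead show that no such overlap can occur \emph{between persisting rules}, so that these cases are vacuous. In the type~(ii) case the parents would be $f(u_1u_2u_3)\to s$ and $f(u_2)\to t$; since $f(u_2)\to t$ is a rule we automatically have $f(u_2)\succ t$, and together with the stated side condition ($s\succ t$ whenever $|u_1u_3|=0$) these are exactly the enabling conditions of SIMPLIFY. Were both parents persistent, SIMPLIFY on this overlap would be continuously enabled from $N$, hence applied by fairness; but its application deletes $f(u_1u_2u_3)\approx s$, contradicting the persistence of that rule. The type~(iii) case is identical with COLLAPSE in place of SIMPLIFY: the orientations $s[u]\succ t$ and $u\succ d$ of the two parents, together with $d\in C$ and $u$ not headed by a symbol in $\Sigma_A$, are exactly the COLLAPSE conditions, and its application deletes $s[u]\approx t$, again contradicting persistence. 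Thus $CP_A(S_\infty^\succ(E))$ contains no pairs of these two forms, and the inclusion holds trivially for them.

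The main obstacle is the careful bookkeeping around the phrase \emph{continuously enabled}: I must make precise (a) that the additive nature of DEDUCE lets it stay genuinely enabled until its conclusion is generated, so that fairness really does force $p$ to appear, and (b) that a contraction rule which would delete a premise cannot remain enabled on persisting premises without contradicting their persistence — this is what collapses the type~(ii) and type~(iii) cases to vacuity. A secondary point to verify is the exact alignment between the side conditions attached to each form in Definition~\ref{defn:cpair} and the enabling conditions of the matching inference rule, in particular that the orientability of each parent by $\succ$ already supplies conditions such as $f(u_2)\succ t$ and $u\succ d$, so that ``$p$ is a critical pair of $S_\infty^\succ(E)$'' and ``the matching rule of $\mathcal{I}$ is enabled on its parents'' become interchangeable.
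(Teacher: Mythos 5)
Your proof is correct, and for type~(i) pairs it is essentially the paper's argument; for types~(ii) and~(iii), however, you take a genuinely different route. The paper treats all three forms uniformly: the parents persist, so the matching inference (DEDUCE, SIMPLIFY, or COLLAPSE) is continuously enabled; by fairness it is applied at some step, and \emph{any} application deposits its conclusion --- the critical pair --- into some $S_{j+1}\subseteq \bigcup_j S_j$. You use this argument only for DEDUCE, and instead argue that SIMPLIFY- and COLLAPSE-type overlaps between \emph{persisting} rules cannot exist at all, since fairness would force the contraction to fire after the common persistence threshold $N$ and delete a parent, contradicting persistence. That vacuity claim is sound and is in substance the paper's Lemma~\ref{lem:reduced}, which the paper states separately (justified by exactly this fairness argument) rather than folding into Lemma~\ref{lem:fair}; your side-condition bookkeeping (orientation of the parents supplying $f(u_2)\succ t$, $u\succ d$, etc.) also checks out. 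The trade-off is worth noting. The paper's uniform argument is more robust to the informal fairness definition: it needs only that the enabled inference is applied at least once, whereas your vacuity argument needs the stronger reading that an inference enabled at every state from $N$ onwards is applied at some $j'\geq N$ --- under the weaker literal reading (``applied eventually'' $=$ ``applied at least once, possibly before $N$''), a parent could be deleted early, re-created by a later inference, and persist thereafter, escaping your contradiction. Since the paper itself needs that stronger (and standard, weak-fairness) reading for Lemma~\ref{lem:reduced}, this is a presentational caveat rather than a gap. In return, your route yields sharper information: $CP_A(S_\infty^\succ(E))$ in fact contains only DEDUCE-type pairs, a fact the paper recovers only later via Lemma~\ref{lem:reduced} and which slightly simplifies the peak analysis in Lemma~\ref{lem:canonical2}.
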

\begin{proof}
Suppose that a derivation $S_0=S(E)\vdash S_1\vdash \cdots$ is fair. Then any rule in $\mathcal{I}$ that is continuously enabled is applied eventually by Definition~\ref{defn:fair}(iii). We consider each critical pair of the form in Definition~\ref{defn:cpair} for the rewrite system $S_\infty^\succ(E)$. If a critical pair in $CP_A(S_\infty^\succ(E))$ was of the form in Definition~\ref{defn:cpair}(i), then the DEDUCE rule was applied. By fairness, this critical pair should be an element of $\bigcup_jS_j$. Similarly, if a critical pair in $CP_A(S_\infty^\succ(E))$ was of the form in Definition~\ref{defn:cpair}(ii), then the SIMPLIFY rule was applied. Again, by fairness, it should be an element of $\bigcup_jS_j$. Finally, if a critical pair in $CP_A(S_\infty^\succ(E))$ was of the form in Definition~\ref{defn:cpair}(iii), then the COLLAPSE rule was applied, so it should also be an element of $\bigcup_jS_j$ by fairness.
\end{proof}

\begin{exa}\label{ex:ex3}\normalfont (Continued from Example~\ref{ex:ex2}) In Example~\ref{ex:ex2} after Phase II, we have $E'=\{h(a)\approx c_1, i(a)\approx c_2, f(aa)\approx 1, f(ac_1)\approx b, f(c_2b)\approx b, i(b)\approx c_3, i(c_1)\approx c_4\}$ and $S(E)=E'\cup I(E')\cup U(C)$, where $C=\{1,a,b,c_1,c_2,c_3,c_4\}$ and $I(E')=\{i(1)\approx 1, i(c_2)\approx a, i(c_3)\approx b,i(c_4)\approx c_1,f(ac_2)\approx 1, f(c_2a)\approx 1,f(bc_3)\approx 1, f(c_3b)\approx 1, f(c_1c_4)\approx 1, f(c_4c_1)\approx 1\}$. Then,\\

\noindent $1': f(bc_4)\approx a$ (DEDUCE by $f(ac_1)\approx b$ and $f(c_1c_4)\approx 1$. Then, SIMPLIFY replacing $f(a1)$ by $a$ using $f(a1) \approx a \in U(C)$.)\\
$2': f(c_3a)\approx c_4$ (DEDUCE by $f(c_3b)\approx 1$ and $1'$. Then, SIMPLIFY replacing $f(1c_4)$ by $c_4$ using $f(1c4) \approx c_4 \in U(C)$.)\\
$3': f(c_2a)\approx f(bc_4)$ (DEDUCE by $f(c_2b)\approx b$ and $1'$.)\\
$4': f(c_2a)\approx a$ (SIMPLIFY $3'$ by $1'$. $3'$ is deleted.)\\
$5': a\approx 1$ (SIMPLIFY $4'$ by $f(c_2a)\approx 1$. $4'$ is deleted.)\\
$6': f(1c_2)\approx 1$ (COLLAPSE $f(ac_2)\approx 1$ by $5'$. $f(ac_2)\approx 1$ is deleted.)\\
$7': c_2\approx 1$ (SIMPLIFY $f(1c_2)\approx c_2$ by $6'$. $f(1c_2)\approx c_2$ is deleted.)\\
$8': f(1c_1)\approx b$ (COLLAPSE $f(ac_1)\approx b$ by $5'$. $f(ac_1)\approx b$ is deleted.)\\
$9': c_1\approx b$ (SIMPLIFY $f(1c_1)\approx c_1$ by $8'$. $f(1c_1)\approx c_1$ is deleted.)\\
$10': h(1)\approx c_1$ (COLLAPSE $h(a)\approx c_1$ by $5'$. $h(a)\approx c_1$ is deleted.)\\
$11': h(1)\approx b$ (COMPOSE $10'$ by $9'$. $10'$ is deleted.)\\
$12': f(c_31)\approx c_4$ (COLLAPSE $2'$ by $5'$. $2'$ is deleted.)\\
$13': c_3\approx c_4$ (SIMPLIFY $f(c_31)\approx c_3$ by $12'$. $f(c_31)\approx c_3$ is deleted.)\\
$\cdots$

\noindent Note that DEDUCE with the equations in $U(C)$ (e.g.,\;$f(b1)\approx b$, $b\in C$) is not necessary. After several steps using the contraction rules, we have $S_\infty(E) =\{h(1)\approx b, c_2\approx 1, a\approx 1, i(1) \approx 1, c_1 \approx b, c_3\approx c_4, i(b)\approx c_4, i(c_4)\approx b, f(bc_4)\approx 1, f(c_4b)\approx 1\}\cup \bar{U}(C)$, where $\bar{U}(C)=\{f(11)\approx 1, f(b1)\approx b, f(1b)\approx b, f(c_41)\approx c_4, f(1c_4)\approx c_4\}$. (Here, 1 is minimal w.r.t.\;$\succ$ and $c_1\succ c_2\succ c_3 \succ c_4\succ b$, where $c_1,c_2,c_3,c_4\in C_1$ and $b\in C_0$ (see Definition~\ref{defn:ordering}).)
\end{exa}

\subsection{Ground Convergent Rewrite Systems for Congruence Closure Modulo Groups}

Recall that $R(G)$ denotes the convergent rewrite system for $G$ modulo $A$. By $gr(R(G))$ we denote $gr(R(G)):=\{l\sigma \rightarrow r\sigma\,|\, l\rightarrow r \in R(G)\;\wedge\;\sigma \text{ is a ground substitution}\}$, and by $gr(A)$ we denote $gr(A):=\{e\sigma\,|\,e \in A \;\wedge\; \sigma \text{ is a ground substitution}\}$. Here, a ground substitution maps variables to (ground) terms in $T(\Sigma,C)$.

\begin{lem}\label{lem:equiv} If $S \vdash S'$, then the congruence relations $\xleftrightarrow{*}_{S\cup gr(A)}$ and $\xleftrightarrow{*}_{S'\cup gr(A)}$ on $T(\Sigma, C)$ are the same.
\end{lem}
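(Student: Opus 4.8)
The plan is to prove that a single inference step preserves the generated congruence by treating each of the five rules separately and showing, in each case, that $\xleftrightarrow{*}_{S\cup gr(A)}$ and $\xleftrightarrow{*}_{S'\cup gr(A)}$ coincide. Since every rule of $\mathcal{I}$ either removes an equation that is equationally redundant or adds one that is entailed (while keeping the others), it suffices to verify two things for each rule: that every equation present in $S'$ but not in $S$ lies in $\xleftrightarrow{*}_{S\cup gr(A)}$, and symmetrically that every equation present in $S$ but not in $S'$ lies in $\xleftrightarrow{*}_{S'\cup gr(A)}$. Because congruence closure is monotone and transitive, establishing these mutual-entailment facts for the handful of equations that actually change gives equality of the two congruences.

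First I would dispose of the trivial cases: DELETE removes an equation $s\approx s$, which is in $\xleftrightarrow{*}_{S'\cup gr(A)}$ by reflexivity, and COMPOSE and COLLAPSE only replace one side of an equation using another equation that is retained in $S'$. For COMPOSE, $t\approx d$ follows from the retained $t\approx c$ and $c\approx d$ by transitivity, and conversely $t\approx c$ follows from $t\approx d$ and $c\approx d$; for COLLAPSE, $s[d]\approx t$ follows from $s[u]\approx t$ and $u\approx d$ by a congruence step (rewriting the subterm $u$ to $d$) plus transitivity, and the reverse direction is symmetric. The key point in these two cases is that the auxiliary equation ($c\approx d$ or $u\approx d$) survives in $S'$, so entailment runs in both directions using exactly the retained equations together with reflexivity, symmetry, transitivity, and congruence — i.e.\ Birkhoff's theorem applied to $\stackrel{*}{\leftrightarrow}$.

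The two genuinely interesting cases are DEDUCE and SIMPLIFY, where the role of $gr(A)$ becomes essential. For DEDUCE, the new equation $f(u_1t)\approx f(su_3)$ must be shown to lie in $\xleftrightarrow{*}_{S\cup gr(A)}$; the idea is to start from the associatively flat term $f(u_1u_2u_3)$ and rewrite it two ways: using $f(u_1u_2)\approx s$ (after an $A$-regrouping to expose $f(u_1u_2)$ as a subterm, which is where $gr(A)$ enters) to reach $f(su_3)$, and using $f(u_2u_3)\approx t$ to reach $f(u_1t)$; joining these via $f(u_1u_2u_3)$ gives the desired $\xleftrightarrow{*}$ chain. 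The reverse direction is immediate since DEDUCE retains both premises, so $S\subseteq S'$ and the added equation is simply redundant. SIMPLIFY is handled analogously: the conclusion $\bar f(u_1tu_3)\approx s$ is obtained from $f(u_1u_2u_3)\approx s$ by rewriting the occurrence of $u_2$ to $t$ using $f(u_2)\approx t$ modulo $A$, again invoking $gr(A)$ to regroup so that $f(u_2)$ appears as a genuine subterm, and the notational collapse $\bar f$ (dropping $f$ when the resulting string has length one) is harmless because it denotes an $A$-equal term; conversely $f(u_1u_2u_3)\approx s$ is recovered from $\bar f(u_1tu_3)\approx s$ and the retained $f(u_2)\approx t$.

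The main obstacle I anticipate is handling the associative flattening bookkeeping cleanly in DEDUCE and SIMPLIFY, namely justifying that one may expose $f(u_2)$ (or $f(u_1u_2)$, $f(u_2u_3)$) as a subterm of the flat term $f(u_1u_2u_3)$ so that a single $S$-step applies. This is precisely the content of the $\rightarrow_{R/A}$ definition — rewriting modulo $A$ permits an $A$-equal regrouping before and after the step — but it must be spelled out carefully, including the degenerate cases where some $u_i$ has length one (so $\bar f$ strips the symbol) or where a side becomes the unit, to confirm that every intermediate term remains a legitimate flat term in $T(\Sigma,C)$ and that the claimed $\xleftrightarrow{*}_{S\cup gr(A)}$ and $\xleftrightarrow{*}_{S'\cup gr(A)}$ chains are well formed.
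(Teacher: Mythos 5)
Your proposal is correct and follows essentially the same route as the paper's proof: a case analysis over the five rules of $\mathcal{I}$, showing that each equation added (resp.\ deleted) by a step is derivable in the old (resp.\ new) system, with the pivot term $f(u_1u_2u_3)$ and $gr(A)$-regrouping doing the work for DEDUCE and SIMPLIFY, and plain transitivity/congruence handling COMPOSE, COLLAPSE, and DELETE. One small note: in the degenerate SIMPLIFY case $|u_1u_3|=0$ the paper does not treat $\bar f(t)=t$ as an $A$-equal regrouping (indeed $f(t)$ with a single argument is not a well-formed term) but instead derives $t\approx s$ directly by chaining $t\leftrightarrow f(u_2)\leftrightarrow s$ through the retained premises, which is what your argument amounts to once stated precisely.
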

\begin{proof} We consider each application of a rule $\tau$ for $S \vdash S'$.

If $\tau$ is DEDUCE, then we need to show that $f(u_1t)\xleftrightarrow{*}_{S\cup gr(A)} f(su_3)$, where $S'-S =\{f(u_1t)\approx f(su_3)\}$. Since $f(u_1u_2u_3)\xleftrightarrow{*}_{S\cup gr(A)} f(su_3)$ and $f(u_1u_2u_3)\xleftrightarrow{*}_{S\cup gr(A)} f(u_1t)$, we have $f(u_1t)\xleftrightarrow{*}_{S\cup gr(A)} f(su_3)$.

If $\tau$ is SIMPLIFY, then we consider two cases. First, suppose that $|u_1u_3|\neq0$. Then, we show that $f(u_1tu_3)\xleftrightarrow{*}_{S\cup gr(A)} s$, where $S'-S =\{f(u_1tu_3)\approx s\}$. Since $f(u_1u_2u_3)\xleftrightarrow{*}_{S\cup gr(A)} s$ and $f(u_1u_2u_3)\xleftrightarrow{*}_{S\cup gr(A)} f(u_1tu_3)$, we have $f(u_1tu_3)\xleftrightarrow{*}_{S\cup gr(A)} s$. Conversely, we show that $f(u_1u_2u_3)\xleftrightarrow{*}_{S'\cup gr(A)} s$, where $S-S'=\{f(u_1u_2u_3)\approx s\}$. Since $f(u_1tu_3)\xleftrightarrow{*}_{S'\cup gr(A)} s$ and $f(u_1u_2u_3)\xleftrightarrow{*}_{S'\cup gr(A)} f(u_1tu_3)$, we have $f(u_1u_2u_3)\xleftrightarrow{*}_{S'\cup gr(A)} s$. Otherwise, suppose that $|u_1u_3|=0$. Then, we need to show that $t\xleftrightarrow{*}_{S\cup gr(A)} s$, where $S'-S =\{t\approx s\}$. Since $f(u_2)\xleftrightarrow{*}_{S\cup gr(A)} s$ and $f(u_2)\xleftrightarrow{*}_{S\cup gr(A)} t$, we have $t\xleftrightarrow{*}_{S\cup gr(A)} s$. Conversely, we show that $f(u_2)\xleftrightarrow{*}_{S'\cup gr(A)} s$, where $S-S'=\{f(u_2)\approx s\}$. Since $t\xleftrightarrow{*}_{S'\cup gr(A)} s$ and $f(u_2)\xleftrightarrow{*}_{S'\cup gr(A)} t$, we have $f(u_2)\xleftrightarrow{*}_{S'\cup gr(A)} s$.

If $\tau$ is COLLAPSE, then we show that $s[d]\xleftrightarrow{*}_{S\cup gr(A)} t$, where $S'-S =\{s[d]\approx t\}$. Since $s[u]\xleftrightarrow{*}_{S\cup gr(A)} t$ and $u\xleftrightarrow{*}_{S\cup gr(A)} d$, we have $s[d]\xleftrightarrow{*}_{S\cup gr(A)} t$.  Conversely, we show that $s[u]\xleftrightarrow{*}_{S'\cup gr(A)} t$, where $S-S'=\{s[u]\approx t\}$. Since $s[d]\xleftrightarrow{*}_{S'\cup gr(A)} t$ and $u\xleftrightarrow{*}_{S'\cup gr(A)} d$, we see that $s[u]\xleftrightarrow{*}_{S'\cup gr(A)} t$.

If $\tau$ is COMPOSE, then we show that $t\xleftrightarrow{*}_{S\cup gr(A)} d$, where $S'-S =\{t\approx d\}$. Since $t\xleftrightarrow{*}_{S\cup gr(A)} c$ and $c\xleftrightarrow{*}_{S\cup gr(A)} d$, it is easy to see that $t\xleftrightarrow{*}_{S\cup gr(A)} d$. Conversely, since $t\xleftrightarrow{*}_{S'\cup gr(A)} d$ and $c\xleftrightarrow{*}_{S'\cup gr(A)} d$, we have $t\xleftrightarrow{*}_{S'\cup gr(A)} c$, where $S-S'=\{t\approx c\}$.

Finally, if $\tau$ is DELETE, then it is immediate that $s\xleftrightarrow{*}_{S'\cup gr(A)} s$.
\end{proof}

\begin{lem}\label{lem:cctheory}If $s_0, t_0 \in T(\Sigma, C_0)$, then $s_0 \approx t_0 \in CC^G(E)$ iff $s_0\approx_{S_\infty(E) \cup gr(R(G))\cup gr(A)} t_0$, where the rewrite system $gr(R(G))$ is viewed as a set of equations.
\end{lem}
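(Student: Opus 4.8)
The plan is to chain together the results already established, reducing the claim to a sequence of equivalences. Since $CC^G(E)$ is, by the remark following its definition, exactly $\approx_E^G$, the first move is to invoke Lemma~\ref{lem:conservative}: for $s_0,t_0\in T(\Sigma,C_0)$ we have $s_0\approx t_0\in CC^G(E)$ iff $s_0\approx_{S(E)}^G t_0$. By Birkhoff's theorem $\approx_{S(E)}^G$ is the restriction to $T(\Sigma,C_0)$ of $\xleftrightarrow{*}_{S(E)\cup G}$, and since $S(E)$ is ground and any conversion between terms of $T(\Sigma,C)$ stays inside $T(\Sigma,C)$, this coincides on $T(\Sigma,C)$ with $\xleftrightarrow{*}_{S(E)\cup gr(G)}$, where $gr(G)$ denotes the ground instances of $G$ over $T(\Sigma,C)$.

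Next I would trade $G$ for $R(G)$. Because $R(G)$ is a convergent rewrite system for $G$ modulo $A$ (Lemma~\ref{lem:canonical1}) that is equivalent to $G$, we have $\xleftrightarrow{*}_{R(G)\cup A}=\xleftrightarrow{*}_{G}$, and restricting to ground terms gives $\xleftrightarrow{*}_{gr(R(G))\cup gr(A)}=\xleftrightarrow{*}_{gr(G)}$ on $T(\Sigma,C)$. Using the elementary remark that two equation sets generating the same congruence still generate the same congruence after the same equations are adjoined, adjoining $S(E)$ yields $\xleftrightarrow{*}_{S(E)\cup gr(G)}=\xleftrightarrow{*}_{S(E)\cup gr(R(G))\cup gr(A)}$ on $T(\Sigma,C)$.

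The principal remaining task is to pass from $S(E)$ to the persisting set $S_\infty(E)$, i.e.\ to show $\xleftrightarrow{*}_{S(E)\cup gr(A)}=\xleftrightarrow{*}_{S_\infty(E)\cup gr(A)}$; adjoining $gr(R(G))$ to both sides then finishes the proof, again by the adjunction remark, and a final appeal to Birkhoff rewrites the conclusion as $\approx_{S_\infty(E)\cup gr(R(G))\cup gr(A)}$. By a trivial induction along the derivation, Lemma~\ref{lem:equiv} shows that $\xleftrightarrow{*}_{S_j\cup gr(A)}$ is one and the same congruence $\mathcal{R}$ for every finite $j$; hence each $S_j$, and thus $S_\infty(E)=\bigcup_i\bigcap_{j\geq i}S_j\subseteq\bigcup_j S_j$, is contained in $\mathcal{R}$, giving $\xleftrightarrow{*}_{S_\infty(E)\cup gr(A)}\subseteq\mathcal{R}$. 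For the converse I would argue by well-founded induction (using that $\succ$ is well-founded on $T(\Sigma,C)$, lifted to equations by comparing the larger side and breaking ties by the multiset of the two sides) that every equation occurring in $\bigcup_j S_j$ lies in $\xleftrightarrow{*}_{S_\infty(E)\cup gr(A)}$. Given such an $e$: if $e$ persists then $e\in S_\infty(E)$; otherwise $e$ is removed from some state by one of the contraction rules SIMPLIFY, COLLAPSE, COMPOSE, DELETE (these are the only rules that delete equations, and they delete only the rewritten equation). The converse directions already checked inside the proof of Lemma~\ref{lem:equiv} show that $e$ is provable modulo $gr(A)$ from the replacing equation together with the side-condition equation of that rule, and both of these occur in $\bigcup_j S_j$ and are strictly $\prec e$; the induction hypothesis then places them, and hence $e$, in $\xleftrightarrow{*}_{S_\infty(E)\cup gr(A)}$. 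Applying this to $e\in S(E)$ gives $\mathcal{R}\subseteq\xleftrightarrow{*}_{S_\infty(E)\cup gr(A)}$.

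The step I expect to be the main obstacle is exactly this last, limit, step: Lemma~\ref{lem:equiv} is a one-step statement, and transporting it to the infinite fair derivation needs the well-founded ``deleted equations are absorbed into smaller surviving ones'' argument, together with a careful check that each contraction rule in $\mathcal{I}$ strictly decreases the equation ordering. The delicate cases are SIMPLIFY with $|u_1u_3|=0$ and the occurrence of single-letter strings, where one must lean on the multiset tie-break and on the side condition $s\succ t$ to secure the strict decrease. Everything else---Birkhoff, the passage to ground instances, and the replacement of $G$ by $R(G)\cup A$---is routine once the congruence-preservation-under-adjunction remark is in place.
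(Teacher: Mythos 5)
Your reduction chain is exactly the paper's: Birkhoff's theorem identifies $CC^G(E)$ with $\approx_E^G$, Lemma~\ref{lem:conservative} replaces $E$ by $S(E)$, groundness lets you trade $G$ for $gr(R(G))\cup gr(A)$, and everything then hinges on the equality $\xleftrightarrow{*}_{S(E)\cup gr(A)}\;=\;\xleftrightarrow{*}_{S_\infty(E)\cup gr(A)}$, which the paper itself justifies only by citing the one-step Lemma~\ref{lem:equiv} and asserting that the conclusion follows. You are right that the passage from the one-step statement to the limit needs a real argument, and your ``easy'' inclusion (each $S_j$, hence $S_\infty(E)\subseteq\bigcup_j S_j$, is contained in the common congruence $\mathcal{R}$) is correct. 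The gap is in your converse inclusion.

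Your well-founded induction requires that whenever a contraction rule deletes an equation $e$, the replacing equation \emph{and} the side equation are both strictly below $e$. This holds for SIMPLIFY precisely because of its condition (iii) ($s\succ t$ when $|u_1u_3|=0$), holds for COMPOSE, and is vacuous for DELETE; but it fails for COLLAPSE. COLLAPSE permits an empty context $s[u]=u$ (this genuinely occurs: see steps $15$ and $16$ of Example~\ref{ex:ex9}, where two equations share the left-hand side $i_g(1_f)$), and its conditions require only $s[u]\succ t$, $u\succ d$, $d\in C$ --- nothing orders $d$ against $t$. If COLLAPSE deletes the equation with the \emph{smaller} right-hand side, i.e.\ $d\succ t$, then in your ordering the side equation $u\approx d$ is strictly \emph{larger} than the deleted $u\approx t$ (equal maximal sides $u$, tie broken by $d\succ t$), so the induction hypothesis cannot be applied to it and the inductive step collapses. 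A second defect: for COLLAPSE under a non-$\Sigma_A$ context, say $h(u)\approx t$ replaced by $h(d)\approx t$, Definition~\ref{defn:ordering} does not compare $h(u)$ with $h(d)$ at all (clause (v) covers only $\Sigma_A$-headed terms), so ``strictly $\prec e$'' is not justified there either. Note that COLLAPSE lacks exactly the kind of guard that rescues your SIMPLIFY case, and the same unrestricted same-left-hand-side collapse is also the delicate configuration in the paper's own Lemma~\ref{lem:proofcomplexity}, case (ii). The robust way to finish --- and the machinery the paper develops for this purpose --- is to argue at the level of proofs rather than free-standing equations: fix a proof of $s_0\approx t_0$ in $S_0\cup gr(A)$, transport it along the derivation using Lemma~\ref{lem:proofcomplexity} (non-increasing w.r.t.\ $\succ_\mathcal{C}$, strictly decreasing whenever an equation deleted at that step is used), and conclude from well-foundedness of $\succ_\mathcal{C}$ that the sequence of proofs stabilizes, a stabilized proof using only persisting equations and hence lying in $S_\infty(E)\cup gr(A)$. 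Your equation-level induction cannot replicate this at COLLAPSE unless the rule is further restricted, e.g.\ by demanding $t\succ d$ when the context is empty.
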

\begin{proof}
By Birkhoff's theorem, $CC^G(E)$ coincides with $\approx_E^G$. By Lemma~\ref{lem:conservative}, $s_0\approx_E^G t_0$ iff $s_0\approx_{S(E)}^G t_0$ for all terms $s_0, t_0 \in T(\Sigma, C_0)$. Also, $s_0\approx_{S(E)}^G t_0$ iff $s_0\approx_{S(E) \cup gr(R(G))\cup gr(A)} t_0$ for all terms $s_0, t_0 \in T(\Sigma, C_0)$. 

It remains to show that  $s_0\approx_{S(E) \cup gr(R(G))\cup gr(A)} t_0$ iff $s_0\approx_{S_\infty(E) \cup gr(R(G))\cup gr(A)} t_0$ for all terms $s_0, t_0 \in T(\Sigma, C_0)$, where $S_0=S(E)$. By Lemma~\ref{lem:equiv}, if $S_i \vdash S_{i+1}$, then $\xleftrightarrow{*}_{S_i\cup gr(A)}$ and $\xleftrightarrow{*}_{S_{i+1}\cup gr(A)}$ on $T(\Sigma, C_0)$ are the same (i.e., $u\xleftrightarrow{*}_{S_i\cup gr(A)}v$ iff $u\xleftrightarrow{*}_{S_{i+1}\cup gr(A)}v$ for all $u, v\in T(\Sigma, C_0)$), and thus the conclusion follows.
\end{proof}

\begin{defi}\normalfont\label{defn:cproof} Let $s=s[u]\leftrightarrow s[v]=t$ be a proof step using an equation (rule) $u\approx v \in S\cup gr(A)$. The \emph{complexity} of this proof step is defined as follows:\\
\begin{tabular}{llll}
$(\{s\}, u, t)$ \quad & if $u\approx v \in S$ and $u\succ v$ &\quad\quad\quad\quad $(\{t\}, v, s)$  \quad & if $u\approx v \in S$ and $v\succ u$\\
$(\{s,t\}, \bot, \bot)$ \quad & if $u\approx v\in S$ and $u=v$&\quad\quad\quad\quad $(\{s\},\bot,t)$ \quad & if $u\approx v \in gr(A)$
\end{tabular}
\end{defi}

In Definition~\ref{defn:cproof}, $\bot$ is a new constant symbol. We define the (strict) ordering $\gtrdot$ on $T(\Sigma, C)\cup \{\bot\}$ such that (i) $\bot$ is minimal w.r.t.\;$\gtrdot$, and (ii) $\gtrdot$ is simply $\succ$ on $T(\Sigma, C)$. (As usual, we assume that we use associative flattening for $\gtrdot$ whenever necessary.) Complexities of proof steps are compared lexicographically using the multiset extension of $\gtrdot$ in the first component, and $\gtrdot$ in the second and third component. The \emph{complexity of a proof} is defined by the multiset of the complexities of its proof steps~\cite{Tiwari2003, Bachmair1991}. The \emph{proof ordering}, denoted by $\succ_\mathcal{C}$, is the multiset extension of the ordering on the complexities of proof steps in order to compare the complexities of proofs. (The empty proof~\cite{Bachmair1991} is assumed to be minimal w.r.t.\;$\succ_\mathcal{C}$.) As $\succ$ is well-founded on associatively flattened terms in $T(\Sigma, C)$, $\gtrdot$ is well-founded on associatively flattened terms in $T(\Sigma, C)\cup \{\bot\}$. Since the lexicographic extension and the multiset extension of a well-founded ordering are also well-founded~\cite{Baader1998}, we may infer that $\succ_\mathcal{C}$ is well-founded. 

In the following, if $f\in \Sigma_{A}$ and $u,v,w \in C^*$ such that $u:=u_1\cdots u_i$, $v:=v_1\cdots v_j$, $w:=w_1\cdots w_k$, and $|uw|\neq 0$ and $|v|\geq 2$, then $f(uf(v)w)$ denotes $f(u_1,\ldots,u_i, f(v_1,\ldots,v_j), w_1,\\ \ldots, w_k)$ if $u,w \neq \lambda$, $f(u_1,\ldots,u_i, f(v_1,\ldots,v_j))$ if $u\neq \lambda$ and $w=\lambda$, and  $f(f(v_1,\ldots,v_j), w_1,\\ \ldots, w_k)$ if $u=\lambda$ and $w\neq\lambda$ . For example, if $u=c_1$, $v=c_2c_3$, and $w=\lambda$, then $f(uf(v)w)$ denotes $f(c_1,f(c_2,c_3))$. 

If $s=s[u]\leftrightarrow s[v]=t$ is a proof step using an equation $u\approx v \in S$ (resp.\;$u\approx v \in gr(A)$), then we also write $s\xleftrightarrow{u\approx v}_S t$ (resp.\;$s\xleftrightarrow{u\approx v}_{gr(A)} t$). 

\begin{lem}\label{lem:proofcomplexity} Suppose $S \vdash S'$. Then, for any two terms $s,t\in T(\Sigma, C)$, if $\rho$ is a ground proof in $S \cup gr(A)$ of an equation $s\approx t$, then there is a ground proof $\rho'$ in $S'\cup gr(A)$ of $s\approx t$ such that $\rho\succeq_\mathcal{C} \rho'$.
\end{lem}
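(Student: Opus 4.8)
The plan is to reduce the statement to a purely local one about single proof steps and then exploit that $\succ_{\mathcal{C}}$ is, by construction, the multiset extension of the complexity order on proof steps. Since $S \vdash S'$ alters only finitely many equations, I would first split the steps of $\rho$ into those using an equation that survives in $S' \cup gr(A)$ (this includes every $gr(A)$-step and every step over $S \cap S'$) and those using the equation(s) deleted by the rule $\tau$ that witnesses $S \vdash S'$. The surviving steps are literally proof steps of $S' \cup gr(A)$ and carry over with unchanged complexity, so they are copied into $\rho'$ verbatim. Hence it suffices to prove the local claim: for each rule $\tau$, any single proof step using a deleted equation can be replaced by a subproof over $S' \cup gr(A)$ each of whose steps has strictly smaller complexity than the replaced step. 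Because replacing one element of a multiset by a finite multiset of strictly smaller elements strictly decreases the multiset (and leaving it untouched keeps equality), assembling these local replacements gives a $\rho'$ with $\rho \succeq_{\mathcal{C}} \rho'$, the decrease being strict exactly when $\rho$ used a deleted equation.

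For DEDUCE and DELETE this is immediate: DEDUCE only adds an equation, so $S \subseteq S'$ and $\rho' := \rho$ works; for DELETE the deleted equation is trivial, so any step using it has the form $g[s] \leftrightarrow g[s]$ and may simply be dropped, strictly shrinking the proof. For COMPOSE (deleting $t \approx c$ with $t \succ c$, keeping $c \approx d$, adding $t \approx d$) I would replace a step $g[t] \leftrightarrow g[c]$ by $g[t] \leftrightarrow g[d] \leftrightarrow g[c]$ through $t \approx d$ and then $c \approx d$; the first new step drops in the third complexity component because $c \succ d$, and the second drops in the (multiset) first component because $t \succ c$. COLLAPSE (deleting $s[u] \approx t$, keeping $u \approx d$, adding $s[d] \approx t$) is handled by the detour $g[s[u]] \leftrightarrow g[s[d]] \leftrightarrow g[t]$ through $u \approx d$ and $s[d] \approx t$: the first step drops in the second component since $s[u] \succ u$ (here $u$ is a proper subterm, by condition (iv) and $d \in C$), and the second drops in the first component since $s[u] \succ s[d]$ and $s[u] \succ t$. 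All of these comparisons use that $\succ$ is monotone under the one-hole contexts that occur (replacing a subterm by a $\succ$-smaller one yields a $\succ$-smaller term), which I would verify from Definition~\ref{defn:ordering} on the shapes of ground flat terms at hand; this is routine but must be recorded.

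The real work is SIMPLIFY, which deletes $f(u_1 u_2 u_3) \approx s$, keeps $f(u_2) \approx t$, and adds $\bar{f}(u_1 t u_3) \approx s$. Assuming first $|u_1 u_3| \neq 0$, I would replace a step $g[f(u_1 u_2 u_3)] \leftrightarrow g[s]$ by the chain: some $gr(A)$-steps unflattening $f(u_1 u_2 u_3)$ so as to expose the subterm $f(u_2)$, one step rewriting $f(u_2)$ to $t$ via $f(u_2) \approx t$, some $gr(A)$-steps re-flattening to $\bar{f}(u_1 t u_3)$, and finally $g[\bar{f}(u_1 t u_3)] \leftrightarrow g[s]$ via the new equation. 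Each $gr(A)$-step has $\bot$ in its second component while its (multiset) first component is the same $A$-class representative as the original, so it is strictly below the original, whose second component is $f(u_1 u_2 u_3) \gtrdot \bot$. The $f(u_2) \approx t$ step has first component equal to the original and second component $f(u_2)$; since $|u_1 u_3| \neq 0$, the length-lexicographic clause (v) gives $f(u_1 u_2 u_3) \succ f(u_2)$, so it too is strictly below. For the final step I would establish $f(u_1 u_2 u_3) \succ \bar{f}(u_1 t u_3)$ (length-lexicographically, using $|t| = 1 \le |u_2|$, and in the equal-length case $|u_2| = 1$ that $u_2 \succ t$) together with $f(u_1 u_2 u_3) \succ s$, so whichever way the new equation is oriented its complexity has a strictly smaller first component. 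The degenerate case $|u_1 u_3| = 0$ deletes $f(u_2) \approx s$ and adds $t \approx s$; here I replace $g[f(u_2)] \leftrightarrow g[s]$ by $g[f(u_2)] \leftrightarrow g[t] \leftrightarrow g[s]$ through $f(u_2) \approx t$ and $t \approx s$, using $s \succ t$ (condition (iii)) for the first step and $f(u_2) \succ s$ for the second.

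The step I expect to be the main obstacle is precisely this SIMPLIFY analysis: one must keep careful track of the $gr(A)$ re-flattening steps, verifying that each is genuinely below the replaced step because $\bot$ occupies its second component while its first component coincides (as an $A$-class representative) with that of the original, and one must nail the two length-lexicographic inequalities $f(u_1 u_2 u_3) \succ f(u_2)$ and $f(u_1 u_2 u_3) \succ \bar{f}(u_1 t u_3)$, including the boundary cases $|u_2| = 1$ and $|u_1 u_3| = 0$, alongside the monotonicity of $\succ$ under contexts. Once these order-theoretic facts are secured, the multiset bookkeeping from the first paragraph closes the argument uniformly across all five rules.
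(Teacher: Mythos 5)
Your proposal is correct and takes essentially the same route as the paper's proof: DEDUCE is dismissed because nothing is deleted, DELETE steps are simply dropped, COMPOSE and COLLAPSE get the same two-step detours, and SIMPLIFY gets the same single $gr(A)$ unflattening step followed by $f(u_2)\approx t$ and the freshly added equation, with the identical component-wise complexity comparisons in each case. The only real difference is expository — you carry the argument out for steps embedded in an arbitrary context $g[\cdot]$ and make the multiset bookkeeping explicit, whereas the paper verifies the comparisons only for the deleted equation at top level and leaves the lifting implicit; note that the context-monotonicity of $\succ$ you defer as ``routine'' is precisely the point neither argument actually discharges (Definition~\ref{defn:ordering} has no clause comparing, say, $h(c_1)$ with $h(b)$ for uninterpreted $h$), so flagging it explicitly, as you do, is the more honest presentation.
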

\begin{proof} We show that each equation in $S-S'$ has a smaller proof (w.r.t.\;$\succ_\mathcal{C}$) in $S'\cup gr(A)$ by considering each case of $S \vdash S'$ except by the application of the DEDUCE rule. (Note that the DEDUCE rule does not delete any equation, so the conclusion follows immediately.)

(i) SIMPLIFY: We consider two cases. First, suppose that $|u_1u_3|\neq 0$. Then the proof $f(u_1u_2u_3)\xleftrightarrow{f(u_1u_2u_3)\approx s}_{S} s$ is transformed to the proof $f(u_1u_2u_3)\xleftrightarrow{e_1}_{gr(A)} f(u_1f(u_2)u_3)\xleftrightarrow{e_2}_{S'}f(u_1tu_3)\xleftrightarrow{e_3}_{S'} s$, where $e_1=f(u_1u_2u_3)\approx f(u_1f(u_2)u_3)$, $e_2=f(u_2)\approx t$, and $e_3=f(u_1tu_3)\approx s$.  The newer proof is smaller (w.r.t.\;$\succ_\mathcal{C})$ because $f(u_1u_2u_3)\xleftrightarrow{f(u_1u_2u_3)\approx s}_{S} s$ with complexity $(\{f(u_1u_2u_3)\}, f(u_1u_2u_3), s)$ is bigger (w.r.t.\;$\succ_\mathcal{C})$ than the proof step $f(u_1u_2u_3)\xleftrightarrow{e_1}_{gr(A)} f(u_1f(u_2)u_3)$ with complexity $(\{f(u_1u_2u_3)\},\bot, f(u_1f(u_2)u_3))$ in the second component, bigger than the proof step $f(u_1f(u_2)u_3)\xleftrightarrow{e_2}_{S'} f(u_1tu_3)$ with  complexity $(\{f(u_1f(u_2)u_3)\}, f(u_2), f(u_1tu_3))$ in the second component, and bigger than the proof step $f(u_1tu_3)\xleftrightarrow{e_3}_{S'} s$ in the first component. (Here, both $f(u_1u_2u_3)$ and $f(u_1f(u_2)u_3)$ are equal w.r.t.\;$\gtrdot$ because their associatively flat forms are the same.)

Otherwise, suppose that $|u_1u_3|=0$. Then the proof $f(u_2)\xleftrightarrow{f(u_2)\approx s}_{S} s$ is transformed to the proof $f(u_2)\xleftrightarrow{e_1}_{S'} t\xleftrightarrow{e_2}_{S'} s$, where $e_1=f(u_2)\approx t$ and $e_2=t\approx s$. Since $s\gtrdot t$ by the condition of the rule, we see that the newer proof is smaller (w.r.t.\;$\succ_\mathcal{C})$ because $f(u_2)\xleftrightarrow{f(u_2)\approx s}_{S} s$ with complexity $(\{f(u_2)\}, f(u_2), s)$ is bigger (w.r.t.\;$\succ_\mathcal{C})$ than the proof step $f(u_2)\xleftrightarrow{e_1}_{S'} t$ with complexity $(\{f(u_2)\}, f(u_2), t)$ in the last component, and bigger than the proof step $t\xleftrightarrow{e_2}_{S'} s$ with complexity $(\{s\}, s, t)$ in the first component.

(ii) COLLAPSE: The proof $s[u]\xleftrightarrow{s[u]\approx t}_{S} t$ is transformed to the proof $s[u]\xleftrightarrow{e_1}_{S'} s[d] \xleftrightarrow{e_2}_{S'} t$, where $e_1=u\approx d$ and $e_2=s[d]\approx t$. The newer proof is smaller (w.r.t.\;$\succ_\mathcal{C})$ because $s[u]\xleftrightarrow{s[u]\approx t}_{S} t$ with complexity $(\{s[u]\}, s[u], t)$ is bigger (w.r.t.\;$\succ_\mathcal{C})$ than the proof step $s[u]\xleftrightarrow{e_1}_{S'} s[d]$ in the second component and the proof step $s[d] \xleftrightarrow{e_2}_{S'} t$ in the first component.

(iii) COMPOSE: The proof $t\xleftrightarrow{t\approx c}_{S} c$ is transformed to the proof $t\xleftrightarrow{e_1}_{S'} d \xleftrightarrow{e_2}_{S'} c$, where $e_1=t\approx d$ and $e_2=d\approx c$. The newer proof is smaller (w.r.t.\;$\succ_\mathcal{C})$ because $t\xleftrightarrow{t\approx c}_{S} c$ with complexity $(\{t\}, t, c)$ is bigger (w.r.t.\;$\succ_\mathcal{C})$ than the proof step $t\xleftrightarrow{e_1}_{S'} d$ in the last component and the proof step $d \xleftrightarrow{e_2}_{S'} c$ in the first component.

(iv) DELETE: The proof $s\xleftrightarrow{s\approx s}_{S} s$ is transformed to the empty proof, where the proof $s\xleftrightarrow{s\approx s}_{S} s$ with complexity $(\{s,s\},\bot,\bot)$ is bigger (w.r.t.\;$\succ_\mathcal{C})$ than the empty proof.
\end{proof}

A \emph{peak} is a proof of the form $t_1 \leftarrow_{R/A} t \rightarrow_{R/A} t_2$ and a \emph{cliff} is a proof of the form $t_1 \leftrightarrow_{A} t \rightarrow_{R/A} t_2$ or $t_1 \rightarrow_{R/A} t \leftrightarrow_{A} t_2$ for the rewrite relation $\rightarrow_{R/A}$. Note that a cliff is simply replaced by a rewrite step by $R/A$ for associatively flat terms, i.e., $t_1\rightarrow_{R/A} t_2$, so we do not need to consider cliffs on associatively flat terms. Also, recall that $S_{\infty}(E)=\bigcup_i\bigcap_{j\geq i} S_j$ denotes a set of persisting equations obtained by a fair derivation starting from $S_0=S(E)$, and $S_{\infty}^\succ(E)$ denotes the rewrite system obtained from $S_{\infty}(E)$ by orienting each equation in $S_{\infty}(E)$ into the rewrite rule. Note that the ground rewrite system $S_{\infty}^\succ(E)$ can possibly be infinite.

\begin{lem}\label{lem:canonical2}
The ground rewrite system $S_{\infty}^\succ(E)$ is convergent modulo $A$ on $T(\Sigma, C)$.
\end{lem}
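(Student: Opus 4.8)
The plan is to establish that $S_\infty^\succ(E)$ is both terminating and confluent modulo $A$, since convergence modulo $A$ is exactly the conjunction of these two properties. Termination is the easier half: by construction every equation in $S_\infty(E)$ is orientable by $\succ$, and I would argue that $\rightarrow_{S_\infty^\succ(E)/A}$ is contained in (a rewrite relation whose one-step reductions strictly decrease) the well-founded order $\succ$ on associatively flattened terms in $T(\Sigma, C)$. Since $\succ$ is well-founded (as noted just after Definition~\ref{defn:ordering}) and is compatible with associative flattening, there can be no infinite $S_\infty^\succ(E)/A$-rewrite sequence, giving termination modulo $A$.

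The substance of the proof is confluence. Because $\rightarrow_{S_\infty^\succ(E)/A}$ is terminating, by the remark preceding Lemma~\ref{lem:canonical1} (``a terminating rewrite relation is confluent iff it is locally confluent'') it suffices to prove local confluence modulo $A$ on associatively flat ground terms. Here I would invoke the proof-ordering machinery set up in Definition~\ref{defn:cproof} and Lemmas~\ref{lem:proofcomplexity} and~\ref{lem:fair}, following the standard critical-pair/proof-reduction argument (cf.~\cite{Bachmair1991, Tiwari2003}). The key is a Critical Pair Lemma tailored to this setting: every local peak $t_1 \leftarrow_{S_\infty^\succ(E)/A} t \rightarrow_{S_\infty^\succ(E)/A} t_2$ can be rejoined, possibly through a critical pair. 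Since cliffs collapse into single $R/A$-steps for associatively flat terms (as observed before the statement), I only need to handle genuine peaks. A peak is either a \emph{variable overlap} / disjoint-position case, which is joined directly by the usual commutation argument, or it arises from an actual overlap of two rules, which is precisely a critical pair in $CP_A(S_\infty^\succ(E))$ of one of the three forms in Definition~\ref{defn:cpair}.

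The main obstacle is therefore showing that every critical pair in $CP_A(S_\infty^\succ(E))$ is joinable, i.e., has a rewrite proof. The argument runs as follows. By Lemma~\ref{lem:fair}, fairness guarantees $CP_A(S_\infty^\succ(E)) \subseteq \bigcup_j S_j$, so each critical pair $p \approx q$ was present as an equation at some stage $S_j$ of the derivation. I would then show it has a proof in $S_\infty(E)\cup gr(A)$ that is strictly smaller in the proof ordering $\succ_\mathcal{C}$ than the peak proof: applying Lemma~\ref{lem:proofcomplexity} repeatedly along the remainder of the (fair) derivation from $S_j$ onward yields a proof $\rho'$ in $S_\infty(E) \cup gr(A)$ with $\rho \succeq_\mathcal{C} \rho'$, where $\rho$ is the original proof of $p\approx q$. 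The crucial point, where the persistence $S_\infty(E)=\bigcup_i\bigcap_{j\ge i}S_j$ is used, is that any equation used in $\rho'$ that is \emph{not} persisting must itself have been removed by some contraction rule, and hence (again by Lemma~\ref{lem:proofcomplexity}) admits a strictly smaller replacement proof; iterating this until no non-persisting equation remains produces a proof entirely within $S_\infty(E)\cup gr(A)$.

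Finally I would convert ``no minimal-complexity peak exists'' into local confluence: suppose for contradiction that some peak is not joinable, and take one of minimal complexity w.r.t.\ $\succ_\mathcal{C}$. By the critical-pair analysis it reduces either to a disjoint/variable peak (directly joinable, contradiction) or to a critical pair which, by the preceding paragraph, has a proof strictly smaller than the peak; substituting this smaller proof yields a strictly smaller non-rewrite proof of $t_1 \approx t_2$, so by minimality $t_1$ and $t_2$ have a rewrite proof, again a contradiction. Hence $\rightarrow_{S_\infty^\succ(E)/A}$ is locally confluent modulo $A$, and together with termination this gives convergence modulo $A$ on $T(\Sigma, C)$. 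The delicate bookkeeping — and the place I expect the real work to concentrate — is verifying that the three critical-pair forms of Definition~\ref{defn:cpair} genuinely exhaust all overlaps once extension rules are handled implicitly (as discussed before Definition~\ref{defn:cpair}), and that the associative-flattening conventions make the rejoining proofs well-defined in each of the three cases.
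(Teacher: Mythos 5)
Your confluence argument follows essentially the same route as the paper's own proof: fairness (Lemma~\ref{lem:fair}) places every critical pair of $CP_A(S_\infty^\succ(E))$ in $\bigcup_j S_j$, the proof-reduction lemma (Lemma~\ref{lem:proofcomplexity}) together with persistence and well-foundedness of $\succ_\mathcal{C}$ turns its proof into one over $S_\infty(E)\cup gr(A)$, and a minimal-counterexample induction over $\succ_\mathcal{C}$ eliminates peaks. Your handling of the limit step (iterating proof reduction until no non-persisting equation remains) is, if anything, spelled out more explicitly than in the paper, which invokes Lemma~\ref{lem:proofcomplexity} for this in a single step. That half of the proposal is sound.

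The termination half, however, has a genuine gap. You claim that $\rightarrow_{S_\infty^\succ(E)/A}$ is contained in $\succ$, i.e., that each rewrite step strictly decreases the order of Definition~\ref{defn:ordering}. But $\succ$ is not a reduction order on associatively flat ground terms: it is not closed under contexts, and in fact it cannot compare most pairs of terms at all. Clause (v) compares only two terms with the same associative head, and clause (iv) only compares a nonconstant term against a constant; no clause compares $h(c_1)$ with $h(b)$ for an uninterpreted unary symbol $h$. So applying a constant rule $c_1\rightarrow b\in S_\infty^\succ(E)$ (such rules do arise, cf. $c_1\approx b$ in Example~\ref{ex:ex3}) inside the context $h(\cdot)$ yields a step $h(c_1)\rightarrow_{S_\infty^\succ(E)/A}h(b)$ with $h(c_1)\not\succ h(b)$, and the claimed containment fails; the same happens for $D$-flat and $A$-flat rules applied below uninterpreted symbols. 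The paper flags exactly this point (``$\succ$ is not necessarily a reduction order on associatively flat ground terms in $T(\Sigma,C)$'') and instead derives termination from the context-stable complexity measures of Definition~\ref{defn:measure}, as in the proof of Lemma~\ref{lem:terminating}: $D$-flat rules strictly decrease the size component, while $A$-flat and constant rules are handled by the weight component, the lexicographic combination being well-founded. Orientability of every persisting equation by $\succ$ does not by itself give termination of the generated rewrite relation; your proof needs to be repaired by substituting such a measure-based (or otherwise context-closed, well-founded) argument for the appeal to $\succ$.
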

\begin{proof}
We omit the proof that $S_{\infty}^\succ(E)/A$ is terminating because it can be directly inferred from the proof of Lemma~\ref{lem:terminating} below by using the ordering $\succ_{CM}$ on the complexity measures $(D,S,W)$ (see Definition~\ref{defn:measure}) or the simpler (lexicographic) complexity measures $(S,W)$ of associatively flat ground terms in $T(\Sigma, C)$. (Note that $\succ$ is not necessarily a reduction order on associatively flat ground terms in $T(\Sigma, C)$.)

Now, we show that no peak of the form $u_1\leftarrow_{S_\infty^\succ(E)/A} \cdot\rightarrow_{S_\infty^\succ(E)} u_2$ for $u_1, u_2 \in T(\Sigma, C)$ occurs in every minimal proof (w.r.t.\;$\succ_\mathcal{C}$) between $u_1$ and $u_2$ in $S_\infty^\succ(E)\cup gr(A)$. (Note that such a minimal proof exists because $\succ_\mathcal{C}$ is well-founded.)

Suppose, towards a contradiction, that such a peak exists. If a peak is a non-overlap\footnote{A simple example of a non-overlap peak is the case $f(c_1,c, d)\leftarrow_{S_\infty^\succ(E)/A} f(a,b,c,d)\rightarrow_{S_\infty^\succ(E)/A} f(a,b,c_2)$ if $f\in \Sigma_A$, $f(a,b)\rightarrow c_1 \in S_\infty^\succ(E)$, and $f(c,d)\rightarrow c_2 \in S_\infty^\succ(E)$.} $t_1 \leftarrow_{S_\infty^\succ(E)/A} t \rightarrow _{S_\infty^\succ(E)/A} t_2$, then it consists of proof steps $t_1 \xleftrightarrow{}_{S_\infty(E)} t_1'\xleftrightarrow{*}_{gr(A)} t \xleftrightarrow{*}_{gr(A)}t_2'\xleftrightarrow{}_{S_\infty(E)} t_2$, where $t_1'\succ t_1$ and $t_2'\succ t_2$. This proof can be transformed to a proof $t_1 \xrightarrow{}_{S_\infty^\succ(E)/A} t' \xleftarrow{}_{S_\infty^\succ(E)/A} t_2$, which consists of proof steps $t_1 \xleftrightarrow{*}_{gr(A)} t_1^{''}\xleftrightarrow{}_{S_\infty(E)} t' \xleftrightarrow{}_{S_\infty(E)}t_2^{''}\xleftrightarrow{*}_{gr(A)} t_2$. We may infer that the newer proof is smaller because $t$ is bigger than $t_1$ and $t_2$, i.e., $t$ is bigger than each term in the newer proof, which is a contradiction.

Now, consider a peak $s_1 \leftarrow_{S_\infty^\succ(E)/A} s \rightarrow _{S_\infty^\succ(E)/A} s_2$ that is a proper overlap. It consists of proof steps $s_1 \xleftrightarrow{}_{S_\infty(E)} s_1'\xleftrightarrow{*}_{gr(A)} s \xleftrightarrow{*}_{gr(A)}s_2'\xleftrightarrow{}_{S_\infty(E)} s_2$, where $s_1'\succ s_1$ and $s_2'\succ s_2$. Then we have $s_1 \xleftrightarrow{*}_{gr(A)}s_1^{''} \xleftrightarrow{}_{CP_A({S_\infty^\succ(E)})}s_2^{''}\xleftrightarrow{*}_{gr(A)} s_2$, where $CP_A({S_\infty^\succ(E)})$ consists of the equations created by the DEDUCE, SIMPLIFY, and COLLAPSE rule applied on the equations in $S_\infty(E)$. Since $CP_A({S_\infty^\succ(E)}) \subseteq \bigcup_j S_j$ by Lemma~\ref{lem:fair}, there is a proof $s_1\xleftrightarrow{*}_{gr(A)}s_1^{''} \xleftrightarrow{}_{S_k}s_2^{''}\xleftrightarrow{*}_{gr(A)} s_2$ for some $k\geq 0$. We name this proof as $\rho$. Observe that this ground proof $\rho$ in $S_k \cup gr(A)$ is strictly smaller than the original peak $s_1 \leftarrow_{S_\infty^\succ(E)/A} s \rightarrow _{S_\infty^\succ(E)/A} s_2$ because $s$ is bigger than each term in $\rho$. Also, there is a ground proof $\rho'$ in $S_\infty(E) \cup gr(A)$ such that $\rho\succeq_\mathcal{C}\rho'$ by Lemma~\ref{lem:proofcomplexity}. Now, $\rho'$ is strictly smaller than the original peak $s_1 \leftarrow_{S_\infty^\succ(E)/A} s \rightarrow _{S_\infty^\succ(E)/A} s_2$, a contradiction.
\end{proof}

We extend the standard definition of a \emph{proper subterm}~\cite{Dershowitz2001} for associatively flat ground terms in $T(\Sigma,C)$. We say that $f(u)$ is a \emph{proper subterm} of $f(vuw)$ if $f\in \Sigma_A$ and $u,v,w \in C^*$, where $|u|\geq 2$ and $|vw|\neq 0$. The following lemma directly follows from the fairness of a derivation involving the SIMPLIFY, COLLAPSE, and COMPOSE rule.
 
\begin{lem}\label{lem:reduced} For each rule $l \rightarrow r$ in $S_{\infty}^\succ(E)$, the right-hand side $r$ is irreducible by $S_{\infty}^\succ(E)/A$ and every proper subterm of $l$ is irreducible by $S_{\infty}^\succ(E)/A$.
\end{lem}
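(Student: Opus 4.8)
The plan is to prove the two reducedness properties by the standard fairness argument for completion, treating the right-hand-side condition and the proper-subterm condition separately and, in each case, matching a potential redex to the contraction rule of $\mathcal{I}$ that would eliminate it. The governing schema, which I would record once and reuse, is this: if $l\to r\in S_\infty^\succ(E)$ then $l\approx r$ lies in $\bigcap_{j\geq N}S_j$ for some $N$; if the rule creating a redex also persists, then both rules are present from some stage on, so the relevant contraction rule is \emph{continuously enabled}; by fairness it is eventually applied, which replaces (hence deletes) $l\approx r$ in all later $S_j$, contradicting persistence.

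For the right-hand side, suppose $r$ is $S_\infty^\succ(E)/A$-reducible. If $r$ is a constant $c$, the only left-hand sides that can match a bare constant are those of $C$-constant rules, so some $c\to d\in S_\infty^\succ(E)$ exists; then COMPOSE is enabled on $l\approx c$ and $c\approx d$ (its conditions $l\succ c$, $c\succ d$, $c,d\in C$ all hold), giving the contradiction. If $r=f(d_1\ldots d_n)$ is $f$-headed and the redex is an associative substring $f(u_2)$ reduced by some $f(u_2)\to t\in S_\infty^\succ(E)$, then SIMPLIFY applies with $r$ in the role of $f(u_1u_2u_3)$ and $l$ in the role of $s$; since the redex is a proper substring we have $|u_1u_3|\neq 0$, so condition (iii) of SIMPLIFY is vacuous and the rule fires, again contradicting persistence.

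For the proper subterms of $l$, suppose some proper subterm $p$ of $l$ is reducible. If $p$ is a constant argument reduced by $p\to d$, then COLLAPSE applies to $l=s[p]$ and $p\approx d$: the conditions $s[p]=l\succ r$, $p\succ d$, $d\in C$, and $p$ not headed by a symbol in $\Sigma_A$ all hold, so COLLAPSE rewrites and deletes $l\approx r$. If $p=f(u)$ is an associative proper subterm in the extended sense ($l=f(vuw)$ with $|u|\geq 2$, $|vw|\neq 0$) reduced by $f(u)\to t$, then SIMPLIFY applies on the larger side $l$, a contradiction. Thus every proper subterm of $l$ is irreducible; note that when $l$ is a constant this part is trivial, as it has no proper subterms.

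The delicate case, which I expect to be the main obstacle, is the one subcase of right-reducedness not covered verbatim by a single contraction rule: $r=f(d_1\ldots d_n)$ is $f$-headed and the redex is a \emph{single} constant argument $d_k$ reduced by some $d_k\to e\in S_\infty^\succ(E)$. Here SIMPLIFY does not apply, since its premise $f(u_2)\approx t$ requires $|u_2|\geq 2$ for $f(u_2)$ to be a flat term, and COLLAPSE rewrites only the larger side, whereas $d_k$ sits in the smaller side $r$. I would resolve this by showing that such a configuration cannot persist: rewriting $d_k\to e$ keeps the right-hand side below $l$ (by clause (v) of Definition~\ref{defn:ordering} and the length-lexicographic comparison $\succ_L$), so a persisting system containing both $l\to f(\ldots d_k\ldots)$ and $d_k\to e$ would contradict the convergence of $S_\infty^\succ(E)/A$ established in Lemma~\ref{lem:canonical2}; equivalently, one argues by induction along the fair derivation that contraction maintains the invariant that every constant argument of every persisting right-hand side is already in $S_\infty^\succ(E)/A$-normal form. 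Establishing this invariant cleanly is where the argument needs the most care, since it is precisely the situation that the three contraction rules do not address directly.
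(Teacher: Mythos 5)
Your treatment of the cases that the contraction rules do cover is correct, and it is exactly the argument the paper intends: the paper's entire proof is the remark that the lemma ``directly follows from the fairness of a derivation involving the SIMPLIFY, COLLAPSE, and COMPOSE rule.'' You actually go further than the paper by isolating the one configuration this remark does not cover: a persisting rule $l\to f(d_1\cdots d_n)$ together with a persisting constant rule $d_k\to e$. Your diagnosis of that configuration is accurate --- SIMPLIFY needs $|u_2|\geq 2$, COLLAPSE (condition (i)) only rewrites the $\succ$-larger side, and COMPOSE needs the smaller side to be a bare constant, so no contraction rule of $\mathcal{I}$ is enabled on this pair. (A small omission elsewhere: for the sub-case where the reducing rule's left-hand side is all of $r$, i.e.\ $f(u_2)=r$ with $|u_1u_3|=0$, SIMPLIFY still fires because condition (iii) holds by $l\succ r\succ t$.)

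The genuine gap is in how you close that delicate case; both of your suggestions fail. First, persistence of such a pair does not contradict Lemma~\ref{lem:canonical2}: convergence modulo $A$ does not imply that right-hand sides are irreducible. The two-rule system $\{f(a_1a_2)\to f(a_3a_4),\, a_3\to a_5\}$ with $a_1\succ a_2\succ a_3\succ a_4\succ a_5$ is terminating and confluent modulo $A$ (the rules have no overlap), yet its first right-hand side is reducible, so no contradiction with convergence is available. Second, the invariant you propose cannot be maintained by induction along the derivation: constant rules are created during the run (e.g.\ by SIMPLIFY with $|u_1u_3|=0$), and the instant $d_k\to e$ first appears, every already-present $A$-flat rule with $d_k$ in its right-hand side violates the invariant, while no rule of $\mathcal{I}$ can rewrite inside the smaller side of an $A$-flat equation. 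That this cannot be repaired within your framework is shown by the semigroup instantiation of Section 4.1, which uses the same inference system and the same fairness notion: there the two-rule set above, taken as $E=S(E)$, is literally a persistent limit of a fair derivation (no inference of $\mathcal{I}$ is enabled on it at all), so the right-hand-side claim is false without group-specific input. Any correct proof of your delicate case must therefore use what your proposal never touches: the expansion rule DEDUCE together with the inverse and unit equations installed in Phase II. Concretely, deducing the offending rule $l\to f(\ldots d_k\ldots)$ against the rules $f(u_mc)\to 1$ and $f(cu_m)\to 1$ turns it into a $D$-flat rule whose left-hand side COLLAPSE keeps reduced; deducing back and simplifying with the unit rules of $U(C)$ yields a rule $l\to r'$ with $r\succ r'$, and then SIMPLIFY (condition (iii) holds since $r\succ r'$) deletes $l\to r$, contradicting persistence. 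Without an argument of this shape, the right-hand-side half of the lemma remains unproven in your write-up.
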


In the following, we define a  complexity measure of an associatively flat ground term so that the measures are compared for each reduction step by $(S_{\infty}^\succ(E) \cup gr(R(G)))/A$. If the associated ordering on these measures is well-founded and each reduction step by $(S_{\infty}^\succ(E) \cup gr(R(G)))/A$ strictly reduces the measure, then $(S_{\infty}^\succ(E) \cup gr(R(G)))/A$ is terminating. First, the size of an associatively flat ground term alone is not a good measure. For example, by applying the rule  $i(f(x_1,x_2)) \rightarrow f(i(x_2), i(x_1)) \in R(G)$, the associatively flat ground term $i(f(a,b))$ with size 4 is rewritten to the associatively flat ground term $f(i(b), i(a))$ with size 5. Also, a reduction step by an $A$-flat rule in $S_{\infty}^\succ(E)$ may preserve the size of a given associatively flat ground term. 

\begin{defi}\label{defn:measure}\normalfont Let $t$ be an associatively flat ground term in $T(\Sigma,C)$.

\noindent (i) The $d$-\emph{measure} of $t$ is the multiset of all pairs $(d, n)$, where $d$ is the depth of each term headed by an occurrence of the inverse symbol $i \in \Sigma_G$ in $t$, and $n$ is the arity of $f\in \Sigma_G$ occurring right below $i$. If the symbol occurring right below $i$ is not $f\in \Sigma_G$, then $n$ in $(d,n)$ is simply 0. If there is no occurrence of $i \in \Sigma_G$ in $t$, then the corresponding multiset is simply empty.

\noindent (ii) The $s$-\emph{measure} of $t$ is the size of $t$.

\noindent (iii) Given a total precedence on the constant symbols in $C$, we define a weight function $w:C \rightarrow \mathbb{N}$ in such a way that for all $c_1, c_2 \in C$, if $c_1 \succ c_2$, then $w(c_1) > w(c_2)$, where $>$ is the usual order on natural numbers. The $w$-\emph{measure} of $t$ is the sum of the weights of all constants occurring in $t$.

\noindent (iv) The \emph{complexity measure} of an associatively flat ground term is the triple $(D, S, W)$, where $D$ is the $d$-measure of it, $S$ is the $s$-measure of it, and $W$ is the $w$-measure of it. The associated ordering of the complexity measures on associatively flat ground terms, denoted by $\succ_{CM}$, is the lexicographic ordering using the lexicographic and multiset extension of $>$ for the first component, and $>$ for the second and third component, where $>$ is the usual order on natural numbers. (Since the lexicographic extension and the multiple extension of a well-founded order is well-founded, $\succ_{CM}$ on associatively flat ground terms is well-founded.)
\end{defi}

\begin{exa}\label{ex:measure}\normalfont (i) Consider $i(h(i(f(a,b,c))))$ for $i,f \in \Sigma_G$. The $d$-measure of it is the multiset $\{(2,3),(4,0)\}$ because the depth of the term headed by $i$ in $i(f(a,b,c))$ is 2 and the arity of $f$ is 3. Meanwhile, the depth of the term headed by the outermost occurrence of $i$ is 4, but the symbol occurring right below it is not $f\in \Sigma_G$, so the corresponding pair is $(4,0)$. Now, consider an $gr(R(G))/A$-reduction step by a ground instance of the rule $i(f(x_1,x_2)) \rightarrow f(i(x_2), i(x_1)) \in R(G)$. For example, $i(h(i(f(a,b,c))))$ is rewritten to $i(h(i(f(b,c)),i(a)))$ by this reduction step. We see that the $d$-measure of $i(h(i(f(b,c)),i(a)))$ is $\{(1,0), (2,2), (4,0)\}$. Note that the $d$-measure $\{(1,0), (2,2), (4,0)\}$ of $i(h(i(f(b,c)),i(a)))$ is smaller than the $d$-measure $\{(2,3),(4,0)\}$ of $i(h(i(f(a,b,c))))$. (Recall that the reduction steps by $gr(R(G))/A$ are always done on associatively flat ground terms.)

\noindent (ii)  Let $C =\{a,b,c_1,c_2\}$ with the precedence $c_1\succ c_2 \succ a \succ b$. Then we may assign the weight of each symbol in such a way that $w(c_1)=4$ , $w(c_2)=3$, $w(a)=2$, and $w(b)=1$. Now, consider an $S_{\infty}^\succ(E)/A$-reduction step by the $A$-flat rule $f(c_1, a)\rightarrow f(c_2,b)$. For example, $h(f(c_1, a))$ is rewritten to $h(f(c_2, b))$ by this reduction step. The $w$-measure of $h(f(c_2, b))$ is smaller than the $w$-measure of $h(f(c_1, a))$ because $w(f(c_1, a)) = 6$ and  $w(f(c_2,b)) = 4$.
\end{exa}

\begin{lem}\label{lem:terminating}
The ground rewrite relation $(S_{\infty}^\succ(E) \cup gr(R(G)))/A$ is terminating on $T(\Sigma, C)$.
\end{lem}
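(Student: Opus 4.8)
The plan is to prove termination by a single well-founded order that strictly decreases at every $(S_\infty^\succ(E)\cup gr(R(G)))/A$-step. The natural vehicle is the complexity measure $(D,S,W)$ of Definition~\ref{defn:measure}, compared by $\succ_{CM}$, which is already known to be well-founded. Since every rewrite is performed on associatively flat representatives and each of $D$, $S$, $W$ is read off that representative, the measure is constant on $A$-equivalence classes; so I would first record this $A$-invariance and thereby reduce the lemma to showing that each one-step rewrite $u\rightarrow v$ satisfies $(D_u,S_u,W_u)\succ_{CM}(D_v,S_v,W_v)$. As $\succ_{CM}$ is lexicographic with $D$ most significant, then $S$, then $W$, the argument is organized by which component a rule decreases, always checking that the more significant components do not increase.

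For $gr(R(G))$, the rules $i(1)\rightarrow 1$, $i(i(x))\rightarrow x$, $f(x,i(x))\rightarrow 1$, and $f(i(x),x)\rightarrow 1$ remove occurrences of $i$, hence delete pairs from the $d$-measure and strictly decrease $D$; the unit rules $f(x,1)\rightarrow x$ and $f(1,x)\rightarrow x$ create no $i$ and shrink the term, so they leave $D$ unchanged and decrease $S$. The central case is the inverse-distribution rule $i(f(x,y))\rightarrow f(i(y),i(x))$: here the pair $(d,n)$ of the head $i$ over an $f$ of arity $n\ge 2$ is replaced by the pairs of $i(y)$ and $i(x)$, each of which sits over an $f$ of strictly smaller arity or over a non-$f$ symbol, hence is strictly smaller than $(d,n)$ in the lexicographic order on pairs, while every pair from an $i$ strictly inside $x$ or $y$ is unaffected, because the depth of an $i$-headed subterm depends only on the structure below that $i$. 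This is the behaviour of Example~\ref{ex:measure}(i).

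For $S_\infty^\succ(E)$, none of whose rules introduces a new $i$, the component $D$ never increases. A $C$-constant rule $c_i\rightarrow c_j$ leaves $D$ and $S$ fixed and, since $c_i\succ c_j$ forces $w(c_i)>w(c_j)$, strictly decreases $W$; a $D$-flat rule $g(c_1,\dots,c_n)\rightarrow c$ strictly decreases $S$, and by Lemma~\ref{lem:reduced} the relevant subterms are already reduced so that no hidden growth occurs; and the length-shrinking $A$-flat situations likewise decrease $S$.

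Finally, I expect two places to carry the real difficulty. The first is the inverse-distribution rule under context closure modulo $A$: applying it to a flat $i(f(c_1,\dots,c_n))$ with $n\ge 3$ requires $A$-matching, which splits the flat argument into two nonempty groups, and since the contractum can be strictly \emph{deeper} than the redex, the pairs of inverse symbols lying above the redex may shift; the crux is to show that the replacement of the redex pair by strictly smaller pairs still forces a net multiset decrease of $D$ in every surrounding context. The second, and in my view the harder, is the class of \emph{length-preserving} $A$-flat rules $f(u)\rightarrow f(v)$ with $|u|=|v|$ and $u\succ_L v$ (oriented by the length-lexicographic order of Definition~\ref{defn:ordering}): such a step leaves $S$ unchanged, and when $u$ and $v$ are permutations of one another it leaves the weight sum $W$ unchanged as well, so neither $S$ nor a plain sum-of-weights $W$ can witness the decrease. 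Handling these rules appears to require feeding the length-lexicographic comparison $\succ_L$ itself into the measure in a context-compatible way, which is exactly where $\succ$ fails to be a reduction order; making that ingredient monotonic modulo $A$ is the main obstacle, after which well-foundedness of the resulting order closes the argument.
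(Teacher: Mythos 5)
Your proposal is not a different proof but the paper's own proof, stopped one step earlier: the paper uses exactly the complexity measure $(D,S,W)$ of Definition~\ref{defn:measure} with $\succ_{CM}$ and runs your case analysis (it files the $i$-erasing rules of $R(G)$ under a decrease of $S$ rather than $D$, an immaterial difference). The substantive point is that you leave your two ``real difficulties'' unresolved, so as written this is a sketch with two holes, not a proof: you never establish the multiset decrease of $D$ for $i(f(x_1,x_2))\rightarrow f(i(x_2),i(x_1))$ applied under an arbitrary context, and you never produce a component that strictly decreases under length-preserving $A$-flat rules.

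You should know, however, that both worries are legitimate: these are precisely the two places where the paper's proof consists of bare assertions, and neither assertion holds as stated. For the inverse-distribution rule with a flat argument of length $n\geq 3$, the contractum is strictly deeper than the redex, so an inverse symbol above the redex can have its pair $(d,m)$ replaced by $(d+1,m)$, while the deleted pairs are $(d,m)$ itself and the redex pair, whose depth is strictly below $d$; hence nothing dominates $(d+1,m)$ and the multiset extension does not decrease. Concretely, $i(h(i(f(a,b,c))))\rightarrow i(h(f(i(f(b,c)),i(a))))$ sends the $d$-measure $\{(2,3),(4,0)\}$ to $\{(1,0),(2,2),(5,0)\}$, which is not a $\succ_{CM}$-decrease (and the $s$-measure grows from $7$ to $9$); the paper's Example~\ref{ex:measure}(i) conceals this only because its displayed contractum drops the outer $f$. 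For the $A$-flat case, take $E=\{g(a,b)\approx g(b,a)\}$ with $g\in\Sigma_A$, $g\neq f$, and $a\succ b$: the rule $g(ab)\rightarrow g(ba)$ persists in $S_\infty^\succ(E)$ (nothing in $S(E)$ overlaps with it), and a $g(ab)\rightarrow_{S_\infty^\succ(E)/A} g(ba)$ step leaves $D$, $S$, and any weight sum unchanged, so $(D,S,W)$ does not decrease at all; for non-permutative length-preserving rules the $w$-measure can even increase, e.g., $f(c_1c_3c_3)\rightarrow f(c_2c_1c_1)$ with $w(c_1)=3$, $w(c_2)=2$, $w(c_3)=1$. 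Your proposed repair for the second defect --- building the length-lexicographic comparison into the measure, say as the multiset of maximal $f$-argument strings compared by $\succ_L$, which is compatible with concatenation and well-founded --- is exactly what is needed there. So: same strategy as the paper, incomplete precisely where the paper is unsound; a correct proof has to do the extra work you outline rather than cite the published argument.
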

\begin{proof}
We use the complexity measure $(D,S,W)$ of an associatively flat ground term in $T(\Sigma,C)$ (see Definition~\ref{defn:measure}(iv)) and compare these measures w.r.t.\;$\succ_{CM}$ for each type of $(S_{\infty}^\succ(E) \cup gr(R(G)))/A$-reduction step. First, observe that the rules in $R(G)$ are size-reducing except the rule $i(f(x_1,x_2)) \rightarrow f(i(x_2), i(x_1)) \in R(G)$. Let $t_1$ and $t_2$ be associatively flat ground terms in $T(\Sigma,C)$ such that $(S_{\infty}^\succ(E) \cup gr(R(G)))/A$-reduction step exists from $t_1$ to $t_2$. Let \emph{comp}($t_1$) and \emph{comp}($t_2$) denote the complexity measure of $t_1$ and $t_2$, respectively. We consider each type of $(S_{\infty}^\succ(E) \cup gr(R(G)))/A$-reduction step from $t_1$ to $t_2$. If it is a $gr(R(G))/A$-reduction step by a ground instance of the rule $i(f(x_1,x_2)) \rightarrow f(i(x_2), i(x_1)) \in R(G)$, then  \emph{comp}($t_2$) is smaller (w.r.t.\;$\succ_{CM}$) than \emph{comp}($t_1$) in the first component. If it is a $gr(R(G))/A$-reduction step by a ground instance from the other rules in $R(G)$, then  \emph{comp}($t_2$) is smaller (w.r.t.\;$\succ_{CM}$) than \emph{comp}($t_1$) in the second component. (It could be the case that \emph{comp}($t_2$) is smaller (w.r.t.\;$\succ_{CM}$) than \emph{comp}($t_1$) in the other components too.) If it is an $S_{\infty}^\succ(E)/A$-reduction step
by a $D$-flat rule, then \emph{comp}($t_2$) is smaller (w.r.t.\;$\succ_{CM}$) than \emph{comp}($t_1$) in the second component. If it is an $S_{\infty}^\succ(E)/A$-reduction step by an $A$-flat rule or a constant rule (i.e.,\;$c_m \rightarrow c_n$ for some $c_m,c_n\in C$ with $c_m \succ c_n$), then \emph{comp}($t_2$) is smaller (w.r.t.\;$\succ_{CM}$) than \emph{comp}($t_1$) in the third component. 

Since each $(S_{\infty}^\succ(E) \cup gr(R(G)))/A$-reduction step strictly reduces the complexity measure (w.r.t.\;$\succ_{CM}$) and $\succ_{CM}$ is well-founded on the complexity measures of the associatively flat ground terms in $T(\Sigma,C)$ (see Definition~\ref{defn:measure}(iv)), the conclusion follows.
\end{proof}

\begin{thm}\label{thm:canonical}
The ground rewrite system $S_{\infty}^\succ(E) \cup gr(R(G))$ is convergent modulo $A$ on $T(\Sigma, C)$.
\end{thm}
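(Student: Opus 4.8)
The plan is to combine the termination already in hand with a local-confluence analysis. Termination of $(S_\infty^\succ(E)\cup gr(R(G)))/A$ is Lemma~\ref{lem:terminating}, so by the fact that a terminating rewrite relation modulo $A$ is confluent iff it is locally confluent, it suffices to prove local confluence modulo $A$. Since all rewriting is carried out on associatively flat ground terms, cliffs are absorbed into single $R/A$-steps, and it is enough to show that every peak $t_1 \leftarrow_{(S_\infty^\succ(E)\cup gr(R(G)))/A} t \rightarrow_{(S_\infty^\succ(E)\cup gr(R(G)))/A} t_2$ is joinable. I would classify such peaks by which system supplies each step.

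Two homogeneous cases are immediate. If both steps use rules of $S_\infty^\succ(E)$, the peak joins by Lemma~\ref{lem:canonical2}. If both steps use rules of $gr(R(G))$, then, since on $T(\Sigma,C)$ the relations $\rightarrow_{gr(R(G))/A}$ and $\rightarrow_{R(G)/A}$ coincide, the peak joins by Lemma~\ref{lem:canonical1}. A peak whose two redexes are disjoint, or in which one lies at a variable position of the other rule's left-hand side, joins by rewriting the remaining redex on both branches, exactly as in the non-overlap case of Lemma~\ref{lem:canonical2}. Hence the only real work is the \emph{cross critical pairs}: a genuine overlap between a rule of $S_\infty^\succ(E)$ and a rule of $gr(R(G))$.

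For these I would establish a single uniform joinability principle: if $(p,q)$ is the critical pair of a genuine cross overlap, then $p \approx_{S_\infty(E)\cup gr(A)} q$, whence $p$ and $q$ have the same $S_\infty^\succ(E)/A$-normal form by Lemma~\ref{lem:canonical2} and join using $S_\infty^\succ(E)$-steps alone. The equivalence holds because a genuine overlap forces every argument involved to be a constant of $C$, and by Phase~II every such constant carries its inverse facts in $I(E')$ (the relations $i(c')\approx c$, $f(cc')\approx 1$, $f(c'c)\approx 1$ whenever $i(c)\approx c'$), along with $i(1)\approx 1$ and the units $U(C)$. Thus the overlap of $i(c)\to c'$ with $i(i(x))\to x$ on $i(i(c))$ joins at $c$ via $i(c')\approx c$; an overlap with $f(x,i(x))\to 1$ or $f(i(x),x)\to 1$ turns an adjacent pair $c,i(c)$ into $c,c'$ and then into $1$ via $f(cc')\approx 1\in I(E')$; overlaps with $f(x,1)\to x$ and $f(1,x)\to x$ meet the rules of $U(C)$, which are the only $S_\infty^\succ(E)$ $f$-rules carrying a $1$ argument, longer lists having had removable units eliminated (Lemma~\ref{lem:reduced}); and $i(1)\to 1$ coincides with the forced $S_\infty^\succ(E)$ rule $i(1)\to 1$. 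In each case $p\approx_{S_\infty(E)\cup gr(A)} q$ is witnessed by the Phase~II equations, whose presence in $S_\infty(E)$ is guaranteed since $\vdash$ preserves $\approx_{S\cup gr(A)}$ (Lemma~\ref{lem:equiv}).

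The main obstacle is the overlap of $i(f(x,y))\to f(i(y),i(x))$ with a $D$-flat or $A$-flat rule of $S_\infty^\succ(E)$ for $f$, the \emph{inverse-of-product} case. Here $i(f(c_1,\dots,c_n))$ rewrites by $S_\infty^\succ(E)$ to $i(c'')$ with $f(c_1\cdots c_n)\approx c''$, and by $gr(R(G))$ to $f(i(c_n),\dots,i(c_1))$, which $S_\infty^\succ(E)$ reduces to $f(c_n'\cdots c_1')$. To apply the uniform principle I must show $f(c_n'\cdots c_1')\approx_{S_\infty(E)\cup gr(A)} (c'')'$, where $(c'')'$ is the inverse of $c''$ recorded in Phase~II. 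This is uniqueness of inverses in a monoid presentation of a group, and it is an equational consequence of $S(E)\cup gr(A)$: from $f(c_n'\cdots c_1')$ insert $1\approx f(c''(c'')')$, rewrite $c''$ to $f(c_1\cdots c_n)$, collapse the sandwiched pairs $c_j'c_j$ to $1$ using $f(c_j'c_j)\approx 1\in I(E')$ and associativity $gr(A)$, and read off $(c'')'$. By Lemma~\ref{lem:equiv} this survives into $S_\infty(E)$, so the two branches share an $S_\infty^\succ(E)/A$-normal form and join by Lemma~\ref{lem:canonical2}. The delicate bookkeeping is to confirm that no un-accounted peak hides elsewhere: a term such as $f(u,i(u))$ with $u$ an $S_\infty^\succ(E)/A$-irreducible compound term is reducible only by $gr(R(G))$ and never by $S_\infty^\succ(E)$, so it is not a peak at all, which is why the uniform principle is only ever invoked at the constant level where $I(E')$ applies. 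Once every peak is shown joinable, local confluence modulo $A$ holds, and with Lemma~\ref{lem:terminating} the system is convergent modulo $A$.
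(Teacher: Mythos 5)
Your proposal is correct and follows the same skeleton as the paper's proof: termination is quoted from Lemma~\ref{lem:terminating}, convergence is reduced to local confluence, variable overlaps and non-overlaps are deferred to the literature, and the real work is concentrated on exactly the same list of cross overlaps between $gr(R(G))$ and $S_{\infty}^\succ(E)$ (the $i(i(c))$, $f(c,i(c))$/$f(i(c),c)$, unit, and inverse-of-product cases). Where you genuinely differ is in how the critical pairs are discharged. The paper exhibits explicit joining rewrite sequences, asserting that the relevant Phase~II rules ($i(c')\rightarrow c$, $f(c,d)\rightarrow 1$, etc.) are literally members of $S_{\infty}^\succ(E)$ ``by construction'', and in the inverse-of-product case it reduces the sandwich term $f(c_{u_m},u_1,\ldots,u_n,c_{u_n},\ldots,c_{u_1})$ in two ways and then invokes Lemma~\ref{lem:canonical2}. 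You instead prove the two sides of each critical pair equal modulo $\approx_{S_\infty(E)\cup gr(A)}$ and invoke Lemma~\ref{lem:canonical2} once, as a Church--Rosser-modulo-$A$ principle; your insertion-and-cancellation computation for the inverse-of-product case is the equational mirror of the paper's sandwich argument. Your route buys robustness: completion can transform the Phase~II equations (COMPOSE or COLLAPSE may turn $i(c)\approx c'$ into a different equation), so verbatim membership in $S_{\infty}^\succ(E)$ is not literally guaranteed by fairness, whereas Lemma~\ref{lem:equiv} does guarantee that these equations remain equational consequences, which is all your argument needs; this also silently repairs the mismatch between the overlapped rule $i(c)\rightarrow d$ and the recorded Phase~II inverse $c'$ when $d\neq c'$. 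The paper's route buys concreteness: common reducts are exhibited rather than shown to exist. Two small repairs to your write-up: the phrase ``whose presence in $S_\infty(E)$ is guaranteed'' should read ``whose validity modulo $\approx_{S_\infty(E)\cup gr(A)}$ is guaranteed'' (Lemma~\ref{lem:equiv} preserves the induced theory, not membership), and you should spell out the $A$-flat variant of the inverse-of-product overlap ($f(u_1,\ldots,u_n)\rightarrow f(v_1,\ldots,v_m)$), which your classification mentions but your computation, written for a constant right-hand side $c''$, does not cover; it goes through by applying the same cancellation argument to both sides of that rule.
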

\begin{proof}
Since $(S_{\infty}^\succ(E) \cup gr(R(G)))/A$ is terminating on $T(\Sigma, C)$ by Lemma~\ref{lem:terminating}, we show that $(S_{\infty}^\succ(E) \cup gr(R(G)))/A$ confluent on $T(\Sigma, C)$ by showing that it is locally confluent on $T(\Sigma, C)$. In this proof, we omit the joinability of the pair obtained by rewriting within the substitution part (cf. \emph{variable overlap}) involving the rules in $gr(R(G))$ from $R(G)$, which is discussed similarly in the literature~\cite{Baader1998,Dershowitz2001}. We also omit the case of non-overlaps involving the rules in $gr(R(G))$ and $S_{\infty}^\succ(E)$, which is also discussed similarly in the literature~\cite{Baader1998,Dershowitz2001}.
 
Now, it suffices to consider the (critical) overlaps between the rules in $gr(R(G))$ and $S_{\infty}^\succ(E)$. Recall that for each rule $l \rightarrow r$ in $S_{\infty}^\succ(E)$, the right-hand side $r$ is irreducible and every proper subterm of $l$ is irreducible by $S_{\infty}^\succ(E)/A$ by Lemma~\ref{lem:reduced}.

Suppose that there is an overlap between a rule $i(i(a_i))\rightarrow a_i$ in $gr(R(G))$ and a rule $i(a_i)\rightarrow c_{a_i}$ in $S_{\infty}^\succ(E)$ with the critical pair $i(c_{a_i}) \approx a_i$. Since $i(c_{a_i})\rightarrow a_i$ in $S_{\infty}^\succ(E)$ by construction (see Phase II), $i(c_{a_i})$ and $a_i$ are joinable by the rule in $S_{\infty}^\succ(E)$.

Suppose that there is an overlap between a rule $f(c, i(c))\rightarrow 1$ (resp.\;$f(i(c),c)\rightarrow 1$) in $gr(R(G))$ and a rule $i(c)\rightarrow d$ in $S_{\infty}^\succ(E)$ with the critical pair $f(c,d)\approx 1$ (resp.\;$f(d,c)\approx 1$). Since $i(c)\rightarrow d$ in $S_{\infty}^\succ(E)$, we also have $f(c,d)\rightarrow 1$ and $f(d,c)\rightarrow 1$ in $S_{\infty}^\succ(E)$ (see Phase II), and thus $f(c,d)$ (resp.\;$f(d,c)$) and $1$ are joinable by the rule in $S_{\infty}^\succ(E)$.\footnote{If one also considers the overlapping involving \emph{extension}~\cite{Dershowitz2001} rules, it is easy to see that the case of overlapping between the ground instances of the extension of $f(c, i(c))\rightarrow 1$ (resp.\;$f(i(c),c)\rightarrow 1$) in $gr(R(G))$ and $i(c)\rightarrow d$ in $S_{\infty}^\succ(E)$, along with the case of overlapping between the ground instances of the extension of $f(c, 1)\rightarrow c$ (resp.\;$f(1,c)\rightarrow 1$) in $gr(R(G))$ and $f(c,1)\rightarrow c$ (resp.\;$f(1,c)\rightarrow c$) in $S_{\infty}^\succ(E)$ are both joinable.}

Suppose that there is an overlap between a rule $f(u_1,\ldots, u_m, 1)\rightarrow f(u_1,\ldots, u_m)$ (resp.\;$f(1, u_1,\ldots, u_m)\rightarrow f(u_1,\ldots, u_m)$) in $gr(R(G))$ and a rule $f(u_m,1)\rightarrow u_m$ (resp.\;$f(1,u_1)$ $\rightarrow$ $u_1$) in $S_{\infty}^\succ(E)$. Then, the critical pair $f(u_1,\ldots, u_m) \approx f(u_1,\ldots, u_m)$ is trivially joinable.

Suppose that there is an overlap between a rule of the form $i(f(u_1,\ldots,u_n))$ $\rightarrow$ $f(i(\bar{f}(u_{m+1},\ldots, u_n))$, $i(\bar{f}(u_1,\ldots,u_m)))$ in $gr(R(G))$ and a rule $f(u_1,\ldots,u_n) \rightarrow u_m$ in $S_{\infty}^\succ(E)$. (Here, $\bar{f}(u_k,\ldots, u_1)$ is $f(u_k, \ldots, u_1)$ if $k\geq 2$, and $u_1$, otherwise (i.e., $k=1$).) We show that $f(i(\bar{f}(u_{m+1},\ldots, u_n))$, $i(\bar{f}(u_1,\ldots,u_m)))$ and $i(u_m)$ are joinable by $\rightarrow_{S_{\infty}^\succ(E)\cup gr(R(G))/A}$.\,As $f(i(\bar{f}(u_{m+1},\ldots, u_n))$, $i(\bar{f}(u_1,\ldots,u_m)))$ $ \xrightarrow{*}_{gr(R(G))/A} f(i(u_n),i(u_{n-1}),\\\ldots, i(u_1))$, we show that $f(i(u_n), i(u_{n-1}),\ldots, i(u_1))$ and $i(u_m)$  
are joinable by $\rightarrow_{S_{\infty}^\succ(E)/A}$. By construction, we have some rules $i(u_1)\rightarrow c_{u_1}, i(u_2)\rightarrow c_{u_2}, \ldots, i(u_n)\rightarrow c_{u_n}$ and $i(u_m)\rightarrow c_{u_m}$ in $S_{\infty}^\succ(E)$, where $u_1,\ldots, u_n, u_m$ and $c_{u_1},\ldots, c_{u_n},c_{u_m}$ may not be distinct. We also have the rules $f(u_1, c_{u_1})\rightarrow 1,f(c_{u_1},u_1)\rightarrow 1, \ldots, f(u_n, c_{u_n})\rightarrow 1, f(c_{u_n}, u_n)\rightarrow 1, f(u_m, c_{u_m})\rightarrow 1, f(c_{u_m}, u_m)\rightarrow 1$ in $S_{\infty}^\succ(E)$. Now, it suffices to show that $f(c_{u_n},\ldots, c_{u_1})$ and $c_{u_m}$ are joinable by $\rightarrow_{S_{\infty}^\succ(E)/A}$. Since $f(u_1,\ldots,u_n) \rightarrow u_m$ in $S_{\infty}^\succ(E)$, we have $f(c_{u_m},u_1,\ldots,u_n, c_{u_n},\ldots, c_{u_1})\\ \rightarrow_{S_{\infty}^\succ(E)/A} f(c_{u_m},u_m, c_{u_n},\ldots, c_{u_1})$ by applying the same context to the rule $f(u_1,\ldots,u_n) \rightarrow u_m$ in $S_{\infty}^\succ(E)$. This means that $f(c_{u_m},u_1,\ldots,u_n, c_{u_n},\ldots, c_{u_1})$ and $f(c_{u_m},u_m, c_{u_n},\ldots, c_{u_1})$ are joinable by $\rightarrow_{S_{\infty}^\succ(E)/A}$. Since $f(c_{u_m},u_1,\ldots,u_n, c_{u_n},\ldots, c_{u_1})\xrightarrow{*}_{S_{\infty}^\succ(E)/A} c_{u_m}$ and $f(c_{u_m},\\u_m, c_{u_n},\ldots, c_{u_1})\xrightarrow{*}_{S_{\infty}^\succ(E)/A} f(c_{u_n},\ldots, c_{u_1})$, by Lemma~\ref{lem:canonical2}, we infer that $f(c_{u_n},\ldots, c_{u_1})$ and $c_{u_m}$ are joinable by $\rightarrow_{S_{\infty}^\succ(E)/A}$.

Finally, suppose that there is an overlap between a rule of the form $i(f(u_1,\ldots,u_n)){\rightarrow}\\f(i(\bar{f}(u_{m+1},\ldots, u_n))$, $i(\bar{f}(u_1,\ldots,u_m)))$ in $gr(R(G))$ and a rule $f(u_1,\ldots,u_n) {\rightarrow} f(v_1,\ldots, v_m)$ in $S_{\infty}^\succ(E)$. (Here, $\bar{f}(u_k,\ldots, u_1)$ is $f(u_k, \ldots, u_1)$ if $k\geq 2$, and $u_1$, otherwise.)  We show that $f(i(\bar{f}(u_{m+1},\ldots, u_n))$, $i(\bar{f}(u_1,\ldots,u_m)))$ and $i(f(v_1,\ldots,v_m))$ are indeed joinable by $\rightarrow_{S_{\infty}^\succ(E) \cup gr(R(G))/A}$. Now, since $f(i(\bar{f}(u_{m+1},\ldots, u_n))$, $i(\bar{f}(u_1,\ldots,u_m)))$  $\xrightarrow{*}_{gr(R(G))/A} f(i(u_n),\\ i(u_{n-1}),\ldots, i(u_1))$, and $i(f(v_1,\ldots,v_m)) \xrightarrow{*}_{gr(R(G))/A} f(i(v_m),\ldots, i(v_1))$, we then show that \\$f(i(u_n), i(u_{n-1}),\ldots, i(u_1))$ and $f(i(v_m),\ldots, i(v_1))$ are joinable by $\rightarrow_{S_{\infty}^\succ(E)/A}$. By construction, we have some rules $i(u_1)\rightarrow c_{u_1}, i(u_2)\rightarrow c_{u_2}, \ldots, i(u_n)\rightarrow c_{u_n}$, $i(v_1)\rightarrow c_{v_1}, \ldots, i(v_m)\rightarrow c_{v_m}$ in $S_{\infty}^\succ(E)$, where $u_1,\ldots, u_n, v_1,\ldots,v_m, c_{u_1},\ldots, c_{u_n}, c_{v_1},\ldots, c_{v_m}$ may not be distinct.

\noindent It suffices to show that $f(c_{u_n},\ldots, c_{u_1})$ and $f(c_{v_m}, \ldots, c_{v_1})$ are joinable by $\rightarrow_{S_{\infty}^\succ(E)/A}$. Since $f(u_1,\ldots,u_n) \rightarrow f(v_1,\ldots, v_m) \in S_{\infty}^\succ(E)$, we see that $f(c_{v_m},\ldots, c_{v_1}, u_1,\ldots,u_n, c_{u_n},\ldots,c_{u_1})\\\rightarrow_{S_{\infty}^\succ(E)/A} f(c_{v_m},\ldots, c_{v_1}, v_1,\ldots, v_m, c_{u_n},\ldots, c_{u_1})$ by applying the same context to the rule $f(u_1,\ldots,u_n) \rightarrow f(v_1,\ldots, v_m)$ in $S_{\infty}^\succ(E)$, so $f(c_{v_m},\ldots, c_{v_1}, u_1,\ldots,u_n, c_{u_n},\ldots, c_{u_1})$ and $f(c_{v_m},\ldots, c_{v_1}, v_1,\ldots, v_m, c_{u_n},\ldots, c_{u_1})$ are joinable by $\rightarrow_{S_{\infty}^\succ(E)/A}$. Since $f(c_{v_m},\ldots, c_{v_1}, u_1,\\\ldots,u_n, c_{u_n},\ldots, c_{u_1}) \xrightarrow{*}_{S_{\infty}^\succ(E)/A} f(c_{v_m},\ldots, c_{v_1})$ and $f(c_{v_m},\ldots, c_{v_1}, v_1,\ldots, v_m, c_{u_n},\ldots, c_{u_1}) \\\xrightarrow{*}_{S_{\infty}^\succ(E)/A} f(c_{u_n},\ldots, c_{u_1})$, by Lemma~\ref{lem:canonical2}, we infer that $f(c_{u_n},\ldots, c_{u_1})$ and $f(c_{v_m},\ldots, c_{v_1})$ are joinable by $\rightarrow_{S_{\infty}^\succ(E)/A}$.
\end{proof}

The following theorem says that given $s_0 \approx t_0$, where $s_0, t_0 \in T(\Sigma, C_0)$, the membership problem for $CC^G(E)$, written $s_0\stackrel{?}{\approx}t_0 \in CC^G(E)$, is reduced to checking whether $s_0$ and $t_0$ have the same normal form w.r.t.\;$(S_\infty^\succ(E) \cup gr(R(G)))/A$.

\begin{thm}\label{thm:ccequiv} If $s_0, t_0 \in T(\Sigma, C_0)$, then $s_0 \approx t_0 \in CC^G(E)$ iff $s_0$ and $t_0$ have the same normal form w.r.t.\;$(S_\infty^\succ(E) \cup gr(R(G)))/A$.
\end{thm}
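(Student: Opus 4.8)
The plan is to chain together the characterization of $CC^G(E)$ established in Lemma~\ref{lem:cctheory} with the convergence of the combined rewrite system established in Theorem~\ref{thm:canonical}, and then to invoke the standard Church--Rosser property of a convergent rewrite relation modulo $A$. The theorem is essentially a corollary: most of the substantive work has already been carried out in the preceding lemmas, so the proof amounts to assembling them in the right order.

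First I would apply Lemma~\ref{lem:cctheory}, which states that for $s_0, t_0 \in T(\Sigma, C_0)$ we have $s_0 \approx t_0 \in CC^G(E)$ iff $s_0 \approx_{S_\infty(E) \cup gr(R(G)) \cup gr(A)} t_0$, with $gr(R(G))$ viewed as a set of equations. This reduces the membership question for $CC^G(E)$ to a question about the equational theory generated by $S_\infty(E)$, $gr(R(G))$, and the ground associativity instances $gr(A)$.

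Next I would argue that this equational theory, restricted to ground terms, coincides with the convertibility relation $\xleftrightarrow{*}_{(S_\infty^\succ(E)\cup gr(R(G)))/A}$. By the definition of $S_\infty^\succ(E)$, each equation $u \approx v \in S_\infty(E)$ with $u \succ v$ is oriented into the rule $u \rightarrow v$, and each equation of $gr(R(G))$ is already a rule; orienting equations does not alter their reflexive, symmetric, and transitive closure. Moreover, by the definition of $\rightarrow_{R/A}$, the ground associativity conversions contributed by $gr(A)$ are precisely the $A$-matching steps built into the relation $\rightarrow_{R/A}$. Hence, by Birkhoff's theorem, $s_0 \approx_{S_\infty(E) \cup gr(R(G)) \cup gr(A)} t_0$ holds iff $s_0 \xleftrightarrow{*}_{(S_\infty^\succ(E)\cup gr(R(G)))/A} t_0$. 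Finally, Theorem~\ref{thm:canonical} tells us that $S_\infty^\succ(E) \cup gr(R(G))$ is convergent modulo $A$ on $T(\Sigma, C)$: termination guarantees that $s_0$ and $t_0$ (which lie in $T(\Sigma, C_0) \subseteq T(\Sigma, C)$) have normal forms, and since each $A$-equivalence class is represented by a single associatively flat term these normal forms are unique; confluence modulo $A$ then yields the Church--Rosser property, so $s_0 \xleftrightarrow{*}_{(S_\infty^\succ(E)\cup gr(R(G)))/A} t_0$ iff $s_0$ and $t_0$ reduce to the same normal form. Combining the three equivalences gives the theorem.

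The only delicate point, and the step I expect to require the most care, is the middle equivalence: making the passage from the equational theory $\approx_{S_\infty(E) \cup gr(R(G)) \cup gr(A)}$ (in which associativity appears as explicit ground equations and the rules appear unoriented) to the convertibility relation of the rewrite system modulo $A$ fully rigorous. One must verify both inclusions, namely that orienting the equations into rules loses no convertibility and that the $gr(A)$-conversions are exactly the $A$-steps absorbed into $\rightarrow_{R/A}$; this is routine but should be stated explicitly, since the two formalisms (equations-plus-$gr(A)$ versus rewriting-modulo-$A$) are being identified. Everything else then follows mechanically from convergence.
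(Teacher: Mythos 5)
Your proposal is correct and follows essentially the same route as the paper: Lemma~\ref{lem:cctheory} reduces membership in $CC^G(E)$ to the equational theory of $S_\infty(E)\cup gr(R(G))\cup gr(A)$, and Theorem~\ref{thm:canonical} (convergence modulo $A$) converts that into equality of normal forms. The only difference is that you make explicit the bridge between that equational theory and convertibility w.r.t.\;$(S_\infty^\succ(E)\cup gr(R(G)))/A$ --- a step the paper's two-sentence proof leaves implicit --- which is a sound refinement, not a deviation.
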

\begin{proof}
Assume that $s_0, t_0 \in T(\Sigma, C_0)$. If $s_0\approx t_0 \in CC^G(E)$, then $s_0 \approx_{S_\infty(E) \cup gr(R(G))\cup gr(A)} t_0$ by Lemma~\ref{lem:cctheory}. Since $S_{\infty}^\succ(E) \cup gr(R(G))$ is convergent modulo $A$ on $T(\Sigma, C)$ by Theorem~\ref{thm:canonical}, $s_0$ and $t_0$ have the same normal form w.r.t.\;$(S_\infty^\succ(E) \cup gr(R(G))/A$.

Conversely, if $s_0$ and $t_0$ have the same normal form w.r.t.\;$(S_\infty^\succ(E) \cup gr(R(G))/A$, then we have  $s_0 \approx_{S_\infty(E) \cup gr(R(G)) \cup gr(A)} t_0$, and thus $s_0\approx t_0 \in CC^G(E)$  by Lemma~\ref{lem:cctheory}.
\end{proof}

\begin{rem} Note that $s_0, t_0 \in T(\Sigma, C_0)$ with $s_0 \approx t_0 \in CC^G(E)$ in Theorem~\ref{thm:ccequiv} may have the normal forms w.r.t.\;$(S_\infty^\succ(E) \cup gr(R(G)))/A$ in $T(\Sigma, C_1)$, but they are the same by Theorem~\ref{thm:canonical} because $T(\Sigma, C_0) \subseteq T(\Sigma, C)$. (Recall that $C := C_0 \cup C_1$.)
\end{rem}

\begin{rem} Recall that $gr(R(G))$ in Theorem~\ref{thm:ccequiv} is defined as $gr(R(G)):=\{l\sigma \rightarrow r\sigma\,|\, l\rightarrow r \in R(G)\,\wedge\,\sigma \text{ is a ground substitution}\}$. (Here, a ground substitution maps variables to (ground) terms in $T(\Sigma,C)$.) This means that $gr(R(G))$ can be infinite. However, instead of using the infinite ground rewrite system $gr(R(G))$ directly, we can still use the finite rewrite system $R(G)$ for rewriting steps w.r.t.\;$gr(R(G))/A$ by using $A$-matching on (ground) terms in $T(\Sigma,C)$.
\end{rem}

\begin{cor}\label{cor:wordproblem} Given a finite set of ground equations $E\subseteq T(\Sigma, C_0) \times T(\Sigma, C_0)$, if $S_\infty(E)$ is finite,  then we can decide for any $s_0, t_0 \in T(\Sigma, C_0)$ whether $s_0 \approx_E^G t_0$ holds or not.
\end{cor}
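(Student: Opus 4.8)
The plan is to read the corollary off the equivalences already established, with the finiteness hypothesis on $S_\infty(E)$ supplying precisely the effectiveness that converts the characterization of $CC^G(E)$ into an algorithm. First I would recall the chain of reductions already in place: by Birkhoff's theorem $\approx_E^G$ coincides with $CC^G(E)$, and by Theorem~\ref{thm:ccequiv}, for $s_0, t_0 \in T(\Sigma, C_0)$ we have $s_0 \approx t_0 \in CC^G(E)$ iff $s_0$ and $t_0$ have the same normal form with respect to $(S_\infty^\succ(E) \cup gr(R(G)))/A$. Thus deciding $s_0 \approx_E^G t_0$ reduces to (a) \emph{computing} a normal form of each of $s_0$ and $t_0$, and (b) testing equality of the two results modulo $A$. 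Step (b) is routine: $A$-equivalence of ground terms is decidable since it amounts to comparing associatively flattened forms, so I would treat it as immediate.

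The substance is step (a), i.e.\ that normal forms are effectively computable. Here I would combine three facts. By Theorem~\ref{thm:canonical} the system $(S_\infty^\succ(E) \cup gr(R(G)))/A$ is convergent modulo $A$ on $T(\Sigma, C)$, so every term has a unique normal form modulo $A$; by Lemma~\ref{lem:terminating} the reduction relation is terminating, so any maximal reduction sequence halts after finitely many steps and lands in a normal form. What remains is that each individual reduction step can be carried out effectively and that the choice of applicable redexes at each step ranges over a \emph{finite} set. The hypothesis that $S_\infty(E)$ is finite gives a finite rewrite system $S_\infty^\succ(E)$. The ground system $gr(R(G))$ is infinite, but—as noted in the remark following Theorem~\ref{thm:ccequiv}—each $gr(R(G))/A$-step can be simulated using the \emph{finite} system $R(G)$ together with $A$-matching on ground terms of $T(\Sigma, C)$, and $A$-matching is decidable. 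Consequently, at each stage there are only finitely many candidate redexes to test, each test is decidable, and at least one applies whenever the term is not in normal form.

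I would then assemble the decision procedure explicitly: on input $s_0, t_0 \in T(\Sigma, C_0)$, repeatedly apply $(S_\infty^\succ(E) \cup gr(R(G)))/A$-reductions (via the finite $S_\infty^\succ(E)$ and the finite $R(G)$ with $A$-matching) to each term until no rule applies, obtaining normal forms $\hat s_0$ and $\hat t_0$; by Lemma~\ref{lem:terminating} this halts, and by Theorem~\ref{thm:canonical} the results are the unique normal forms modulo $A$. Finally compare $\hat s_0$ and $\hat t_0$ for $A$-equality by comparing associatively flattened forms, and by Theorem~\ref{thm:ccequiv} together with Birkhoff's theorem this answers $s_0 \approx_E^G t_0$.

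I expect the main obstacle to be the interplay between the infinite $gr(R(G))$ and effectiveness: the argument must make clear that although the abstract rewrite system is infinite, each concrete step is computable by $A$-matching against the finite $R(G)$, while the finiteness of $S_\infty(E)$ is exactly what keeps the $S_\infty^\succ(E)$-part of the redex search finite. Everything else is bookkeeping, since correctness is already packaged in Theorem~\ref{thm:ccequiv} and termination in Lemma~\ref{lem:terminating}.
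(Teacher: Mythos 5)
Your proposal is correct and follows essentially the same route as the paper: reduce $\approx_E^G$ to $CC^G(E)$ via Birkhoff's theorem, then apply Theorem~\ref{thm:ccequiv} to test equality of normal forms w.r.t.\ $(S_\infty^\succ(E) \cup gr(R(G)))/A$. The only difference is that you spell out the effectiveness details (termination via Lemma~\ref{lem:terminating}, finiteness of the redex search via finite $S_\infty^\succ(E)$ and $A$-matching against the finite $R(G)$) that the paper leaves implicit in its two-line proof and its remark following Theorem~\ref{thm:ccequiv}.
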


\begin{proof}
By Birkhoff's theorem, $CC^G(E)$ coincides with $\approx_E^G$. By Theorem~\ref{thm:ccequiv}, we can decide whether $s_0 \approx_E^G t_0$ using the normal forms of $s_0$ and $t_0$ w.r.t.\;$(S_\infty^\succ(E) \cup gr(R(G)))/A$, i.e., $s_0$ and $t_0$ have the same normal form w.r.t.\;$(S_\infty^\succ(E) \cup gr(R(G)))/A$ iff $s_0 \approx_E^G t_0$.
\end{proof}

\begin{exa}\label{ex:ex4}\normalfont (Continued from Example~\ref{ex:ex3}) Consider the word problem of deciding whether $i(i(f(h(a), f(i(b), a))))\approx_E^G 1$ holds or not using $S_\infty(E)$ in Example~\ref{ex:ex3}. Recall that two notations $f(u)$ and $f(u_1,\ldots, u_n)$ are used interchangeably if $f\in \Sigma_A$ and $u$ is a string over $C$ such that $u:=u_1,\ldots, u_n$ and $|u|\geq 2$. Then, $S_\infty^\succ(E) =\{h(1)\rightarrow b, c_2\rightarrow 1, a\rightarrow 1, i(1) \rightarrow 1, c_1 \rightarrow b, c_3\rightarrow c_4, i(b)\rightarrow c_4, i(c_4)\rightarrow b, f(b, c_4)\rightarrow 1, f(c_4, b)\rightarrow 1\}\cup \bar{U}^\succ (C)$, where $\bar{U}^\succ (C)=\{f(1,1)\rightarrow 1, f(b,1)\rightarrow b, f(1,b)\rightarrow b, f(c_4,1)\rightarrow c_4, f(1,c_4)\rightarrow c_4\}$. First, $i(i(f(h(a), f(i(b), a))))$ is associatively flattened to $i(i(f(h(a), i(b), a)))$. 

Let $SG := S_\infty^\succ(E) \cup gr(R(G))$. Then, $i(i(f(h(a), i(b), a)))\rightarrow_{SG/A} f(h(a), i(b), a)\xrightarrow{*}_{SG/A}f(h(1), i(b), 1)\rightarrow_{SG/A} f(b, i(b), 1)\rightarrow_{SG/A} f(b, c_4, 1)\rightarrow_{SG/A} f(1, 1)\rightarrow_{SG/A}1$. Now, we see that $i(i(f(h(a), f(i(b), a))))\approx_E^G 1$ holds and  $i(i(f(h(a), f(i(b), a))))\approx 1 \in CC^G(E)$ by Theorem~\ref{thm:ccequiv}.
\end{exa}

\begin{defi}\normalfont Given $S_0=S(E)$, let $R(E)$ be the set of $D$-flat and $A$-flat equations containing\;$f\in \Sigma_G$ in $S(E)-U(C)$ (see Phase II). Let $C(R)$ be the set of constant symbols appearing in $R(E)$ except 1. We say that ${<}C(R)\,|\,R(E){>}$ is the \emph{monoid presentation for} $S(E)$.
\end{defi}

\begin{exa}\normalfont \label{ex:ex5}
Let $E=\{f(a,a,a)\approx 1,f(h(a), h(a))\approx 1, f(a,h(a), a,h(a))\approx 1\}$ and $f\in \Sigma_G$. Then $S(E)=\{f(aaa)\approx 1, h(a)\approx c_1, f(c_1c_1)\approx 1, f(ac_1ac_1)\approx 1, i(1)\approx 1, i(a)\approx c_2, i(c_2)\approx a, i(c_1)\approx c_3, i(c_3)\approx c_1, f(ac_2)\approx 1, f(c_2a)\approx 1,f(c_1c_3)\approx 1, f(c_3c_1)\approx 1\} \cup U(C)$, where $C=\{1, a, c_1, c_2, c_3\}$. Now, we have the monoid presentation ${<}C(R)\,|\,R(E){>}$ for $S(E)$, where $C(R)=\{a, c_1, c_2, c_3\}$ and $R(E)=\{f(ac_2)\approx 1, f(c_2a)\approx 1,f(c_1c_3)\approx 1, f(c_3c_1)\approx 1, f(aaa)\approx 1, f(c_1c_1)\approx 1, f(ac_1ac_1)\approx 1\}$. This monoid presentation for $S(E)$ corresponds to a monoid presentation of the dihedral group of order 6~\cite{Holt2005} ${<} \alpha, \beta, \alpha^{-1}, \beta^{-1}, \,|\, \alpha\alpha^{-1}\approx 1, \alpha^{-1}\alpha\approx 1, \beta\beta^{-1}\approx 1, \beta^{-1}\beta \approx 1, \alpha^3 \approx 1, \beta^2 \approx 1, \alpha\beta\alpha\beta \approx 1 {>}$ by renaming the symbols $a$ to $\alpha$, $c_1$ to $\beta$, $c_2$ to $\alpha^{-1}$, and $c_3$ to $\beta^{-1}$.
\end{exa}

It is known that the Knuth-Bendix completion terminates for finite groups using their monoid presentations~\cite{Holt2005} with their associated length-lexicographic ordering. Similarly, if the monoid presentation for $S_0=S(E)$ is a monoid presentation of a finite group, then $S_\infty(E)$ is finite (see Proposition~\ref{prop:monoid_presentation}), providing a decision procedure for the word problem for $E$ w.r.t.\;$G$ by Corollary~\ref{cor:wordproblem}.

\begin{exa}\normalfont\label{ex:ex6} (Continued from Example~\ref{ex:ex5}) Given $S(E)=\{f(aaa)\approx 1, h(a)\approx c_1, f(c_1c_1)\approx 1, f(ac_1ac_1)\approx 1, i(1)\approx 1, i(a)\approx c_2, i(c_2)\approx a, i(c_1)\approx c_3, i(c_3)\approx c_1, f(ac_2)\approx 1, f(c_2a)\approx 1,f(c_1c_3)\approx 1, f(c_3c_1)\approx 1\} \cup U(C)$ and $C=\{1, a, c_1, c_2, c_3\}$ in Example~\ref{ex:ex5}, let $i\succ h \succ f \succ c_1\succ c_2 \succ c_3 \succ a \succ 1$ (see Definition~\ref{defn:ordering}). In the following, we implicitly apply SIMPLIFY using $U(C)$ whenever applicable (cf.\;Example~\ref{ex:ex3}):\\

\noindent $1': c_1 \approx c_3\;($DEDUCE by $f(c_1c_1)\approx 1$ and $f(c_1c_3)\approx 1)$\\
$($By $1'$, each $c_1$ occurring in $S(E)$ is replaced by $c_3$ using COLLAPSE and COMPOSE.$)$\\
$\cdots$\\
$2': f(aa)\approx c_2$ $($DEDUCE by $f(aaa)\approx 1$ and $f(ac_2)\approx 1)$\\
$3': f(c_2c_2)\approx a$ $($DEDUCE by $2'$ and $f(ac_2)\approx 1)$\\
$4': f(ac_3a) \approx c_3$ $($DEDUCE by $f(ac_3ac_3)\approx 1$ and $f(c_3c_3)\approx 1)$\\
$5': f(c_2c_3) \approx f(c_3a)$ $($DEDUCE by $4'$ and $f(c_2a)\approx 1)$\\
$6': f(c_3c_2) \approx f(ac_3)$ $($DEDUCE by $4'$ and $f(ac_2)\approx 1)$\\
$7': f(c_3ac_3) \approx c_2$ $($DEDUCE by $6'$ and $f(c_3c_3)\approx 1)$\\
$\cdots$

\noindent After several steps using the contraction rules, we have the finite set $S_\infty(E)=\{h(a)\approx c_3, f(c_3c_3)\approx 1, i(1)\approx 1, i(a)\approx c_2, i(c_2)\approx a, i(c_3)\approx c_3, f(ac_2)\approx 1, f(c_2a)\approx 1\}\cup \bar{U}(C) \cup \{1',2',3',4',5',6',7'\}$, where $\bar{U}(C)$ is obtained from $U(C)$ by rewriting each occurrence of $c_1$ in $U(C)$ to $c_3$. Now, we have the finite rewrite system $S_\infty^\succ(E)$, which is obtained from $S_\infty(E)$ by orienting each equation in $S_\infty(E)$ into the rewrite rule. Given $S_0=S(E)$, the output depends on a given precedence on symbols, but the completion procedure necessarily terminates (see Proposition~\ref{prop:monoid_presentation}).
\end{exa}

\begin{prop}\label{prop:monoid_presentation} Given $S_0=S(E)$, if the monoid presentation ${<}C(R)\,|\,R(E){>}$ for $S(E)$ is a monoid presentation of a finite group, then $S_\infty(E)$ is finite.
\end{prop}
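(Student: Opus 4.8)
The plan is to split $S_\infty(E)$ into its $f$-headed ``string'' equations and the rest, and to bound each part separately. First I would observe that $C=C_0\cup C_1$ is finite and that, by the remark following Phase~II, no new constant is introduced during Phase~III; likewise no rule of $\mathcal I$ introduces a new function symbol. Consequently every persisting $C$-constant equation lies in the finite set $C\times C$, every persisting equation $i(c)\approx d$ ranges over the finitely many pairs with $c,d\in C$, and every persisting uninterpreted $D$-flat equation $g(c_1,\ldots,c_n)\approx c$ ranges over the finitely many symbols $g$ occurring in $E$ with arguments and right-hand side drawn from the finite set $C$. The set $U(C)$ is finite as well, since $C$ is. Hence every component of $S_\infty(E)$ other than the $A$-flat equations and the $D$-flat equations headed by $f$ is automatically finite, and the whole burden reduces to bounding the $f$-headed equations, whose argument strings are the only ones that can grow without bound.

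For the $f$-headed part I would set up an explicit correspondence with the Knuth--Bendix completion of the monoid presentation ${<}C(R)\,|\,R(E){>}$. Under it, each constant of $C(R)$ is a monoid generator, the constant $1$ is the empty word, each $A$-flat or $D$-flat equation headed by $f$ is a string relation (reading the arguments of $f$ as a word over $C(R)$, a right-hand side $c\neq 1$ as the one-letter word, and $1$ as the empty word), and the order $\succ_L$ of Definition~\ref{defn:ordering}(v) is exactly the length-lexicographic (shortlex) ordering used in monoid completion. The rules $U(C)$, which provide $f(c,1)\to c$ and $f(1,c)\to c$ for every $c\in C$, implement the identification of $1$ with the empty word, so that each application of SIMPLIFY against a rule of $U(C)$ merely deletes occurrences of $1$ and carries no information in the monoid. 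The crux is then to check that, modulo this bookkeeping, DEDUCE reproduces exactly the string overlaps of Definition~\ref{defn:cpair}(i), while SIMPLIFY (for $f\in\Sigma_A$) together with COLLAPSE by constant rules reproduces the inter-reduction of string rules, the latter accounting for the renaming of a generator when two constants are merged by a derived $C$-constant equation. In short, the projection of a fair $\mathcal I$-derivation onto its $f$-headed equations is a fair Knuth--Bendix derivation for ${<}C(R)\,|\,R(E){>}$ under shortlex.

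With this correspondence in hand I would invoke the known fact that Knuth--Bendix completion of a monoid presentation of a \emph{finite} group, run with the length-lexicographic ordering, terminates and yields a finite convergent string rewriting system~\cite{Holt2005,Sims1994}: for a finite group the set of shortlex-irreducible words is finite, and the left-hand sides of the reduced system lie among finitely many words. By the hypothesis that ${<}C(R)\,|\,R(E){>}$ presents a finite group, the completed string system is finite, so only finitely many $f$-headed equations persist in $S_\infty(E)$. Combining the two parts, $S_\infty(E)$ is a finite union of finite sets, hence finite.

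The main obstacle I anticipate is making the correspondence of the second paragraph precise rather than merely plausible: one must verify that $\mathcal I$ never produces $f$-headed equations outside the image of monoid completion, and in particular that a derived $C$-constant equation which collapses a generator to $1$ or to another generator (as happens in Example~\ref{ex:ex3}) corresponds to a legitimate, presentation-preserving simplification of ${<}C(R)\,|\,R(E){>}$, so that the finiteness guaranteed by the finite-group hypothesis survives the elimination of generators. Once this is established, the remaining work is the routine matching of each inference rule of $\mathcal I$ to its string-rewriting counterpart.
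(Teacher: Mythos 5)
Your first step (everything outside the $f$-headed fragment of $S_\infty(E)$ is automatically finite, so only the $f$-headed equations matter) is sound and is exactly how the paper's proof begins. The gap is in your second step, and it sits precisely at the point you deferred as the ``main obstacle'': the invariant you would need --- that $\mathcal I$ never produces $f$-headed or $C$-constant equations outside what Knuth--Bendix completion of the \emph{fixed} presentation ${<}C(R)\,|\,R(E){>}$ can generate --- is false. The COLLAPSE rule can inject string relations that are \emph{not} consequences of that presentation, because they arise from congruence closure through uninterpreted symbols. Concretely, take $E=\{f(a,a)\approx 1,\, f(b,b)\approx 1,\, f(c,c)\approx 1,\, f(a,b)\approx c,\, h(a)\approx b,\, h(a)\approx c\}$ with precedence $a\succ b\succ c$. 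Here ${<}C(R)\,|\,R(E){>}$ presents the Klein four-group, so the hypothesis of Proposition~\ref{prop:monoid_presentation} holds; yet COLLAPSE applied to the two uninterpreted equations $h(a)\approx b$ and $h(a)\approx c$ yields the $C$-constant equation $c\approx b$, a string relation that fails in that group (there $c=ab\neq b$). So the projection of a fair run onto the string fragment is \emph{not} a KB derivation for ${<}C(R)\,|\,R(E){>}$: the presented monoid gets properly quotiented in mid-run, your hoped-for ``presentation-preserving simplification'' verification cannot succeed, and the textbook termination theorem for completion of a fixed finite-group presentation cannot be invoked as a black box. Any repair must handle a presentation whose relation set grows during the derivation, arguing that each enlargement still presents a quotient of a finite group, and must also re-prove the termination statement for such a dynamically growing input.

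The paper avoids the simulation altogether, which is why it never meets this problem. It uses the finite-group hypothesis exactly once, to conclude that there are only finitely many $\xleftrightarrow{*}_{R(E)\cup gr(A)}$-equivalence classes of terms $f(u)$; since adding relations only coarsens a congruence, and since $\xleftrightarrow{*}_{S(E)\cup gr(A)}$ coincides with $\xleftrightarrow{*}_{S_\infty(E)\cup gr(A)}$ by Lemma~\ref{lem:equiv}, finiteness of the number of classes survives every injected identification, including the one above. Convergence of $S_\infty^\succ(E)$ modulo $A$ (Lemma~\ref{lem:canonical2}) then turns ``finitely many classes'' into ``finitely many $S_\infty^\succ(E)/A$-normal forms'' of such terms, and Lemma~\ref{lem:reduced} (every proper subterm of a persisting left-hand side is irreducible) bounds the left-hand sides, contradicting infiniteness of the $f$-headed part. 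In effect, the paper inlines the counting argument that underlies the very string-completion theorem you wanted to cite, in a form that is robust against relations arriving from outside the string fragment; your outline becomes a proof essentially only after being rewritten into that form.
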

\begin{proof}
Let $X=C(R)$ and $f\in \Sigma_G$. If the monoid presentation ${<}X\,|\,R(E){>}$ for $S(E)$ is a monoid presentation of a finite group, then there are only finitely many $\xleftrightarrow{*}_{R(E)\cup gr(A)}$ equivalence classes on $\{f(u)\,|\, u\in X^*\text{ and } |u|\geq 2\}$ (see Chapter 12 in~\cite{Holt2005}).

Now, if $X$ is not the same as $C$, then let $\bar{X}$ be the set obtained from $X$ by adding each constant $c'\in C-X$ satisfying $c'\xleftrightarrow{*}_{S(E)}c$ for some  $c \in X$ to $X$. (If either $X=C$ or $X\neq C$ but such $c'$ does not exist, then we simply let $\bar{X}:=X$.) Since there are finitely many $\xleftrightarrow{*}_{R(E)\cup gr(A)}$ equivalence classes on $\{f(u)\,|\, u\in X^*\text{ and } |u|\geq 2\}$, we may infer that there are also finitely many $\xleftrightarrow{*}_{S(E)\cup gr(A)}$ equivalence classes on $\{f(u)\,|\, u\in \bar{X}^*\text{ and } |u|\geq 2\}$.

 By Lemma~\ref{lem:equiv}, we have $\xleftrightarrow{*}_{S(E)\cup gr(A)}= \xleftrightarrow{*}_{S_\infty(E)\cup gr(A)}$, so there are only finitely many $\xleftrightarrow{*}_{S_\infty(E)\cup gr(A)}$ equivalence classes on $\{f(u)\,|\, u\in \bar{X}^*\text{ and } |u|\geq 2\}$. Since $S_\infty^\succ(E)$ is convergent modulo $A$ by Lemma~\ref{lem:canonical2}, this means that there are only finitely many $S_\infty^\succ(E)/A$-normal forms of the terms in $\{f(u)\,|\, u\in \bar{X}^*\text{ and } |u|\geq 2\}$.

Suppose that $S_\infty(E)$ is an infinite set. Since non-$\Sigma_A$ symbols have fixed arities and $\Sigma\cup C$ is finite, $S_\infty(E)$ must contain infinitely many equations with top symbol $f$. As there are only finitely many $S_\infty^\succ(E)/A$-normal forms of the terms in $\{f(u)\,|\, u\in \bar{X}^*\text{ and } |u|\geq 2\}$, there is some rule $f(u_1\cdots u_n)\rightarrow t \in S_\infty^\succ(E)$ such that a proper subterm of $f(u_1\cdots u_n)$ is reducible by $S_\infty^\succ(E)/A$. This is not possible by Lemma~\ref{lem:reduced}, a contradiction.
\end{proof}

\section{Congruence closure modulo semigroups, monoids, and the multiple disjoint sets of group axioms}
Given a finite set of ground equations $E$ between terms $s,t \in T(\Sigma, C_0)$, by congruence closure modulo semigroups (resp.\;monoids), we mean congruence closure modulo the associativity (resp.\;monoid) axioms for $E$. In this section, we first discuss congruence closure modulo semigroups by considering multiple associative symbols. Then we discuss congruence modulo monoids by considering multiple associative symbols but only one of them is an interpreted symbol for the monoid axioms. We also discuss congruence closure modulo the multiple disjoint sets of group axioms. All these approaches only differ by Phases I and II in Section~\ref{sec:cc}, but they use the same Phase III in Section~\ref{sec:cc}. This means that one may use the same completion procedure for constructing congruence closure modulo the semigroup, monoid, and the multiple disjoint sets of group axioms, respectively.

\subsection{Congruence closure modulo semigroups}
Unlike constructing congruence closure modulo the group axioms, constructing congruence closure modulo the semigroup axioms does not need to add certain ground flat equations entailed by the group axioms, so Phase II in Section~\ref{sec:cc} is not necessary. Also, normalizing each term using the rewriting by $R(G)/A$ is not necessary, either. Now, Phase I for constructing congruence modulo semigroups is the same as Phases I in Section~\ref{sec:cc} without step 2. The output of Phase I is denoted by $S(E)$, where all equations in $S(E)$ are constant, $D$-flat, or $A$-flat equations. Then one may apply Phase III in Section~\ref{sec:cc} directly using the same inference rules in Figure~\ref{fig:fig1} with $S_0 = S(E)$ and have the same Lemmas~\ref{lem:equiv} and~\ref{lem:canonical2}. Now, we have the following results adapted from Lemma~\ref{lem:cctheory}, Theorem~\ref{thm:ccequiv} and Corollary~\ref{cor:wordproblem}, respectively.

\begin{lem}\label{lem:cctheory_assoc}
If $s_0, t_0 \in T(\Sigma, C_0)$, then $s_0 \approx t_0 \in CC^A(E)$ iff $s_0\approx_{S_\infty(E) \cup gr(A)} t_0$.
\end{lem}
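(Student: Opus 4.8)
The plan is to mirror the proof of Lemma~\ref{lem:cctheory}, replacing the group axioms $G$ everywhere by the associativity axioms $A$ and dropping the rewrite system $gr(R(G))$ entirely. This is legitimate because, in the semigroup setting, Phase~II is omitted and no $R(G)/A$-normalization (step~2 of Phase~I) is performed, so $S(E)$ is simply the flattening of $E$ into constant, $D$-flat, and $A$-flat equations. First I would invoke Birkhoff's theorem to identify $CC^A(E)$ with $\approx_E^A$, reducing the claim to relating $\approx_E^A$ and $\approx_{S_\infty(E)\cup gr(A)}$ on ground terms in $T(\Sigma, C_0)$.

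Second, I would establish the semigroup analog of Lemma~\ref{lem:conservative}, namely that $S(E)$ w.r.t.\;$A$ is a conservative extension of $E$ w.r.t.\;$A$, i.e.\ $s_0\approx_E^A t_0$ iff $s_0\approx_{S(E)}^A t_0$ for all $s_0,t_0\in T(\Sigma, C_0)$. This is proved by the same model-theoretic argument as Lemma~\ref{lem:conservative}: for one direction, a model $\mathcal{A}$ of $E\cup A$ separating $s_0$ and $t_0$ is expanded to an algebra $\mathcal{B}$ over $\Sigma\cup C$ by interpreting each new constant $c_k\in C_1$ as $s_k^{\mathcal{A}}$, where $s_k\in Sub(E)$ is the subterm that $c_k$ names, and one checks that $\mathcal{B}$ satisfies each constant, $D$-flat, and $A$-flat equation of $S(E)$ together with $A$; for the converse, one takes the reduct of a separating model of $S(E)\cup A$ to $\Sigma\cup C_0$ and verifies that it satisfies $E\cup A$. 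Because Phase~II is skipped and step~2 is absent, this argument is strictly a simplification of Lemma~\ref{lem:conservative}: there are no inverse/unit cases and no appeal to $R(G)/A$-normal forms. Combining this with the facts that $\approx_{S(E)}^A$ is the restriction of $\approx_{S(E)\cup A}$ to ground terms and that on ground terms $gr(A)$ captures $A$, I would obtain $s_0\approx_E^A t_0$ iff $s_0\approx_{S(E)\cup gr(A)} t_0$.

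Finally, I would pass from $S_0=S(E)$ to $S_\infty(E)$ along the fair derivation. Lemma~\ref{lem:equiv} applies verbatim here, since it concerns only the inference rules $\mathcal{I}$ and $gr(A)$ and never mentions $gr(R(G))$: each step $S_i\vdash S_{i+1}$ leaves the congruence $\xleftrightarrow{*}_{S_i\cup gr(A)}$ equal to $\xleftrightarrow{*}_{S_{i+1}\cup gr(A)}$ on $T(\Sigma, C)$, hence on $T(\Sigma, C_0)$. Iterating over the derivation yields $s_0\approx_{S(E)\cup gr(A)} t_0$ iff $s_0\approx_{S_\infty(E)\cup gr(A)} t_0$, closing the chain of equivalences.

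The only step with genuine content is the conservative-extension claim, and even there the work is routine bookkeeping. I expect the main care to lie in tracking the correspondence between each new constant $c_k\in C_1$ and the subterm $s_k\in Sub(E)$ it names (shown by structural induction via Definition~\ref{defn:gequations}, exactly as in Lemma~\ref{lem:conservative}); everything else is a direct specialization of the proof of Lemma~\ref{lem:cctheory} to the simpler signature in which the interpreted symbols $i$ and $1$ and the rules of $R(G)$ play no role.
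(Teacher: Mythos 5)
Your proposal is correct and follows essentially the same route as the paper's own proof: Birkhoff's theorem identifies $CC^A(E)$ with $\approx_E^A$, an adaptation of Lemma~\ref{lem:conservative} (with Phase~II and the $R(G)/A$-normalization step dropped) gives $s_0\approx_E^A t_0$ iff $s_0\approx_{S(E)\cup gr(A)} t_0$, and Lemma~\ref{lem:equiv} applied along the fair derivation closes the gap between $S(E)$ and $S_\infty(E)$. Your write-up merely spells out the conservative-extension adaptation in more detail than the paper, which states it in one line as ``adapted from Lemma~\ref{lem:conservative}.''
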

\begin{proof}
By Birkhoff's theorem, $CC^A(E)$ coincides with $\approx_E^A$. Adapted from Lemma~\ref{lem:conservative}, we may infer that $s_0\approx_E^A t_0$ iff $s_0\approx_{S(E)}^A t_0$ for all terms $s_0, t_0 \in T(\Sigma, C_0)$. Also, $s_0\approx_{S(E)}^A t_0$ iff $s_0\approx_{S(E) \cup gr(A)} t_0$ for all terms $s_0, t_0 \in T(\Sigma, C_0)$. 

It remains to show that  $s_0\approx_{S(E) \cup gr(A)} t_0$ iff $s_0\approx_{S_\infty(E)\cup gr(A)} t_0$ for all terms $s_0, t_0 \in T(\Sigma, C_0)$, where $S_0=S(E)$. By Lemma~\ref{lem:equiv}, if $S_i \vdash S_{i+1}$, then $\xleftrightarrow{*}_{S_i\cup gr(A)}$ and $\xleftrightarrow{*}_{S_{i+1}\cup gr(A)}$ on $T(\Sigma, C_0)$ are the same, and thus the conclusion follows.
\end{proof}

\begin{lem}\label{lem:ccequiv_semigroups} If $s_0, t_0 \in T(\Sigma, C_0)$, then $s_0 \approx t_0 \in CC^A(E)$ iff $s_0$ and $t_0$ have the same normal form w.r.t.\;$S_\infty^\succ(E)/A$.
\end{lem}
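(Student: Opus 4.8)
The plan is to mirror the proof of Theorem~\ref{thm:ccequiv}, with the combined system $S_\infty^\succ(E) \cup gr(R(G))$ replaced by $S_\infty^\succ(E)$ alone, since in the semigroup setting no group rules are present and Phase~II is dropped. The argument factors cleanly into two invocations followed by the standard characterization of convergent systems modulo $A$.

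First I would apply Lemma~\ref{lem:cctheory_assoc} to translate the membership condition into the equational theory: $s_0 \approx t_0 \in CC^A(E)$ holds iff $s_0 \approx_{S_\infty(E)\cup gr(A)} t_0$, which by Birkhoff's theorem is the same as $s_0 \xleftrightarrow{*}_{S_\infty(E)\cup gr(A)} t_0$. Next I would invoke the semigroup analog of Lemma~\ref{lem:canonical2}, which (as the preceding discussion notes) holds verbatim because the same Lemmas~\ref{lem:equiv} and~\ref{lem:canonical2} apply: the ground rewrite system $S_\infty^\succ(E)$ is convergent modulo $A$ on $T(\Sigma, C)$, i.e.\ the relation $\rightarrow_{S_\infty^\succ(E)/A}$ is both terminating and confluent.

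Finally I would combine these with the standard fact that for a system convergent modulo $A$, two terms lie in the same $\xleftrightarrow{*}_{S_\infty(E)\cup gr(A)}$ class iff they reduce to the same $S_\infty^\succ(E)/A$-normal form. For the forward direction, confluence together with termination forces equal terms down to a common normal form, unique up to $A$ and hence equal as associatively flat representatives; for the converse, having the same normal form trivially witnesses membership in the same class. Chaining this equivalence with the reduction supplied by Lemma~\ref{lem:cctheory_assoc} yields the stated biconditional.

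I do not expect a genuine obstacle, as the substantive work—termination and local confluence of $S_\infty^\succ(E)/A$, and the theory-preservation/conservativity results—is already packaged into the cited lemmas. The only point needing a remark is that although $s_0, t_0 \in T(\Sigma, C_0)$, their normal forms may live in $T(\Sigma, C)$ and involve constants from $C_1$; this is harmless since convergence is established on all of $T(\Sigma, C)$ and $T(\Sigma, C_0) \subseteq T(\Sigma, C)$, so the normal forms exist and their comparison is well-defined (cf.\ the remark following Theorem~\ref{thm:ccequiv}).
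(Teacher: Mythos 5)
Your proposal is correct and follows essentially the same route as the paper's own proof: translate membership in $CC^A(E)$ into $s_0 \approx_{S_\infty(E)\cup gr(A)} t_0$ via Lemma~\ref{lem:cctheory_assoc}, then use the convergence of $S_\infty^\succ(E)$ modulo $A$ (Lemma~\ref{lem:canonical2}) to identify equivalence classes with common normal forms in both directions. Your closing remark about normal forms possibly lying in $T(\Sigma, C)$ rather than $T(\Sigma, C_0)$ is a sound additional observation, consistent with the remark the paper makes after Theorem~\ref{thm:ccequiv}.
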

\begin{proof}
Assume that $s_0, t_0 \in T(\Sigma, C_0)$. If $s_0\approx t_0 \in CC^A(E)$, then $s_0 \approx_{S_\infty(E)\cup gr(A)} t_0$ by Lemma~\ref{lem:cctheory_assoc}. Since $S_{\infty}^\succ(E)$ is convergent modulo $A$ by Lemma~\ref{lem:canonical2}, $s_0$ and $t_0$ have the same normal form w.r.t.\;$S_\infty^\succ(E)/A$.

Conversely, if $s_0$ and $t_0$ have the same normal form w.r.t.\;$S_\infty^\succ(E)/A$, then we have  $s_0 \approx_{S_\infty(E)\cup gr(A)} t_0$, and thus $s_0\approx t_0 \in CC^A(E)$  by Lemma~\ref{lem:cctheory_assoc}.
\end{proof}

\begin{cor}\label{cor:wordproblem_semigroups}Given a finite set of ground equations $E\subseteq T(\Sigma, C_0) \times T(\Sigma, C_0)$, if $S_\infty(E)$ is finite,  then we can decide for any $s_0, t_0 \in T(\Sigma, C_0)$ whether $s_0 \approx_E^A t_0$ holds or not.
\end{cor}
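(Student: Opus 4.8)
The plan is to reduce the word problem $\approx_E^A$ to a computation of normal forms, exactly mirroring the group case in Corollary~\ref{cor:wordproblem}, but now with the simpler rewrite system $S_\infty^\succ(E)/A$ in place of $(S_\infty^\succ(E)\cup gr(R(G)))/A$. First I would invoke Birkhoff's theorem to identify $CC^A(E)$ with $\approx_E^A$, so that for any $s_0,t_0\in T(\Sigma,C_0)$ the question whether $s_0\approx_E^A t_0$ becomes the membership question whether $s_0\approx t_0\in CC^A(E)$. Then Lemma~\ref{lem:ccequiv_semigroups} reduces this membership question to checking whether $s_0$ and $t_0$ have the same normal form with respect to $S_\infty^\succ(E)/A$.

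The second step is to argue that this normal-form check is effectively computable under the hypothesis that $S_\infty(E)$ is finite. Since $S_\infty(E)$ is finite, $S_\infty^\succ(E)$ is a finite ground rewrite system, and by Lemma~\ref{lem:canonical2} the relation $S_\infty^\succ(E)/A$ is convergent modulo $A$; in particular it is terminating, so every term has a normal form, and confluent modulo $A$, so that normal form is unique up to $\approx_A$. Given the associatively flattened forms of $s_0$ and $t_0$, one repeatedly applies rewrite steps by $S_\infty^\succ(E)/A$: each step amounts to locating an $A$-matching redex, which on associatively flat ground terms is decidable since $A$-matching reduces to (substring) matching on the string encodings of the arguments of associative symbols, and since only finitely many rules are available and termination holds, this process halts at normal forms $\hat{s}_0$ and $\hat{t}_0$. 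Finally, one tests $\hat{s}_0\xleftrightarrow{*}_A\hat{t}_0$, which is decidable because each $A$-equivalence class is represented by a single associatively flat term, so the test is just equality of the flat forms.

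There is little genuine obstacle here; the corollary is essentially a packaging of Lemma~\ref{lem:ccequiv_semigroups} together with finiteness of $S_\infty(E)$. The only point requiring care is the effectiveness of the normal-form computation, namely that both the individual rewrite steps (via $A$-matching) and the termination of the rewriting are algorithmically usable. Termination is supplied by the convergence part of Lemma~\ref{lem:canonical2}, and $A$-matching is decidable on the associatively flat ground terms manipulated throughout, as noted in the preliminaries. Thus the conclusion follows: one decides whether $s_0\approx_E^A t_0$ by computing and comparing the normal forms of $s_0$ and $t_0$ with respect to $S_\infty^\succ(E)/A$.
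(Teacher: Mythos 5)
Your proposal is correct and follows essentially the same route as the paper: the paper leaves this corollary without an explicit proof, treating it (like Corollary~\ref{cor:wordproblem} in the group case) as an immediate consequence of Birkhoff's theorem and Lemma~\ref{lem:ccequiv_semigroups}, which is precisely your first step. Your additional discussion of effectiveness (finiteness of $S_\infty^\succ(E)$, termination and confluence modulo $A$ from Lemma~\ref{lem:canonical2}, and decidability of $A$-matching on flat ground terms) is a sound elaboration of details the paper leaves implicit, not a different argument.
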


In the remainder of this section, the examples are only concerned with a set of associatively flattened, ground fully flat equations $E$ for simplicity. (See Example~\ref{ex:ex1} in Section~\ref{sec:cc} for an example of the associatively flattening and the fully flattening step, respectively.)

\begin{exa}\label{ex:ex7}\normalfont
Let $E=\{f(a,b)\approx a, f(b,c)\approx b, c\approx d\}$ with $f\in \Sigma_A$ and $a\succ b \succ c \succ d$. Each term in $E$ is already associatively flattened. Also, the equation in $E$ is already a ground fully flat equation, so Phase I is not needed. Phase II is not needed either for $CC^A(E)$, so $S(E)$ is simply $E$ itself. The following steps are performed in Phase III using $S_0 = S(E)$:\\

\noindent $1'$: $f(ab)\approx f(ac)$ (DEDUCE by $f(ab)\approx a$ and $f(bc)\approx b$.)

\noindent $2'$: $f(ac) \approx a$ (SIMPLIFY $1'$ by $f(ab)\approx a$. $1'$ is deleted.)

\noindent $3'$: $f(ad) \approx a$ (COLLAPSE $2'$ by $c\approx d$. $2'$ is deleted.)

\noindent $4'$: $f(bd) \approx b$ (COLLAPSE $f(b,c)\approx b$ by $c\approx d$. $f(b,c)\approx b$ is deleted.)\\

Now, we have $S_\infty^\succ(E) = \{f(ab) \rightarrow a, f(bd)\rightarrow b, c\rightarrow d, f(ad) \rightarrow a\}$. By Lemma~\ref{lem:ccequiv_semigroups} and Corollary~\ref{cor:wordproblem_semigroups}, we can decide whether $f(a,b) \approx_E^A f(a,c)$. Since $f(ab) \rightarrow_{S_\infty^\succ(E)/A} a$ and $f(ac) \rightarrow_{S_\infty^\succ(E)/A} f(ad)\rightarrow_{S_\infty^\succ(E)/A} a$, we see that $f(a,b) \approx f(a,c) \in CC^A(E)$ and $f(a,b) \approx_E^A f(a,c)$.
\end{exa}

The following example shows that $S_\infty(E)$ can be infinite. In this case, one cannot apply Corollary~\ref{cor:wordproblem_semigroups}.

\begin{exa}\label{ex:nonterminating_ac}\normalfont (Adapted from~\cite{Kapur1985}) Let $E=\{f(a,b, a)\approx f(b,a,b)\}$ with $f\in \Sigma_A$. Each term in $E$ is already associatively flattened. Also, each equation in $E$ is already a ground fully flat equation, so Phase I is not needed. Phase II is not needed either for $CC^A(E)$, so $S(E)$ is simply $E$ itself. The following steps are performed in Phase III using $S_0 = S(E)$ with $a\succ b$:\\

\noindent $1'$: $f(abbab)\approx f(babba)$ (DEDUCE by $f(aba)\approx f(bab)$ and itself.)

\noindent $2'$: $f(abbbab) \approx f(babbaa)$ (DEDUCE by $1'$ and $f(aba)\approx f(bab)$.)

$\cdots$
\\
Phase III does not terminate and we have the infinite $S_\infty^\succ(E) = \{f(aba) \rightarrow f(bab)\} \cup \{f(ab^nab)\rightarrow f(babba^{n-1})\,|\,n\geq 2\}$. Using the similar argument by Kapur and Narendran~\cite{Kapur1985}, we see that  Phase III does not terminate. (Here, the word problem for $E$ w.r.t.\;$A$ is decidable~\cite{Kapur1985}.)
\end{exa}

\begin{rem} In Example~\ref{ex:nonterminating_ac}, if we introduced a new constant $c_1$ to stand for $f(a,b)$ with $c_1 \succ b\succ a$, then we have $E' = \{f(a, b) \approx c_1, f(c_1, a) \approx f(b, c_1)\}$ from $E$. In this case, Phase III using $S_0 = E'$ terminates with the finite $S_\infty^\succ(E')=\{f(ab) \to c_1, f(c_1a) \to f(bc_1), f(bc_1b) \to f(c_1c_1), f(c_1c_1b) \to f(ac_1c_1)\}$ (cf. Section 6 in~\cite{Kapur1985}). This is beyond the scope of this paper because we do not introduce a new constant for an (already) ground flat equation in $E$ for $CC^A(E)$.
\end{rem}

\subsection{Congruence closure modulo monoids}
In this subsection, we denote by $\Sigma_M$ the set $\{f, 1\}$, where $f\in \Sigma_A$ is the interpreted symbol for the monoid axioms $M:=A\cup \{f(x,1)\approx x, f(1,x)\approx x\}$ and $1 \in C_0$ is the unit in $M$. The convergent rewrite system for monoids, denoted $R(M)$, on associatively flat terms is simply given as follows: (i) $f(x,1) \rightarrow x$ and (ii) $f(1,x)\rightarrow x$, where $f\in \Sigma_M$. In what follows, we assume
that the multiple associative symbols are allowed (i.e.,\;$|\Sigma_A| \geq 1$), but only one of them is used for the monoid axioms. 
Phases I and II for constructing congruence modulo monoids are slightly different from Phases I and II in Section~\ref{sec:cc}, which do not need to take the inverse axioms into account. In Phase I, the rewrite relation $R(G)/A$ in step 2 
in Phases I in Section~\ref{sec:cc} is simply replaced by the rewrite relation $R(M)/A$. The output of Phase I is denoted by $E'$, where all equations in $E'$ are constant, $D$-flat, or $A$-flat equations. The purpose of Phase II is to add certain ground instantiations of the unit axioms.  Phase II is now described as follows. Here, $f\in \Sigma_M$, and 1 is the unit in $M$.\\

\noindent {\bf Phase II}: Given $C$ and $E'$ obtained from $E$ by Phase I:
\begin{enumerate}
\item Set $S(E):= E' \cup U(C)$ and return $S(E)$, where $U(C):=\{f(c,1)\approx c\,|\,c\in C\} \cup \{f(1,c)\approx c\,|\,c\in C\}$.
\end{enumerate}

The output of Phase II is $S(E)$. Note that no new constant is added to $C$ in Phase II. Using $S(E)$ obtained from Phase II, we may apply the same Phase III as in Section~\ref{sec:cc} using $S_0 = S(E)$. Now, the following results are adapted from Lemma~\ref{lem:cctheory}, Theorem~\ref{thm:ccequiv} and Corollary~\ref{cor:wordproblem}, respectively.

\begin{lem}\label{lem:cctheory_monoids} 
If $s_0, t_0 \in T(\Sigma, C_0)$, then $s_0 \approx t_0 \in CC^M(E)$ iff $s_0\approx_{S_\infty(E) \cup gr(M)\cup gr(A)} t_0$.
\end{lem}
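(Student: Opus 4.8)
The plan is to mirror the proof of Lemma~\ref{lem:cctheory} essentially verbatim, replacing the group axioms $G$ by the monoid axioms $M$ and the rewrite system $gr(R(G))$ by $gr(M)$ (note that $gr(M)$ already contains $gr(A)$ together with the ground instances of the unit axioms, so the explicit $\cup\,gr(A)$ is harmless). The chain of equivalences I would establish is: first, by Birkhoff's theorem $CC^M(E)$ coincides with $\approx_E^M$; second, a conservative-extension step reduces $s_0 \approx_E^M t_0$ to $s_0 \approx_{S(E)}^M t_0$ for all $s_0,t_0 \in T(\Sigma, C_0)$; third, since $S(E)$ and $gr(M)$ are ground, any equational proof between ground terms uses only ground instances of $M$, so $s_0 \approx_{S(E)}^M t_0$ holds iff $s_0 \approx_{S(E)\cup gr(M)\cup gr(A)} t_0$; and finally, a completion-invariance step lifts $S(E)$ to $S_\infty(E)$.

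The second step requires the monoid analogue of Lemma~\ref{lem:conservative}, namely that $S(E)$ w.r.t.\ $M$ is a conservative extension of $E$ w.r.t.\ $M$, which I would obtain by rerunning the proof of Lemma~\ref{lem:conservative} with $M$ in place of $G$. Inspecting that proof, the only place the group axioms are invoked is the repeated, purely formal observation that a model of the theory and its expansion (resp.\ reduct) both satisfy the theory; this holds equally for $M$. For the \emph{if}-direction I would expand a model $\mathcal{A}$ of $E\cup M$ by interpreting each new constant $c_k\in C_1$ as $s_k^{\mathcal{A}}$ for a subterm $s_k \in Sub(E)\setminus C_0$ with $c_k \approx_{S(E)} s_k$, and verify that the resulting algebra $\mathcal{B}$ satisfies every $C$-constant, $A$-flat and $D$-flat equation of $S(E)$; crucially, in the monoid case Phase II adds only $U(C)$, whose members are ground instances of the unit axioms and hence are satisfied by any model of $M$. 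For the \emph{only-if}-direction I would take the reduct of a model of $S(E)\cup M$ to $\Sigma \cup C_0$ and check that it satisfies $E$, using that it satisfies $M$ and hence also the equations produced after steps 1 and 2 of Phase I, where $R(M)/A$-normalization now plays the role of $R(G)/A$.

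For the last step I would use Lemma~\ref{lem:equiv}, which states that a single inference step $S_i \vdash S_{i+1}$ preserves the congruence $\xleftrightarrow{*}_{S\cup gr(A)}$ on $T(\Sigma, C_0)$. Since $gr(M)$ is fixed along the entire derivation, adding it to both sides preserves the equality of the generated congruences, so $\xleftrightarrow{*}_{S_i\cup gr(M)\cup gr(A)}$ and $\xleftrightarrow{*}_{S_{i+1}\cup gr(M)\cup gr(A)}$ coincide; an induction on the length of the derivation, together with the persistence definition $S_\infty(E)=\bigcup_i\bigcap_{j\geq i}S_j$, then yields $\xleftrightarrow{*}_{S(E)\cup gr(M)\cup gr(A)} = \xleftrightarrow{*}_{S_\infty(E)\cup gr(M)\cup gr(A)}$. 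The main (and essentially only) obstacle is the conservative-extension step, but since the argument of Lemma~\ref{lem:conservative} never exploits the inverse axioms and $U(C)\subseteq gr(M)$, its adaptation is routine; everything else is a direct transcription of the proof of Lemma~\ref{lem:cctheory}.
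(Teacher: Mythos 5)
Your proposal is correct and follows exactly the paper's own route: Birkhoff's theorem, a monoid adaptation of Lemma~\ref{lem:conservative} to pass from $E$ to $S(E)$, the observation that ground provability w.r.t.\ $M$ reduces to provability from ground instances, and Lemma~\ref{lem:equiv} to replace $S(E)$ by $S_\infty(E)$. The only differences are that you spell out details the paper leaves implicit (why the conservative-extension argument survives dropping the inverse axioms, the induction along the derivation, and the harmlessness of writing $gr(M)$ versus $gr(R(M))\cup gr(A)$), which is entirely consistent with the paper's proof.
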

\begin{proof}
By Birkhoff's theorem, $CC^M(E)$ coincides with $\approx_E^M$. We have $s_0\approx_E^M t_0$ iff $s_0\approx_{S(E)}^M t_0$ for all terms $s_0, t_0 \in T(\Sigma, C_0)$ using a simple adaption of Lemma~\ref{lem:conservative}. Also, $s_0\approx_{S(E)}^M t_0$ iff $s_0\approx_{S(E) \cup gr(R(M))\cup gr(A)} t_0$ for all terms $s_0, t_0 \in T(\Sigma, C_0)$. 

Now, it remains to show that  $s_0\approx_{S(E) \cup gr(R(M))\cup gr(A)} t_0$ iff $s_0\approx_{S_\infty(E) \cup gr(R(M))\cup gr(A)} t_0$ for all terms $s_0, t_0 \in T(\Sigma, C_0)$, where $S_0=S(E)$. By Lemma~\ref{lem:equiv}, if $S_i \vdash S_{i+1}$, then both $\xleftrightarrow{*}_{S_i\cup gr(A)}$ and $\xleftrightarrow{*}_{S_{i+1}\cup gr(A)}$ on $T(\Sigma, C_0)$ are the same, and thus the conclusion follows.
\end{proof}

\begin{lem}\label{lem:ccequiv_monoids} If $s_0, t_0 \in T(\Sigma, C_0)$, then $s_0 \approx t_0 \in CC^M(E)$ iff $s_0$ and $t_0$ have the same normal form w.r.t.\;$(S_\infty^\succ(E) \cup gr(R(M))/A$.
\end{lem}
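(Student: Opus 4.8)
The plan is to follow the proof of Theorem~\ref{thm:ccequiv} line for line, replacing $G$ by $M$ and $R(G)$ by $R(M)$ throughout. Two ingredients are needed: the monoid analogue of Lemma~\ref{lem:cctheory}, namely Lemma~\ref{lem:cctheory_monoids}, which I already have, and a convergence-modulo-$A$ statement for the combined system $S_\infty^\succ(E)\cup gr(R(M))$ playing the role of Theorem~\ref{thm:canonical}. Granting these, the argument is immediate. If $s_0\approx t_0\in CC^M(E)$, then $s_0\approx_{S_\infty(E)\cup gr(R(M))\cup gr(A)} t_0$ by Lemma~\ref{lem:cctheory_monoids} (observing that, in the presence of $gr(A)$, $gr(M)$ and $gr(R(M))$ induce the same ground congruence, since $R(M)$ is just the unit axioms oriented and the associativity part lies in $gr(A)$); convergence modulo $A$ then forces $s_0$ and $t_0$ to have the same $(S_\infty^\succ(E)\cup gr(R(M)))/A$-normal form. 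Conversely, equal normal forms give $s_0\approx_{S_\infty(E)\cup gr(R(M))\cup gr(A)} t_0$, whence $s_0\approx t_0\in CC^M(E)$, again by Lemma~\ref{lem:cctheory_monoids}.

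The real work is therefore to establish the missing convergence result, i.e.\ the monoid version of Theorem~\ref{thm:canonical}. For termination I would avoid rerunning the complexity-measure argument altogether: the two rules of $R(M)$ are among the rules of $R(G)$, so $gr(R(M))\subseteq gr(R(G))$, and hence $\rightarrow_{(S_\infty^\succ(E)\cup gr(R(M)))/A}$ is a subrelation of $\rightarrow_{(S_\infty^\succ(E)\cup gr(R(G)))/A}$ (for the monoid's own $S_\infty^\succ(E)$, which still consists only of $D$-flat, $A$-flat, and constant rules oriented by $\succ$). Since the latter relation is terminating by the argument of Lemma~\ref{lem:terminating}, and every subrelation of a terminating relation is terminating, $(S_\infty^\succ(E)\cup gr(R(M)))/A$ terminates as well. (Equivalently, one may just note that $gr(R(M))$-steps strictly reduce the $s$-measure, so the simpler lexicographic measure $(S,W)$ mentioned in the proof of Lemma~\ref{lem:canonical2} already suffices.)

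For confluence I would reduce to local confluence and then, exactly as in Theorem~\ref{thm:canonical}, dispose of non-overlaps and substitution (variable) overlaps by the standard arguments and by Lemma~\ref{lem:reduced}, leaving only the proper critical overlaps between $gr(R(M))$ and $S_\infty^\succ(E)$. Here the monoid case is strictly easier than the group case: because $R(M)$ contains no rule headed by the inverse symbol, none of the delicate $i(f(u_1,\ldots,u_n))$ overlaps of Theorem~\ref{thm:canonical} arises, and none of the overlaps involving $i(i(a))\to a$ or $f(c,i(c))\to 1$ survives either. The only surviving overlaps are between a ground instance of $f(u_1,\ldots,u_m,1)\to f(u_1,\ldots,u_m)$ (resp.\ $f(1,u_1,\ldots,u_m)\to f(u_1,\ldots,u_m)$) in $gr(R(M))$ and a rule $f(u_m,1)\to u_m$ (resp.\ $f(1,u_1)\to u_1$) from $U(C)\subseteq S_\infty^\succ(E)$; these yield the trivially joinable critical pair $f(u_1,\ldots,u_m)\approx f(u_1,\ldots,u_m)$, precisely as in the corresponding case of Theorem~\ref{thm:canonical}.

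I expect no genuine obstacle, only one point of bookkeeping. The step deserving care is confirming that deleting every $i$-headed rule really does collapse the critical-pair catalogue of Theorem~\ref{thm:canonical} to the unit-element case, including verifying that an overlap of a unit rule with a general $D$-flat or $A$-flat rule of $S_\infty^\succ(E)$ contributes nothing new, because the trailing (or leading) $1$ is simply absorbed, reducing such a configuration to the $U(C)$ situation already treated. Once this classification is checked, local confluence follows, and with termination from the second paragraph the combined system is convergent modulo $A$; the \emph{iff} of the lemma is then the purely formal translation described in the first paragraph.
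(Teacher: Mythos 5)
Your proposal is correct and follows essentially the same route as the paper: the paper's proof also combines Lemma~\ref{lem:cctheory_monoids} with the assertion that $S_\infty^\succ(E)\cup gr(R(M))$ is convergent modulo $A$, obtained as ``a simple adaptation from Theorem~\ref{thm:canonical} without taking the symbol for the inverse axioms into account.'' The only difference is that you actually spell out that adaptation (with a tidy subrelation shortcut for termination, using $gr(R(M))\subseteq gr(R(G))$, and the observation that the surviving proper overlaps are just the unit-rule cases), whereas the paper leaves these details implicit.
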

\begin{proof}
Assume that $s_0, t_0 \in T(\Sigma, C_0)$. If $s_0\approx t_0 \in CC^M(E)$, then $s_0 \approx_{S_\infty(E) \cup gr(R(M))\cup gr(A)} t_0$ by Lemma~\ref{lem:cctheory_monoids}. Using a simple adaptation from Theorem~\ref{thm:canonical} without taking the symbol for the inverse axioms into account, we may infer that $S_{\infty}^\succ(E) \cup gr(R(M))$ is convergent modulo $A$, and thus $s_0$ and $t_0$ have the same normal form w.r.t.\;$(S_\infty^\succ(E) \cup gr(R(M))/A$.

Conversely, if $s_0$ and $t_0$ have the same normal form w.r.t.\;$(S_\infty^\succ(E) \cup gr(R(M))/A$, then we have  $s_0 \approx_{S_\infty(E) \cup gr(R(M)) \cup gr(A)} t_0$, and thus $s_0\approx t_0 \in CC^M(E)$  by Lemma~\ref{lem:cctheory_monoids}.
\end{proof}

\begin{cor}\label{cor:wordproblem_monoids} Given a finite set of ground equations $E\subseteq T(\Sigma, C_0) \times T(\Sigma, C_0)$, if $S_\infty(E)$ is finite,  then we can decide for any $s_0, t_0 \in T(\Sigma, C_0)$ whether $s_0 \approx_E^M t_0$ holds or not.
\end{cor}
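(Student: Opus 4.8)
The plan is to follow the exact same route as the proof of Corollary~\ref{cor:wordproblem} for the group case, since the statement is the monoid analogue obtained by replacing $G$ with $M$ and $R(G)$ with $R(M)$ throughout. First I would invoke Birkhoff's theorem to identify $CC^M(E)$ with the word-problem relation $\approx_E^M$; this is already recorded in the preliminaries for the general case and specializes immediately to $M$. Next I would appeal to Lemma~\ref{lem:ccequiv_monoids}, which states that for $s_0, t_0 \in T(\Sigma, C_0)$ we have $s_0 \approx t_0 \in CC^M(E)$ if and only if $s_0$ and $t_0$ share a normal form with respect to $(S_\infty^\succ(E) \cup gr(R(M)))/A$. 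Composing these two facts reduces the question ``does $s_0 \approx_E^M t_0$ hold?'' to the question ``do $s_0$ and $t_0$ have the same $(S_\infty^\succ(E) \cup gr(R(M)))/A$-normal form?''

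The remaining task is purely an effectiveness argument: I must explain why this normal-form comparison is \emph{computable} under the hypothesis that $S_\infty(E)$ is finite. The rewrite relation $(S_\infty^\succ(E) \cup gr(R(M)))/A$ is convergent modulo $A$ by the adaptation of Theorem~\ref{thm:canonical} used inside Lemma~\ref{lem:ccequiv_monoids}, so each ground term has a unique $R/A$-normal form (up to associative flattening). When $S_\infty(E)$ is finite, $S_\infty^\succ(E)$ is a finite rewrite system, and $R(M)$ is finite outright (just the two rules $f(x,1)\rightarrow x$ and $f(1,x)\rightarrow x$). As noted in the remarks following Theorem~\ref{thm:ccequiv}, one need not materialize the infinite set $gr(R(M))$: rewriting steps with respect to $gr(R(M))/A$ can be performed using the finite $R(M)$ together with $A$-matching on ground terms in $T(\Sigma, C)$, which is decidable because each associatively flat ground term represents its $A$-equivalence class. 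Termination of $(S_\infty^\succ(E) \cup gr(R(M)))/A$ (the monoid specialization of Lemma~\ref{lem:terminating}) guarantees that the normalization of any given $s_0$ and $t_0$ halts.

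Thus the decision procedure is: associatively flatten $s_0$ and $t_0$, normalize both with the finite system $S_\infty^\succ(E) \cup R(M)$ using $A$-matching, and test whether the resulting normal forms are equal modulo $A$; they are equal exactly when $s_0 \approx_E^M t_0$. I do not expect any serious obstacle here, as the argument is a verbatim transfer of the group case. The only point requiring a word of care---and the closest thing to a ``main step''---is justifying that finiteness of $S_\infty(E)$ (rather than of $gr(R(M))$, which is infinite) already suffices for computability; this is settled by the observation that the finite presentation $R(M)$ plus decidable ground $A$-matching drives all $gr(R(M))/A$-rewrite steps, so no infinite object is ever enumerated.
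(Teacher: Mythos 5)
Your proposal is correct and follows essentially the same route as the paper, which states this corollary without a separate proof as a direct adaptation of Corollary~\ref{cor:wordproblem}: Birkhoff's theorem identifies $CC^M(E)$ with $\approx_E^M$, and Lemma~\ref{lem:ccequiv_monoids} reduces the question to comparing normal forms w.r.t.\ $(S_\infty^\succ(E) \cup gr(R(M)))/A$. Your additional effectiveness discussion (finiteness of $S_\infty^\succ(E)$, using the finite $R(M)$ with ground $A$-matching instead of $gr(R(M))$, and termination ensuring normalization halts) makes explicit what the paper leaves implicit in its remarks following Theorem~\ref{thm:ccequiv}, but it is the same argument.
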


\begin{exa}[Continued from Example~\ref{ex:ex7}]\label{ex:ex8}\normalfont Consider $E=\{f(a,b)\approx a, f(b,c)\approx b, c\approx d\}$ with $f\in \Sigma_A$ and $a\succ b \succ c \succ d$ in Example~\ref{ex:ex7} again. Again, Phase I is not needed. For Phase II, since $C = C_0 = \{a,b,c,d,1\}$, we have $U(C) = \{f(1,1)\approx 1, f(a,1)\approx a, f(1,a)\approx a, f(b,1)\approx b, f(1,b)\approx b, f(c,1)\approx c, f(1,c)\approx c, f(d,1)\approx d, f(1,d)\approx d\}$. Now, $S(E) = E \cup U(C)$ and the same equations as in Example~\ref{ex:ex7} are generated by Phase III after some contraction steps. Therefore, $S_\infty^\succ(E) = \{f(ab) \rightarrow a, f(bd)\rightarrow b, c\rightarrow d, f(ad) \rightarrow a, f(11)\rightarrow 1, f(a1)\rightarrow a, f(1a)\rightarrow a, f(b1)\rightarrow b, f(1b)\rightarrow b, f(c1)\rightarrow c, f(1c)\rightarrow c, f(d1)\rightarrow d, f(1d)\rightarrow d\}$. By Lemma~\ref{lem:ccequiv_monoids} and Corollary~\ref{cor:wordproblem_monoids}, we can decide whether $f(a,c,1,d) \approx_E^A a$. Since $f(ac1d) \rightarrow_{(S_\infty^\succ(E)\cup gr(R(M)))/A} f(acd)\rightarrow_{(S_\infty^\succ(E)\cup gr(R(M)))/A} f(add) \rightarrow_{(S_\infty^\succ(E)\cup gr(R(M)))/A} f(ad) \rightarrow_{(S_\infty^\succ(E)\cup gr(R(M)))/A} a$, we see that $f(a,c,1,d) \approx a \in CC^M(E)$ and $f(a,c,1,d) \approx_E^M a$.
\end{exa}

\subsection{Congruence closure modulo the multiple sets of group axioms} Section~\ref{sec:cc} was concerned with congruence closure modulo a single set of group axioms. This subsection adapts Section~\ref{sec:cc} for constructing congruence closure of a finite set of ground equations with interpreted symbols for the multiple disjoint sets of group axioms.\footnote{Recall that a single set of group axioms $G$ has the following form in this paper: $G:=A\cup \{f(x,1)\approx x, f(1,x)\approx x, f(x, i(x))\approx 1, f(i(x),x)\approx 1\}$. Alternatively, there are single axioms for groups~\cite{McCune1993}. In this paper, by a set of group axioms, we mean a set of group axioms having the form of $G$ shown above.}

First, we consider the union of two sets of group axioms $G_1$ and $G_2$ with\;$\Sigma_{G_1} \cap \Sigma_{G_2} = \emptyset$, denoted by $G_1\uplus G_2$, where $\Sigma_{G_1}=\{f_{G_1},i_{G_1}, 1_{G_1}\}$ and $\Sigma_{G_2}=\{f_{G_2},i_{G_2}, 1_{G_2}\}$. This means that both $G_1$ and $G_2$ are the sets of group axioms but they do not share any function symbols. Let $R(G_1)$ and $R(G_2)$ be the convergent rewrite systems for $G_1$ and $G_2$ on associatively flat terms, respectively (see Section~\ref{sec:preliminaries}). Now, the union of two rewrite systems $R_1$ and $R_2$ for $G_1\uplus G_2$, denoted by $R(G_1) \uplus R(G_2)$, on associatively flat terms has the following rewrite rules:

\begin{longtable}{lll}
$i_{G_1}(1_{G_1})\rightarrow 1_{G_1}$  & $f_{G_1}(x,1_{G_1}) \rightarrow x$ & $f_{G_1}(1_{G_1},x)\rightarrow x$ \;\;\;\;\; $i_{G_1}(i_{G_1}(x))\rightarrow x$\\ 
$f_{G_1}(x, i_{G_1}(x))\rightarrow 1_{G_1}$ & $f_{G_1}(i_{G_1}(x),x)\rightarrow 1_{G_1}$ & $i_{G_1}(f_{G_1}(x,y))\rightarrow f_{G_1}(i_{G_1}(y), i_{G_1}(x))$\\\\
$i_{G_2}(1_{G_2})\rightarrow 1_{G_2}$  & $f_{G_2}(x,1_{G_2}) \rightarrow x$ & $f_{G_2}(1_{G_2},x)\rightarrow x$ \;\;\;\;\; $i_{G_2}(i_{G_2}(x))\rightarrow x$\\ 
$f_{G_2}(x, i_{G_2}(x))\rightarrow 1_{G_2}$ & $f_{G_2}(i_{G_2}(x),x)\rightarrow 1_{G_2}$ & $i_{G_2}(f_{G_2}(x,y))\rightarrow f_{G_2}(i_{G_2}(y), i_{G_2}(x))$
\end{longtable}

It is known that termination is not a \emph{modular property}~\cite{Ohlebusch1994} of rewrite systems, while confluence is a modular property of rewrite systems~\cite{Ohlebusch1994,Toyama1987}. It is also known that the disjoint union of two disjoint rewrite systems $R_1$ and $R_2$ is terminating if neither $R_1$ nor $R_2$ contains \emph{duplicating rules}~\cite{Ohlebusch1994,Rusinowitch1987}. (A rewrite rule $l\rightarrow r$ is \emph{duplicating}~\cite{Ohlebusch1994} if there exists some variable such that it has more occurrences in $r$ than $l$.) In the above example, the union of $R(G_1)$ and $R(G_2)$ is confluent and terminating on associatively flat terms because $\Sigma_{G_1} \cap \Sigma_{G_2} = \emptyset$ and neither $R(G_1)$ nor $R(G_2)$ contains duplicating rules. In the remainder of this subsection, we denote by $\biguplus_{i=1}^nG_i:=G_1\uplus\cdots \uplus G_n$  the union of the sets of group axioms $G_1,\ldots,G_n$ such that $\Sigma_{G_i} \cap \Sigma_{G_j} = \emptyset$ for every $i\neq j$. We denote by $\biguplus_{i=1}^n R(G_i):= R(G_1)\uplus \cdots \uplus R(G_n)$ the union of $R(G_1),\ldots, R(G_n)$, where $R(G_k)$, $1\leq k \leq n$, is the convergent rewrite system for $G_k$ on associatively flat terms. Now, the next lemma follows from the above observation.

\begin{lem}\label{lem:modular} $\biguplus_{i=1}^n R(G_i)$ is convergent modulo $A$.
\end{lem}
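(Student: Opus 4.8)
The plan is to argue by induction on $n$, using the two facts highlighted just before the statement: confluence is a modular property of disjoint unions, whereas termination is modular for disjoint unions of \emph{non-duplicating} systems. The base case $n=1$ is exactly Lemma~\ref{lem:canonical1}, which guarantees that each $R(G_i)$ is convergent modulo $A$. For the inductive step I would set $R := \biguplus_{i=1}^{n-1} R(G_i)$ and write $\biguplus_{i=1}^{n} R(G_i) = R \uplus R(G_n)$; by the induction hypothesis $R$ is convergent modulo $A$, and $R(G_n)$ is convergent modulo $A$ by Lemma~\ref{lem:canonical1}. Since $\Sigma_{G_n}$ is disjoint from $\bigcup_{i<n}\Sigma_{G_i}$, this is a genuine disjoint union of two systems, so it suffices to re-run for this outer decomposition the two-system argument already sketched in the text for $R(G_1)\uplus R(G_2)$.

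For termination I would first observe that no rule of any $R(G_i)$ is duplicating: in every rule each variable has at most one occurrence on each side, the only nontrivial case being $i_{G_i}(f_{G_i}(x,y)) \rightarrow f_{G_i}(i_{G_i}(y), i_{G_i}(x))$, where $x$ and $y$ each occur exactly once on both sides. Consequently $R$, being a union of non-duplicating rule sets over disjoint signatures, is itself non-duplicating, and so is $R(G_n)$. As each of $R$ and $R(G_n)$ terminates modulo $A$ and neither is duplicating, the disjoint-union termination result of~\cite{Rusinowitch1987,Ohlebusch1994} yields termination of $R \uplus R(G_n)$ modulo $A$. Confluence of $R \uplus R(G_n)$ modulo $A$ then follows from the modularity of confluence for disjoint unions~\cite{Toyama1987,Ohlebusch1994}, since both summands are confluent modulo $A$. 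Hence $\biguplus_{i=1}^{n} R(G_i)$ is convergent modulo $A$, completing the induction.

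The hard part will be justifying that these two modularity theorems, classically stated for plain term rewriting over disjoint signatures, remain applicable in the present rewriting-\emph{modulo}-$A$ setting. The key observation making this go through is that $A$ decomposes disjointly along the signatures: the associativity axioms split as $A = \biguplus_i A_i$ (together with the associativity of any associative symbols lying outside the $G_i$), where $A_i$ involves only $f_{G_i}$. Because distinct $G_i$ share no function symbols and associative flattening is applied per associative symbol, every $A_i$-equivalence step and every $R(G_i)/A$-rewrite step acts strictly inside a maximal subterm headed by symbols of $\Sigma_{G_i}$. Thus on associatively flat ground terms the combined relation $\rightarrow_{(\biguplus_i R(G_i))/A}$ is a layered, rank-respecting disjoint combination in exactly the sense required by the modularity proofs, and the classical arguments transfer verbatim. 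I would therefore devote the bulk of a rigorous write-up to spelling out this localization of rewrite and $A$-steps to single-signature strata, after which the appeal to the cited modularity results becomes routine.
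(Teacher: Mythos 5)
Your proposal is correct and takes essentially the same route as the paper, which derives the lemma from precisely the two facts you invoke: modularity of confluence for signature-disjoint unions and modularity of termination for signature-disjoint unions of non-duplicating systems, combined with the observation that $\Sigma_{G_i} \cap \Sigma_{G_j} = \emptyset$ for $i \neq j$ and that no rule of any $R(G_i)$ is duplicating. Your explicit induction on $n$ and your discussion of transferring the classical modularity theorems to the rewriting-modulo-$A$ setting merely spell out details that the paper leaves implicit (it states the lemma as following directly from this observation, without a separate proof).
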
 

Phases I and II for constructing congruence closure modulo $\biguplus_{i=1}^n G_i$ are adapted from Phases I and II in Section~\ref{sec:cc} by taking different interpreted symbols for the multiple disjoint sets of group axioms into account. The rewrite relation $R(G)/A$ in step 2 
in Phases I in Section~\ref{sec:cc} is simply replaced by the rewrite relation $\biguplus_{i=1}^n R(G_i)/A$. The output of Phase I is denoted by $E'$, where all equations in $E'$ are constant, $D$-flat, or $A$-flat equations.

Next, Phase II is described as follows for $\biguplus_{i=1}^nG_i$ and their interpreted symbols $f_{G_l}, i_{G_l}, 1_{G_l}$ for all $1 \leq l \leq n$.\\

\noindent {\bf Phase II}: Given $C$ and $E'$ obtained from $E$ by Phase I, Copy $C$ to $C'$ and $E'$ to $E^{''}$.

\noindent Then, for each set of group axioms $G_l$, $1\leq l \leq n$, repeat the following procedure:
\begin{itemize}
\item For each constant $c_k\in C'$ and $c_k \neq 1_{G_l}$, repeat the following step:

\noindent If neither $i_{G_l}(c_k)\approx c_i$ nor $i_{G_l}(c_j)\approx c_k$ appears in $E^{''}$ for some $c_i,c_j\in C'$, then $ E':=E'\cup \{i_{G_l}(c_k)\approx c_m\}$ and $C:=C\cup \{c_m\}$ for a new constant $c_m$ taken from $W$.
\end{itemize}

\noindent Next, for each set of group axioms $G_l$, $1\leq l \leq n$, repeat 
 the following steps:
\begin{itemize}
\item Set $I_{G_l}(E'):=\{i_{G_l}(1_{G_l})\approx 1_{G_l}\} \cup \{i_{G_l}(c_n)\approx c_m\,|\,i_{G_l}(c_m)\approx c_n\in E'\} \cup \{f_{G_l}(c_m, c_n)\approx 1_{G_l}\,|\,i_{G_l}(c_m)\approx c_n\in E'\}\cup \{f_{G_l}(c_n, c_m)\approx 1_{G_l}\,|\,i_{G_l}(c_m)\approx c_n\in E'\}$.
\item Set $U_{G_l}(C):=\{f_{G_l}(c,1_{G_l})\approx c\,|\,c\in C\} \cup \{f_{G_l}(1_{G_l},c)\approx c\,|\,c\in C\}$.\\
\end{itemize}

\noindent Finally, set $S(E):= E' \cup I_{G_1}(E')\cup\cdots\cup I_{G_n}(E')\cup U_{G_1}(C)\cup \cdots \cup U_{G_n}(C)$ and return $S(E)$. Here, $S(E)$ is the output of Phase II. Now, using $S(E)$ obtained from Phase II, one may apply the same Phase III as in Section~\ref{sec:cc} using $S_0 = S(E)$. The following lemma is a direct extension of Lemma~\ref{lem:conservative}.

\begin{lem}\label{lem:cons_mul} Viewed as a set of equations, $S(E)$ w.r.t.\;$\widehat{G}:=\biguplus_{i=1}^n G_i$ is a conservative extension of $E$ w.r.t.\;$\widehat{G}$, i.e., $s_0\approx_E^{\widehat{G}} t_0$ iff $s_0\approx_{S(E)}^{\widehat{G}} t_0$ for all terms $s_0, t_0 \in T(\Sigma, C_0)$.
\end{lem}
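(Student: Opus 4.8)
The plan is to mirror, essentially verbatim, the two-direction model-theoretic argument in the proof of Lemma~\ref{lem:conservative}, replacing the single set of group axioms $G$ by $\widehat{G}=\biguplus_{i=1}^n G_i$ throughout. The structural observation that makes this transcription possible is that $\widehat{G}$, like $G$, consists of \emph{nonground} equations whose symbols all lie in $\Sigma$ (the operation symbols $f_{G_l}$, $i_{G_l}$ and the units $1_{G_l}$); these operations are untouched both by the expansion of an algebra to the new constants in $C_1$ and by the reduct that forgets them. Hence any algebra satisfying $\widehat{G}$ continues to satisfy $\widehat{G}$ under either construction, exactly as in the single-group case.

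For the \emph{if}-direction I would start from an algebra $\mathcal{A}$ over $\Sigma\cup C_0$ with $\mathcal{A}\models E\cup\widehat{G}$ and $s_0^{\mathcal{A}}\neq t_0^{\mathcal{A}}$, and expand it to $\mathcal{B}$ over $\Sigma\cup C$ by interpreting each new constant $c_k\in C_1$ as the $\mathcal{A}$-value of an original-signature term it names, just as in the construction of Lemma~\ref{lem:conservative}. Here one notes that the Phase~II constants introduced by $i_{G_l}(c_k)\approx c_m$ simply name the term $i_{G_l}(\cdots)$ built over $C_0$, so such a term always exists. It then remains to verify $\mathcal{B}\models S(E)$: the $C$-constant, $A$-flat, and $D$-flat equations inherited from $E'$ are checked exactly as before, and every equation of $I_{G_l}(E')$ or $U_{G_l}(C)$ is either a ground instance of a unit or inverse axiom of $G_l\subseteq\widehat{G}$, or an immediate semantic consequence of $\widehat{G}$ together with the defining equation $i_{G_l}(c_m)\approx c_n$; hence all of them hold in $\mathcal{B}$.

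For the \emph{only}-\emph{if}-direction I would take an algebra $\mathcal{B}$ over $\Sigma\cup C$ with $\mathcal{B}\models S(E)\cup\widehat{G}$ separating $s_0,t_0$ and pass to its reduct $\mathcal{A}$ to $\Sigma\cup C_0$. Since $\mathcal{A}\models\widehat{G}$, it suffices to show that $\mathcal{A}$ satisfies the set $E_2$ of equations produced after steps~1 and~2 of Phase~I; here step~2 normalizes modulo $\biguplus_{i=1}^n R(G_i)/A$, which is sound for $\widehat{G}$ by Lemma~\ref{lem:modular}. Each flat equation of $E_2$ then has a flattened counterpart in $S(E)$, and the case analysis on the three flat forms carries over unchanged.

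The only genuinely new ingredient beyond bookkeeping---and the step I expect to need the most care---is confirming that the per-group additions do not interfere. Because $\Sigma_{G_i}\cap\Sigma_{G_j}=\emptyset$ for $i\neq j$, the verification that $\mathcal{B}$ (resp.\ $\mathcal{A}$) satisfies the $G_l$-specific equations is \emph{independent} across $l$ and reduces verbatim to the single-group argument: the disjointness guarantees that a single constant $c_k$ may carry separate inverses $i_{G_l}(c_k)$ for different $l$ without any clash, each named by its own fresh constant and interpreted accordingly, and that no equation of $I_{G_l}(E')\cup U_{G_l}(C)$ constrains the interpreted symbols of another $G_{l'}$. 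Once this independence is spelled out, the remainder is a direct transcription of Lemma~\ref{lem:conservative}.
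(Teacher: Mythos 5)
Your proposal is correct and takes essentially the same approach as the paper: the paper offers no separate argument for this lemma, stating only that it is a direct extension of Lemma~\ref{lem:conservative}, and your proof is precisely that extension carried out---the same expansion/reduct model-theoretic argument, with the equations of $I_{G_l}(E')$ and $U_{G_l}(C)$ verified as consequences of $G_l$ together with the defining equations $i_{G_l}(c_m)\approx c_n$. Your added remarks (that Phase~II constants name terms such as $i_{G_l}(\cdots)$ over the original signature, and that disjointness of the $\Sigma_{G_l}$ keeps the per-group verifications independent) are sound and, if anything, make the transcription more careful than the paper's implicit one.
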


Similarly, the following results are adapted from Lemma~\ref{lem:cctheory}, Theorem~\ref{thm:ccequiv} and Corollary~\ref{cor:wordproblem}, respectively.

\begin{lem}\label{lem:cctheory_mul} Let $\widehat{G}:=\biguplus_{i=1}^n G_i$ and $R(\widehat{G}):=\biguplus_{i=1}^n R(G_i)$.
If $s_0, t_0 \in T(\Sigma, C_0)$, then $s_0 \approx t_0 \in CC^{\widehat{G}}(E)$ iff $s_0\approx_{S_\infty(E) \cup gr(R(\widehat{G}))\cup gr(A)} t_0$.
\end{lem}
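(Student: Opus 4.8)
The plan is to transcribe the proof of Lemma~\ref{lem:cctheory} almost verbatim, replacing each single-group ingredient by its multi-group analogue. First I would apply Birkhoff's theorem to identify $CC^{\widehat{G}}(E)$ with $\approx_E^{\widehat{G}}$, the restriction to $T(\Sigma, C_0)$ of the equational theory induced by $E \cup \widehat{G}$. Then I would invoke the conservative-extension result Lemma~\ref{lem:cons_mul} to reduce $s_0 \approx_E^{\widehat{G}} t_0$ to $s_0 \approx_{S(E)}^{\widehat{G}} t_0$ for all $s_0, t_0 \in T(\Sigma, C_0)$, exactly as Lemma~\ref{lem:conservative} is used in the single-group case.

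The next step is to replace the nonground axioms $\widehat{G}$ by the ground rewrite system $gr(R(\widehat{G}))$ together with $gr(A)$. Here the crucial input is Lemma~\ref{lem:modular}, which guarantees that $R(\widehat{G}) = \biguplus_{i=1}^n R(G_i)$ is convergent modulo $A$ and hence is a complete presentation of $\widehat{G}$. Since $R(\widehat{G})/A$ and $\widehat{G}$ induce the same equational theory, their ground instances generate the same congruence on $T(\Sigma, C)$; therefore $s_0 \approx_{S(E)}^{\widehat{G}} t_0$ iff $s_0 \approx_{S(E) \cup gr(R(\widehat{G})) \cup gr(A)} t_0$ for all $s_0, t_0 \in T(\Sigma, C_0)$, where $gr(R(\widehat{G}))$ is viewed as a set of equations.

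Finally, it remains to pass from $S_0 = S(E)$ to the persisting set $S_\infty(E)$. Because Phase III and the inference system $\mathcal{I}$ are literally the same as in the single-group setting, Lemma~\ref{lem:equiv} applies unchanged: for each step $S_i \vdash S_{i+1}$ the congruences $\xleftrightarrow{*}_{S_i \cup gr(A)}$ and $\xleftrightarrow{*}_{S_{i+1} \cup gr(A)}$ coincide on $T(\Sigma, C)$, and hence on the subset $T(\Sigma, C_0)$. Adjoining the fixed set $gr(R(\widehat{G}))$ on both sides preserves this equality, so the derivation leaves $\xleftrightarrow{*}_{S_i \cup gr(R(\widehat{G})) \cup gr(A)}$ invariant, yielding $s_0 \approx_{S(E) \cup gr(R(\widehat{G})) \cup gr(A)} t_0$ iff $s_0 \approx_{S_\infty(E) \cup gr(R(\widehat{G})) \cup gr(A)} t_0$. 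Chaining this with the previous equivalences proves the lemma.

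The step requiring the most care is the replacement of $\widehat{G}$ by $gr(R(\widehat{G})) \cup gr(A)$: although formally analogous to the single-group case, it rests entirely on the modularity argument behind Lemma~\ref{lem:modular}, which is exactly where the disjointness hypothesis $\Sigma_{G_i} \cap \Sigma_{G_j} = \emptyset$ and the absence of duplicating rules in each $R(G_i)$ are needed to guarantee convergence of the union. Everything else is a direct transcription of the single-group argument, so I do not expect further obstacles.
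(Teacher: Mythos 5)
Your proposal is correct and follows essentially the same three-step argument as the paper's proof: Birkhoff's theorem combined with Lemma~\ref{lem:cons_mul}, then replacement of $\widehat{G}$ by $gr(R(\widehat{G}))\cup gr(A)$, then Lemma~\ref{lem:equiv} to pass from $S(E)$ to $S_\infty(E)$. The only cosmetic difference is that you invoke Lemma~\ref{lem:modular} to justify the middle equivalence, whereas the paper asserts that step directly and reserves the modularity/convergence argument for Lemma~\ref{lem:ccequiv_mul}, where convergence of $S_\infty^\succ(E)\cup gr(R(\widehat{G}))$ modulo $A$ is actually needed.
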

\begin{proof}
First, $CC^{\widehat{G}}(E)$ is the same as $\approx_E^{\widehat{G}}$ by Birkhoff's theorem. Also, by Lemma~\ref{lem:cons_mul}, $s_0\approx_E^{\widehat{G}} t_0$ iff $s_0\approx_{S(E)}^{\widehat{G}}t_0$ for all terms $s_0, t_0 \in T(\Sigma, C_0)$. Then, $s_0\approx_{S(E)}^{\widehat{G}}t_0$ iff $s_0\approx_{S(E) \cup gr(R(\widehat{G}))\cup gr(A)} t_0$ for all terms $s_0, t_0 \in T(\Sigma, C_0)$. 

Let $S_0=S(E)$. Applying Lemma~\ref{lem:equiv} yields that if $S_i \vdash S_{i+1}$, then $\xleftrightarrow{*}_{S_i\cup gr(A)}$ and $\xleftrightarrow{*}_{S_{i+1}\cup gr(A)}$ on $T(\Sigma, C_0)$ are the same, and thus $\xleftrightarrow{*}_{S(E)\cup gr(A)}$ and $\xleftrightarrow{*}_{S_\infty(E)\cup gr(A)}$ coincide. Thus, $s_0\approx_{S(E) \cup gr(R(\widehat{G}))\cup gr(A)} t_0$ iff $s_0\approx_{S_\infty(E) \cup gr(R(\widehat{G}))\cup gr(A)} t_0$ for all terms $s_0, t_0 \in T(\Sigma, C_0)$. 
\end{proof}

\begin{lem}\label{lem:ccequiv_mul} Let $\widehat{G}:=\biguplus_{i=1}^n G_i$ and $R(\widehat{G}):=\biguplus_{i=1}^n R(G_i)$. If $s_0, t_0 \in T(\Sigma, C_0)$, then $s_0 \approx t_0 \in CC^{\widehat{G}}(E)$ iff $s_0$ and $t_0$ have the same normal form w.r.t.\;$(S_\infty^\succ(E) \cup gr(R({\widehat{G}})))/A$.
\end{lem}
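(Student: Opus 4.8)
The plan is to mirror the proof of Theorem~\ref{thm:ccequiv} for the single-group case, replacing $G$ by $\widehat{G}$ and $R(G)$ by $R(\widehat{G})$ throughout. By Lemma~\ref{lem:cctheory_mul}, for $s_0,t_0\in T(\Sigma,C_0)$ we already have that $s_0\approx t_0\in CC^{\widehat{G}}(E)$ holds iff $s_0\approx_{S_\infty(E)\cup gr(R(\widehat{G}))\cup gr(A)}t_0$. Hence it suffices to establish the analogue of Theorem~\ref{thm:canonical}, namely that $S_\infty^\succ(E)\cup gr(R(\widehat{G}))$ is convergent modulo $A$ on $T(\Sigma,C)$; once this is in hand, being equivalent in $S_\infty(E)\cup gr(R(\widehat{G}))\cup gr(A)$ coincides with having the same $(S_\infty^\succ(E)\cup gr(R(\widehat{G})))/A$-normal form, and both directions of the stated equivalence follow immediately.

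For termination I would adapt Lemma~\ref{lem:terminating}. The only rules in $R(\widehat{G})$ that are not size-reducing are the inverse-distributes-over-product rules $i_{G_l}(f_{G_l}(x,y))\rightarrow f_{G_l}(i_{G_l}(y),i_{G_l}(x))$, one for each $G_l$. I would therefore redefine the $d$-measure of Definition~\ref{defn:measure} to collect, over \emph{every} occurrence of \emph{any} inverse symbol $i_{G_l}$ in a term, the pair consisting of the depth of the subterm headed by that $i_{G_l}$ and the arity of the symbol occurring right below it (taken to be $0$ unless that symbol is the matching $f_{G_l}$). With this measure the triple $(D,S,W)$ and its well-founded ordering $\succ_{CM}$ work verbatim: an application of some $i_{G_l}(f_{G_l}(x,y))\rightarrow f_{G_l}(i_{G_l}(y),i_{G_l}(x))$ strictly decreases $D$, every other $gr(R(\widehat{G}))$-step decreases $S$, each $D$-flat rule of $S_\infty^\succ(E)$ decreases $S$, and each $A$-flat or constant rule decreases $W$. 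Termination of $S_\infty^\succ(E)/A$ alone is Lemma~\ref{lem:canonical2}, and $R(\widehat{G})/A$ is terminating by Lemma~\ref{lem:modular}, so only the interaction needs the combined measure.

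Local confluence is the crux. As in Theorem~\ref{thm:canonical}, overlaps internal to $gr(R(\widehat{G}))$ are joinable because $R(\widehat{G})/A$ is convergent (Lemma~\ref{lem:modular}), and overlaps internal to $S_\infty^\succ(E)$ are joinable by Lemma~\ref{lem:canonical2}; so it suffices to treat the mixed critical overlaps between a rule of $gr(R(\widehat{G}))$ and a rule of $S_\infty^\succ(E)$. Here disjointness does the heavy lifting: since $\Sigma_{G_i}\cap\Sigma_{G_j}=\emptyset$ for $i\neq j$, any such overlap modulo $A$ can involve the interpreted symbols of at most one group $G_l$, because no left-hand side from $G_i$ can overlap (modulo $A$) with a context built from $G_j$ when $i\neq j$. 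Consequently every mixed overlap reduces to one of the cases already analysed in the proof of Theorem~\ref{thm:canonical}, now read with $f_{G_l},i_{G_l},1_{G_l}$ in place of $f,i,1$. The Phase~II construction for $\widehat{G}$ adds, for each $G_l$, exactly the equations $i_{G_l}(c_n)\approx c_m$, $f_{G_l}(c_m,c_n)\approx 1_{G_l}$, $f_{G_l}(c_n,c_m)\approx 1_{G_l}$ together with the units in $U_{G_l}(C)$ that the single-group argument uses to close each critical pair, so joinability follows exactly as before (again invoking Lemma~\ref{lem:canonical2} to push the residual $S_\infty^\succ(E)/A$-pairs together).

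The main obstacle---the only genuinely new point beyond transcribing Theorem~\ref{thm:canonical}---is to argue cleanly that disjointness of the signatures leaves no cross-group critical pairs to check, so that every modulo-$A$ overlap is confined to a single $G_l$. Granting this, convergence modulo $A$ of $S_\infty^\succ(E)\cup gr(R(\widehat{G}))$ follows from termination together with local confluence, and the claimed equivalence is then immediate from Lemma~\ref{lem:cctheory_mul}.
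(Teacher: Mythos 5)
Your proposal is correct and follows essentially the same route as the paper's own proof: reduce the statement to Lemma~\ref{lem:cctheory_mul}, then establish that $S_\infty^\succ(E)\cup gr(R(\widehat{G}))$ is convergent modulo $A$ by combining Lemma~\ref{lem:modular} with an adaptation of Theorem~\ref{thm:canonical} (which the paper invokes per-group, noting each $S_\infty^\succ(E)\cup gr(R(G_l))$ is convergent modulo $A$, and then cites as ``a simple adaptation'' for the union). Your write-up merely unfolds what the paper leaves implicit --- the extended $d$-measure for termination and the disjointness argument confining mixed overlaps to a single $G_l$ --- so it is, if anything, a more explicit rendering of the same argument.
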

\begin{proof}
Assume that $s_0, t_0 \in T(\Sigma, C_0)$. If $s_0 \approx t_0 \in CC^{\widehat{G}}(E)$, then $s_0\approx_{S_\infty(E) \cup gr(R(\widehat{G}))\cup gr(A)} t_0$ for all terms $s_0, t_0 \in T(\Sigma, C_0)$ by Lemma~\ref{lem:cctheory_mul}. By  Theorem~\ref{thm:canonical}, we may infer that each $S_{\infty}^\succ(E) \cup gr(R(G_l))$, $1\leq l \leq n$, is convergent modulo $A$. Now by Lemma~\ref{lem:modular} and a simple adaptation of Theorem~\ref{thm:canonical}, we may also infer that $S_{\infty}^\succ(E) \cup gr(R(\widehat{G}))$ is convergent modulo $A$, and thus $s_0$ and $t_0$ have the same normal form w.r.t.\;$(S_\infty^\succ(E) \cup gr(R(\widehat{G}))/A$.

Conversely, if $s_0$ and $t_0$ have the same normal form w.r.t.\;$(S_\infty^\succ(E) \cup gr(R(\widehat{G}))/A$, then we have  $s_0 \approx_{S_\infty(E) \cup gr(R(\widehat{G})) \cup gr(A)} t_0$, and thus $s_0\approx t_0 \in CC^{\widehat{G}}(E)$  by Lemma~\ref{lem:cctheory_mul}.
\end{proof}

\begin{cor}\label{cor:wordproblem_mul} Let $\widehat{G}:=\biguplus_{i=1}^n G_i$.
Given a finite set of ground equations $E\subseteq T(\Sigma, C_0) \times T(\Sigma, C_0)$, if $S_\infty(E)$ is finite,  then we can decide for any $s_0, t_0 \in T(\Sigma, C_0)$ whether $s_0 \approx_E^{\widehat{G}}t_0$ holds or not.
\end{cor}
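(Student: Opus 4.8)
The plan is to follow the proof of Corollary~\ref{cor:wordproblem} almost verbatim, with $G$ replaced by $\widehat{G}:=\biguplus_{i=1}^n G_i$ and $R(G)$ by $R(\widehat{G}):=\biguplus_{i=1}^n R(G_i)$, and to explain why the hypothesis that $S_\infty(E)$ is finite renders the resulting normal-form test effective. First I would invoke Birkhoff's theorem to identify $CC^{\widehat{G}}(E)$ with $\approx_E^{\widehat{G}}$, so that deciding $s_0 \approx_E^{\widehat{G}} t_0$ amounts to deciding membership $s_0 \approx t_0 \in CC^{\widehat{G}}(E)$.

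Next I would apply Lemma~\ref{lem:ccequiv_mul}, which reduces this membership question to a comparison of normal forms: $s_0 \approx t_0 \in CC^{\widehat{G}}(E)$ iff $s_0$ and $t_0$ have the same normal form w.r.t.\;$(S_\infty^\succ(E) \cup gr(R(\widehat{G})))/A$. The validity of this reduction rests on the fact that $S_\infty^\succ(E) \cup gr(R(\widehat{G}))$ is convergent modulo $A$, which Lemma~\ref{lem:ccequiv_mul} already establishes (via the modularity result Lemma~\ref{lem:modular} together with the adaptation of Theorem~\ref{thm:canonical}). Convergence modulo $A$ guarantees that every term possesses a unique $A$-normal form, and that the two normal forms coincide exactly when $s_0 \approx t_0 \in CC^{\widehat{G}}(E)$.

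It then remains to argue that these normal forms are actually computable, which is where the finiteness hypothesis enters. If $S_\infty(E)$ is finite, then $S_\infty^\succ(E)$ is a finite rewrite system; moreover $R(\widehat{G})$, being the disjoint union of the $n$ finite systems $R(G_i)$, is finite as well. Although $gr(R(\widehat{G}))$ itself is infinite, one need not enumerate it: exactly as in the remark following Theorem~\ref{thm:ccequiv}, rewriting w.r.t.\;$gr(R(\widehat{G}))/A$ can be carried out with the finite system $R(\widehat{G})$ by using $A$-matching on terms in $T(\Sigma, C)$. Hence normalization uses only the two finite rewrite systems $S_\infty^\succ(E)$ and $R(\widehat{G})$; termination (part of convergence modulo $A$) ensures the normalization halts, and since $A$-equivalence of associatively flat ground terms reduces to syntactic equality of their flat forms, the final comparison of normal forms is decidable.

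The only point requiring genuine care is precisely this effectiveness in the presence of the infinite system $gr(R(\widehat{G}))$; it is resolved by rewriting instead with the finite $R(\widehat{G})$ via $A$-matching, so no obstacle arises beyond what was already handled in the single-group case of Corollary~\ref{cor:wordproblem}.
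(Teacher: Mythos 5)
Your proposal is correct and follows essentially the same route as the paper: Birkhoff's theorem to identify $\approx_E^{\widehat{G}}$ with $CC^{\widehat{G}}(E)$, then Lemma~\ref{lem:ccequiv_mul} to reduce the question to comparing normal forms w.r.t.\;$(S_\infty^\succ(E) \cup gr(R(\widehat{G})))/A$, exactly as the paper's proof of Corollary~\ref{cor:wordproblem} does in the single-group case. Your additional discussion of effectiveness --- using the finite $R(\widehat{G})$ with $A$-matching in place of the infinite $gr(R(\widehat{G}))$ --- is a welcome elaboration of what the paper only states in a remark, but it does not change the approach.
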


\begin{exa}\label{ex:ex9}\normalfont Let $G_1$ and $G_2$ be the sets of group axioms with $\Sigma_{G_1} = \{f, i_f, 1_f\}$ and $\Sigma_{G_2} = \{g, i_g, 1_g\}$. Consider $E=\{f(a,b)\approx a, f(b,a)\approx b, g(a,b) \approx g(b,a), h(a)\approx b\}$ with $a\succ b\succ 1_f \succ 1_g$ and $h \in \Sigma$. Each equation in $E$ is already a ground fully flat equation, so Phase I is not needed. (Note that equation $g(a,b) \approx g(b,a)$ is an $A$-flat equation.) After the first step of Phase II for $G_1$ and $G_2$, we have:

\indent $E'= \{f(a,b)\approx a, f(b,a)\approx b, g(a,b) \approx g(b,a), h(a)\approx b\} \cup \{i_f(a) \approx c_1, i_f(b) \approx c_2, i_f(1_g) \approx c_3, i_g(a) \approx c_4, i_g(b) \approx c_5, i_g(1_f) \approx c_6\}$ and $C= C_0\cup C_1$, where $C_0 =\{a,b,1_f, 1_g\}$ and $C_1=\{c_1,c_2,c_3,c_4, c_5, c_6\}$.

\indent After the remaining steps of Phase II for $G_1$ and $G_2$, we have $I_{G_1}(E') =\{i_f(1_f)\approx 1_f, i_f(c_1)\approx a, i_f(c_2)\approx b, i_f(c_3)\approx 1_g, f(c_1,a)\approx 1_f, f(a,c_1)\approx 1_f,f(c_2, b)\approx 1_f, f(b, c_2)\approx 1_f, f(c_3, 1_g) \approx 1_f, f(1_g, c_3)\approx 1_f\}$ and $I_{G_2}(E')=\{i_g(1_g)\approx 1_g, i_g(c_4)\approx a, i_g(c_5)\approx b, i_g(c_6)\approx 1_f, g(c_4, a)\approx 1_g, g(a, c_4)\approx 1_g, g(c_5,b)\approx 1_g,  g(b, c_5)\approx 1_g, g(c_6, 1_f) \approx 1_g, g(1_f, c_6)\approx 1_g\}$.

Also, we have $U_{G_1}(C) = \{f(c, 1_f)\approx c\,|\,c\in C\}\cup \{f(1_f, c)\approx c\,|\,c\in C\}$ and $U_{G_2}(C) = \{g(c, 1_g)\approx c\,|\,c\in C\}\cup \{g(1_g, c)\approx c\,|\,c\in C\}$.

\indent After Phase II, $S(E):= E' \cup I_{G_1}(E')\cup I_{G_2}(E')\cup U_{G_1}(C)\cup U_{G_2}(C)$. Now, the following steps are performed in Phase III using $S_0=S(E)$.\\

\noindent $1$: $f(ac_2)\approx f(a1_f)$ (DEDUCE by $f(ab)\approx a$ and $f(bc_2)\approx 1_f$.)

\noindent $2$: $f(ac_2) \approx a$ (SIMPLIFY $1$ by $f(a1_f)\approx a$. $1$ is deleted.)

\noindent $3$: $f(bc_2)\approx f(ba)$ (DEDUCE $2$ by $f(ba)\approx b$.)

\noindent $4$: $f(ba) \approx 1_f$ (SIMPLIFY $3$ by $f(bc_2)\approx 1_f$. $3$ is deleted.)

\noindent $5$: $b \approx 1_f$ (SIMPLIFY $4$ by $f(ba)\approx b$. $4$ is deleted.)

\noindent $6$: $f(1_fc_2) \approx 1_f$ (COLLAPSE $f(bc_2)\approx 1_f$ by $b\approx 1_f$. $f(bc_2)\approx 1_f$ is deleted.)

\noindent $7$: $c_2 \approx 1_f$ (SIMPLIFY $6$ by $f(1_fc_2)\approx c_2$. $6$ is deleted.)

\noindent $8$: $f(1_fa) \approx b$ (COLLAPSE $f(ba) \approx b$ by $b\approx 1_f$. $f(ba) \approx b$ is deleted.)

\noindent $9$: $f(1_fa) \approx 1_f$ (COMPOSE $8$ by $b\approx 1_f$. $8$ is deleted.)

\noindent $10$: $a \approx 1_f$ (SIMPLIFY $9$ by $f(1_fa)\approx a$. $9$ is deleted.)

\noindent $11$: $f(1_fc_1) \approx 1_f$ (COLLAPSE $f(ac_1)\approx 1_f$ by $a\approx 1_f$. $f(ac_1)\approx 1_f$ is deleted.)

\noindent $12$: $c_1 \approx 1_f$ (SIMPLIFY $11$ by $f(1_fc_1)\approx c_1$. $11$ is deleted.)

\noindent $13$: $i_g(1_f)\approx c_4$ (COLLAPSE $i_g(a)\approx c_4$ by $a\approx 1_f$. $i_g(a)\approx c_4$ is deleted.)

\noindent $14$: $i_g(1_f)\approx c_5$ (COLLAPSE $i_g(b)\approx c_5$ by $b\approx 1_f$. $i_g(b)\approx c_5$ is deleted.)

\noindent $15$: $c_4\approx c_5$ (COLLAPSE $13$ by $14$. $13$ is deleted.)

\noindent $16$: $c_5\approx c_6$ (COLLAPSE $14$ by $i_g(1_f) \approx c_6$. $14$ is deleted.)
\\
$\cdots$\\

After several steps using the contraction rules, we have $S_\infty^\succ(E) = \{a\rightarrow 1_f, b\rightarrow 1_f, c_1\rightarrow 1_f, c_2 \rightarrow 1_f,  h(1_f)\rightarrow 1_f, i_f(1_f) \rightarrow 1_f, c_4\rightarrow c_6, c_5\rightarrow c_6, i_f(1_g)\rightarrow c_3, i_f(c_3) \rightarrow 1_g, f(c_3, 1_g) \rightarrow 1_f, f(1_g, c_3)\rightarrow 1_f, i_g(1_f) \rightarrow c_6, i_g(c_6)\rightarrow 1_f, i_g(1_g)\rightarrow 1_g, g(c_6, 1_f)\rightarrow 1_g, g(1_f, c_6)\rightarrow 1_g\} \cup \bar{U}^\succ(C)$, where $\bar{U}^\succ(C)=\{f(1_f,1_f) \rightarrow 1_f, f(1_f, c_3)\rightarrow c_3, f(c_3, 1_f)\rightarrow c_3, f(1_f, c_6)\rightarrow c_6, f(c_6, 1_f)\rightarrow c_6,  f(1_f, 1_g)\rightarrow 1_g, f(1_g, 1_f)\rightarrow 1_g\} \cup \{g(1_g,1_g) \rightarrow 1_g, g(1_g, c_3)\rightarrow c_3, g(c_3, 1_g)\\\rightarrow c_3, g(1_g, c_6)\rightarrow c_6, g(c_6, 1_g)\rightarrow c_6,  g(1_g, 1_f)\rightarrow 1_f, g(1_f, 1_g)\rightarrow 1_f\}$. By Lemma~\ref{lem:ccequiv_mul} and Corollary~\ref{cor:wordproblem_mul}, we can decide whether $g(f(a,a),h(b))\approx_E^{G_1 \uplus G_2} g(b,f(a,1_f))$ holds or not. Let $MG := S_\infty^\succ(E) \cup gr(R(G_1) \uplus R(G_2))$. Then, we see that $g(f(a,a),h(b))\xrightarrow{*}_{MG/A} g(f(1_f,1_f),h(1_f))\xrightarrow{*}_{MG/A}g(1_f,1_f)$. Also, $g(b,f(a,1_f))\rightarrow_{MG/A} g(b,a)\xrightarrow{*}_{MG/A} g(1_f, 1_f)$. Now, we conclude that $g(f(a,a),h(b))\approx_E^{G_1 \uplus G_2} g(b,f(a,1_f))$ holds and $g(f(a,a),h(b))\approx g(b,f(a,1_f))\in CC^{G_1 \uplus G_2}(E)$ by Lemma~\ref{lem:ccequiv_mul}.
\end{exa}

\section{Conclusion}
This paper has presented a new framework for computing congruence closure of a finite set of ground equations $E$ over uninterpreted symbols and interpreted symbols for a set of group axioms $G$, which extends a rewrite-based congruence closure procedure in~\cite{Kapur1997} by taking $G$ into account. In the proposed framework, ground equations in $E$ are flattened into ground flat equations and certain ground flat equations entailed by $G$ are added for a completion procedure. The proposed completion procedure is a ground completion procedure using strings, which adapts a completion procedure for string rewriting systems presenting groups~\cite{Holt2005, Sims1994}. The procedure yields a ground convergent rewrite system for congruence closure modulo $G$ for $E$. It is simple and generic in the sense that it can also be used for constructing congruence closure w.r.t.\;the semigroup, monoid, and the multiple disjoint sets of group axioms for $E$ by a slight change of Phase II, but without changing the completion procedure itself. Furthermore, it neither uses extension rules nor complex orderings.

In~\cite{Rubio1995},~\cite{Bachmair1991}, and~\cite{Peterson1981}, the authors pointed out that it is natural to view the associativity axiom (with a binary function symbol) as a ``structural axiom'' rather than viewing it as a ``simplifier''. For example, from the given following simple rules $f(f(x,y),z) \rightarrow f(x,f(y,z))$, $f(a,b)\rightarrow b$, $f(a,f(x,b))\rightarrow f(x,b)$, the standard completion may generate infinite rules $f(a,f(x,f(y,b))) \rightarrow f(x,f(y,b)), f(a,f(x,f(y,f(z,b))))\rightarrow f(x,f(y,f(z,b))), \cdots$~\cite{Bachmair1991,Rubio1995}. Accordingly, this paper is concerned with a completion procedure for the rewrite relation $\rightarrow_{R/A}$ on associatively flat ground terms instead of the (plain) rewrite relation $\rightarrow_{R\cup A}$. Also, the proposed completion procedure does not use infinitary $A$-unification explicitly and simply use string matching for ground flat terms during the proposed (ground) completion procedure; note that, for example, $f(a,x)$ and $f(x,a)$ have an infinite set of (independent) $A$-unifiers $a$, $f(a,a)$, $f(a,a,a)$, etc. (See~\cite{Baader2001} and~\cite{Rubio1995}.)

Meanwhile, completion modulo $A$ was considered in~\cite{Rubio1995}, which is not well suited for constructing congruence closure modulo $A$. Besides, the approach uses the complex $A$-compatible reduction ordering and $A$-unification.

Completion modulo a set of axioms $Ax$ for left-linear rules was considered in~\cite{Huet1980}. This approach is also not suited for constructing congruence closure modulo $G$ because $R(G)$ has a non-left linear rule, for example, $f(x, i(x)) \rightarrow 1$.

In~\cite{Kandri85}, the authors discussed an approach based on computing the Gr\"obner basis of polynomial ideals for solving word problems for finitely presented commutative algebraic structures including commutative rings with unity. This approach is also not suited for constructing congruence closure modulo $G$ because the commutativity is not assumed in $G$.

The new results and the main contributions of this paper are the construction of a rewriting-based congruence closure of a finite set of ground equations modulo the semigroup, monoid, group, and the multiple disjoint sets of group axioms, respectively. No nonground equation is added for the above constructions during the proposed completion procedure, and they are all represented by convergent ground rewrite systems modulo associativity. In particular, if the proposed completion procedure terminates for a finite set of ground equations w.r.t.\;the semigroup, monoid, group, or the multiple disjoint sets of group axioms, then it yields a decision procedure for the word problem for a finite set of ground equations w.r.t.\;the corresponding axioms. Recall that the word problem for finitely presented semigroups, monoids, and groups are undecidable in general~\cite{Book1993,Holt2005}.

The key insights of this paper are as follows. First, the arguments of an associatively flat ground term headed by each associative symbol $f$ are represented by the corresponding string, which is possible by flattening nonflat ground terms occurring in the arguments of $f$ by introducing new constants. Also, certain ground flat equations entailed by the group axioms (or the monoid axioms) are added for the proposed \emph{ground} completion procedure so that one needs neither an $A$-compatible reduction ordering nor $A$-unification. This also allows the well-known completion procedure for string rewriting systems and its results (e.g.\;a monoid presentation of a finite group) to be adapted for the proposed framework of constructing congruence closure of ground equations modulo the semigroup, monoid, group, and the multiple disjoint sets of group axioms, respectively. It is left as a future work to translate the (known) sufficient termination criteria for the completion of string rewriting systems (or \emph{Thue systems})~\cite{KapurN85,Book1993,Polina1998} into the termination criteria of constructing congruence closure of ground equations modulo the semigroup, monoid, and the group axioms, respectively.

Since groups are one of the fundamental objects in mathematics, physics, and computer science, developing applications (e.g.\;integration of SMT solvers with the proposed approaches) using the proposed approaches remains as future research opportunities. 

Finally, some of the potential extension of the results discussed in this paper can be the construction of a rewriting-based congruence closure of a finite set of ground equations modulo the following: (i) 
 \emph{involutive}~\cite{Easdown1993} semigroups, monoids, and groups, (ii) \emph{idempotent}~\cite{Siekmann1982} semigroups, monoids, and groups, and (iii) \emph{cancellative}~\cite{Narendran1989} semigroups, monoids, and groups, and their combinations. Construction of a rewriting-based congruence closure of a finite set of ground equations modulo two sets of group axioms with \emph{homomorphisms}~\cite{Book1993} can be another potential extension of the results discussed in this paper.
\section*{Acknowledgment}
The author would like to thank Christina Kirk, Aart Middeldorp, Fabian Mitterwallner, Johannes Niederhauser, Teppei Saito, and Jonas Sch{\"{o}}pf (in alphabetical order) for their valuable comments and feedback of this work during the regular TRS reading group meeting of the Computational Logic group at the University of Innsbruck, Austria. 

The author also would like to thank the reviewers for their insightful suggestions and comments, which helped to improve this paper.
\bibliographystyle{alphaurl}
\bibliography{main}
\end{document}